\documentclass[11p,reqno]{amsart}

\topmargin=0cm\textheight=22cm\textwidth=15cm
\oddsidemargin=0.5cm\evensidemargin=0.5cm
\setlength{\marginparwidth}{2cm}
\usepackage[T1]{fontenc}
\usepackage{graphicx}
\usepackage{amssymb,amsthm,amsmath,mathrsfs,bm,braket,marginnote}
\usepackage{enumerate}
\usepackage{appendix}
\usepackage[colorlinks=true, pdfstartview=FitV, linkcolor=blue, citecolor=blue, urlcolor=blue]{hyperref}
\usepackage{multirow}

%Graphics
\usepackage{pgf}
\usepackage{pgfplots}
\usepackage{tikz}
\usetikzlibrary{arrows,calc}
\usepackage{verbatim}
\usetikzlibrary{decorations.pathreplacing,decorations.pathmorphing}
\usepackage[numbers,sort&compress]{natbib}
\usepackage{dsfont}

\numberwithin{equation}{section}
\linespread{1.2}
\newtheorem{theorem}{Theorem}[section]
\newtheorem{lemma}[theorem]{Lemma}
\newtheorem{define}[theorem]{Definition}
\newtheorem{remark}[theorem]{Remark}
\newtheorem{corollary}[theorem]{Corollary}

\newtheorem{proposition}[theorem]{Proposition}

 \reversemarginpar
 
 \newcommand{\Pf}{\text{Pf}}
 \newcommand{\B}{\text{Bures}}
 \newcommand{\C}{\text{Cauchy}}
 \newcommand{\Zzzzz}{Z_{0|0;0|0}^}
 \newcommand{\Zozoz}{Z_{1|0;1|0}^}
 \newcommand{\Zzozo}{Z_{0|1;0|1}^}
 \newcommand{\Zoozz}{Z_{1|1;0|0}^}
  \newcommand{\Zozzz}{Z_{1|0;0|0}^}
  \newcommand{\Zzozz}{Z_{1|0;0|0}^}
 \newcommand{\Zzzoz}{Z_{0|0;1|0}^}
 \newcommand{\Zzzzo}{Z_{0|0;0|1}^}
  \newcommand{\Zzzoo}{Z_{0|0;1|1}^}
 \newcommand{\tN}{\tilde{N}}

 \reversemarginpar

\begin{document}

\title[Fox H-kernel and $\theta$-deformation of Cauchy two-matrix model and Bures ensemble]{Fox H-kernel and $\theta$-deformation of the Cauchy two-matrix model and Bures ensemble} 

\author{Peter J. Forrester}
\address{School of Mathematical and Statistics, ARC Centre of Excellence for Mathematical and Statistical Frontiers, The University of Melbourne, Victoria 3010, Australia}
\email{pjforr@unimelb.edu.au}

\author{Shi-Hao Li}
\address{ School of Mathematical and Statistics, ARC Centre of Excellence for Mathematical and Statistical Frontiers, The University of Melbourne, Victoria 3010, Australia}
\email{shihao.li@unimelb.edu.au}

\subjclass[2010]{66B20, 15A15, 33E20}
\date{}

\dedicatory{}

\keywords{random matrix theory; $\theta$-deformation; Cauchy two-matrix model; Bures ensemble; Fox H-function class}

\begin{abstract}
A $\theta$-deformation of the Laguerre weighted Cauchy two-matrix model, and the Bures ensemble, is introduced. Such a deformation is familiar  from the Muttalib-Borodin ensemble. The $\theta$-deformed Cauchy-Laguerre two-matrix model is a two-component determinantal point process. It is shown that the correlation kernel, and its hard edge scaled limit, can be written as the Fox H-functions, generalising the Meijer G-function class known from the study of the case $\theta= 1$. In the $\theta=1$ case, it is shown Laguerre-Bures ensemble is related to the Laguerre-Cauchy two-matrix model, notwithstanding the Bures ensemble corresponds to a Pfaffian point process. This carries over to the $\theta$-deformed case, allowing explicit expressions involving Fox H-functions for the correlation kernel, and its hard edge scaling limit, to be obtained. 
%The first is the computation of the $\theta$-deformation of Cauchy-Laguerre two-matrix model associated with its hard edge scaling limit.  The form of the correlation function is turned to be a determinantal point process by the use of Christoffel-Darboux kernel of bi-orthogonal polynomials; moreover, these kernels no longer belong to the Meijer G-function class but the Fox H-function class. Next, following [\emph{Commun. Math. Phys.}, 342: 151-187, 2016] we show the partition function, characteristic polynomials as well as correlation functions of $\theta$-deformation Bures ensemble are strongly related to those of $\theta$-deformation Cauchy two-matrix model. It is demonstrated the characteristic polynomials of $\theta$-deformation Bures ensemble are the linear combinations of $\theta$-deformation Cauchy bi-orthogonal polynomials and thus it suggests the kernel of Bures ensemble can be expressed in terms of Cauchy two-matrix model.   The correlation function of $\theta$-deformation Bures ensemble is turned to be a Pfaffian point process and the kernels are expressed in terms of those of $\theta$-deformation Cauchy two-matrix model. The hard edge scaling limit is also in the Fox H-function class.
\end{abstract}
\maketitle

\section{Introduction}
The Cauchy two-matrix model has attracted attention due to its applications in multiple aspects of mathematical physics. It is not only significant in studies of classical integrable systems theory like Degasperis-Procesi peakons \cite{lundmark2005} and Toda lattice of CKP type \cite{li2018}, but important too in random matrix theory \cite{bertola2009,bertola2014}. This model is defined on some subset of the configuration space $\mathbb{R}^N\times\mathbb{R}^N$, as specified by an
eigenvalue probability density function (PDF) of the form
\begin{equation}\label{1}
\frac{\prod_{1\leq j<k\leq N}(x_k-x_j)(y_k-y_j)}{\prod_{j,k=1}^N(x_j+y_k)}\prod_{1\leq j<k\leq N}(x_k-x_j)(y_k-y_j)\prod_{i,j=1}^N\omega_1(x_i)\omega_2(y_j),
\end{equation}
where $\omega_1$ and $\omega_2$ are two proper (i.e.~non-negative) weight functions. An overall normalisation in (\ref{1}) is implicit and not shown
explicitly; this convention will be used throughout this section. The correlations $\rho_{\ell_1,\ell_2}(x_1,\dots,x_{\ell_1};
y_1,\dots,y_{\ell_2})$ for the subset of variables is given by an $(\ell_1 + \ell_2) \times (\ell_1 + \ell_2)$ determinant with the structure
\begin{equation}\label{2}
\rho_{\ell_1,\ell_2}(x_1,\dots,x_{\ell_1};
y_1,\dots,y_{\ell_2}) =
\det \begin{bmatrix} [ \tilde{K}_{00}(x_a, x_b) ]_{a,b=1}^{\ell_1} & 
 [\tilde{K}_{01}(x_a, y_{\beta}) ]_{a =1,\dots\ell_1 \atop \alpha = 1,\dots, \ell_2} \\
 [\tilde{K}_{10}(y_{\alpha}, x_b) ]_{\alpha =1,\dots\ell_2 \atop b = 1,\dots, \ell_1}  &
  [\tilde{K}_{11}(y_{\alpha}, y_\beta) ]_{\alpha, \beta =1}^{\ell_2} \end{bmatrix},
  \end{equation}
  for certain correlation kernels  $\tilde{K}_{00}(x,x'), \tilde{K}_{01}(x,y),  \tilde{K}_{01}(y,x),  \tilde{K}_{01}(y,y')$ dependent of $\omega_1, \omega_2,N$ but independent
  of $\ell_1, \ell_2$. The Cauchy two-matrix model is thus an example of a two-component determinantal point process; see e.g.~\cite[\S 5.9]{Fo10}.

The choice $\omega_1(x) = x^a e^{-x}$,  $\omega_2(y) = y^b e^{-y}$ restricted to $\mathbb R_+\times\mathbb{R}_+$ in (\ref{1}) is said to define the Cauchy-Laguerre 
two-matrix model. The hard edge scaled limit of this model, which is the neighbourhood of the origin on the 
positive half line of the corresponding point process with a scaling
so that eigenvalues have order unity spacing, has been analysed in \cite{bertola2014}.
It was found that the corresponding scaled correlation kernels can be written in terms of particular
Meijer-G functions; see e.g.~\cite{beals2013} for a review of the latter. Not long after, it was found that particular Meijer-G kernels also control the hard edge correlations for many examples of determinantal
point processes specified by the squared singular values of particular random matrix products 
\cite{AIK13, KZ14, Fo14,KKS15,St15,FL16}. A special case, which can be written in terms for the Wright's generalised Bessel function, had in fact been found earlier in the analysis
of the hard edge limit for what are now referred to as Muttalib--Borodin ensembles \cite{muttalib1995,Bor99}; see also the more recent work
\cite{FW15}. This latter class of models are specified by eigenvalue PDFs of the form
\begin{equation}\label{MB}
\prod_{1 \le j < k \le N} (x_k - x_j) (x_k^\theta - x_j^\theta) \prod_{j=1}^N \omega(x_j), \quad {\text{with proper weight function $\omega$}}.
\end{equation}
Note that the case $\theta = 1$ of (\ref{MB}) corresponds to the eigenvalue probability density function for a unitary invariant ensemble of
Hermitian matrices (see e.g.~\cite{PS11})--- the
Muttalib--Borodin ensembles can thus be regarded as a $\theta$-deformation of the latter.

 In \cite{bertola2009} it was conjectured that there exists a relationship between the Cauchy two-matrix and $O(1)$-model \cite{Ko89}  --- also referred
 to as the Bures ensemble \cite{SZ03}  --- for which the PDF is of the form
\begin{align}\label{Bu}
\prod_{1\leq j<k\leq N}\frac{x_k-x_j}{x_k+x_j}(x_k-x_j)\prod_{j=1}^N \omega(x),\quad {\text{with proper weight function $\omega$}}.
\end{align}
Later, one of us and Kieburg  \cite{forrester2016} gave explicit inter-relations for
the average characteristic polynomials, partition functions and correlation functions. 
A priori, this is far from obvious as the ensemble (\ref{Bu}) is an example of a Pfaffian point process (see e.g.~\cite[Ch.~6]{Fo10}), and thus has
correlations of the form
\begin{align}\label{Bu1}
\rho_\ell(x_1,\dots,x_\ell) = {\rm Pf} 
\begin{bmatrix} D(x_j,x_k) & S(x_j,x_k) \\
- S(x_k, x_j) & I(x_j,x_k) \end{bmatrix}_{j,k=1}^\ell ,
\end{align}
with the requirement that $D(x,y) = - D(y,x)$, and $I(x,y) = -I(y,x)$, so that the matrix is anti-symmetric,
whereas 
we know that (\ref{1}) is determinantal.
Using then results for the Cauchy-Laguerre two-matrix model
from  \cite{bertola2014}, the corresponding properties of the  Bures ensemble with Laguerre weight --- such as the correlation functions and
their hard edge scaled limit  --- could be made explicit.

The hard edge scaling limit is but one of a number of well defined scaling limits for (\ref{1}), (\ref{MB}) and (\ref{Bu}). Another is the global scaling limit,
corresponding to a scaling of the eigenvalues
for which in the $N \to \infty$ limit the eigenvalue density is supported on some finite interval $[0,L]$, $L > 0$.
For the Muttalib--Borodin ensemble with Laguerre weight it  is known that after the change of variables $x_l = y_l^{1/\theta}$, the global
density $\rho(y)$ minimises the energy functional \cite{CR14,FL15,FLZ15}
\begin{align}\label{E}
E_{\theta + 1}[\rho(y)] =
\theta \int_0^L y^{1/\theta} \rho(y) \, dy -
{1 \over 2} \int_0^L dy   \int_0^L dy' \,  \rho(y) \rho(y')  \log \Big ( | y^{1/\theta} - (y')^{1/\theta} | \, | y - y'| \Big )
\end{align}
with $L = \theta (1 + 1/\theta)^{1 + \theta}$. Furthermore, its moments are given by
\begin{align}\label{E1}
\int_0^L y^n \rho(y) \, dy = {1 \over \theta n + 1}
\binom{n(1 + \theta)}{n}.
\end{align}
The RHS in the case $\theta = 1$ is recognised as the $n$-th Catalan number; for general $\theta \in \mathbb Z^+$ one has instead the 
$n$-th Fuss--Catalan number with parameter $\theta + 1$. It was proposed in \cite{FL15}, and later verified in
\cite{FLZ15},
 that the Raney generalisation of the
Fuss--Catalan numbers, as specified by the sequence  
\begin{equation}\label{Ra}
R_{p,r}(n):={r \over p n + r}
\binom{p n + r}{n}, \qquad (n=0,1,2,\dots),
\end{equation}
 are realised in the case $(p,r) = (\theta/2 + 1, 1/2)$ by the moments of the minimiser of the energy functional
\begin{align}\label{E1a}
E_{\theta/2 + 1,1/2}[\rho(y)] =
\theta \int_0^L y^{1/\theta} \rho(y) \, dy -
{1 \over 2} \int_0^L dy   \int_0^L dy' \,  \rho(y) \rho(y')  \log \Big ( {| y^{1/\theta} - (y')^{1/\theta} | \over
| y^{1/\theta} + (y')^{1/\theta} |}
 \, | y - y'| \Big )
\end{align}
for certain $L$. One reads off from this the underlying probability density function 
\begin{align}\label{E2}
\prod_{1\leq j<k\leq N}\frac{x_k-x_j}{x_k+x_j}(x_k^\theta-x_j^\theta)\prod_{j=1}^N x_j^ae^{-x_j},
\end{align}
(note that (\ref{E1a}) is independent of $a$) with the change of variables $x_l = y_l^{1/\theta}$. Comparing with (\ref{Bu}) we immediately see that
(\ref{E2}) is a $\theta$-deformation of the Bures ensemble with Laguerre weight, in the same sense that the Muttalib--Borodin ensembles (\ref{MB}) are a 
 $\theta$-deformation of the unitary invariant ensembles. Moreover, as to be demonstrated below,
 it retains the property of being a Pfaffian point process.
 
 Our aim in this work is to study the statistical systems implied by (\ref{E2}) for general $\theta > 0$.
  The known relationship between the Bures ensemble and Cauchy two-matrix
 model suggests that for this purpose we introduce the $\theta$-deformation of the Cauchy-Laguerre two-matrix model, specified by the 
 PDF
\begin{align}\label{E3}
\prod_{j=1}^N x_j^ae^{-x_j}y_j^be^{-y_j}\frac{\prod_{1\leq j<k\leq N}(x_k-x_j)(y_k-y_j)}{\prod_{j,k=1}^N(x_j+y_k)}\prod_{1\leq j<k\leq N}(x_k^\theta-x_j^\theta)(y_k^\theta-y_j^\theta).
\end{align}
It will be demonstrated in Section \ref{sec:CBOPs} that this specifies a two-component determinantal point process for general $\theta > 0$.
Particular $\theta$-deformed Cauchy bi-orthogonal polynomials deduced in Section \ref{s2.1}
allow the correlation kernels in (\ref{2}) to be specified. From this the hard edge limit
can be studied, giving rise to a $\theta$ dependent family of kernels outside of the Meijer G-function class. In fact, these kernels belong to the family of Fox H-functions \cite{fox61}, which can be viewed as a generalisation of Meijer G-functions.
In section \ref{sec:bures}, analogous to the study \cite{forrester2016} in the case $\theta = 1$,
the point process corresponding to the $\theta$-deformation (\ref{E2}) is studied  through its relationship to (\ref{E3}).
We show the characteristic polynomials, partition functions of $\theta$-deformed Cauchy two-matrix model and Bures ensemble are 
related to each other, allowing in particular the computation of the correlations and their hard edge scaled limit for the latter.

\section{$\theta$-deformation of Cauchy two matrix model}\label{sec:cauchy}
Consider the $\theta$-deformation of the Cauchy-Laguerre two-matrix ensemble specified by the eigenvalue PDF (\ref{E3}).
Define the corresponding partition function 
\begin{align*}
Z_N(a,b;\theta)=
\left(\frac{1}{N!}\right)^2\int_{\mathbb{R}_+^N\times\mathbb{R}_+^N}&\prod_{j=1}^N x_j^ae^{-x_j}y_j^be^{-y_j}\frac{\prod_{1\leq j<k\leq N}(x_k-x_j)(y_k-y_j)}{\prod_{j,k=1}^N(x_j+y_k)}\\
\times&\prod_{1\leq j<k\leq N}(x_k^\theta-x_j^\theta)(y_k^\theta-y_j^\theta)d[x]d[y],
\end{align*}
where $d[x]=dx_1 \cdots dx_N$ and similarly the meaning of $d[y]$. We assume $a>-1$ and $b>-1$ to ensure the convergency of the integral.

\begin{lemma}
We have
\begin{align}\label{normalisationconstant}
Z_N(a,b;\theta)=\det\left(I_{j,k}(a,b;\theta)\right)_{j,k=1}^N,\quad I_{j,k}(a,b;\theta)=\int_{\mathbb{R}_+^2}\frac{x^{a+\theta(j-1)}y^{b+\theta(k-1)}}{x+y}e^{-(x+y)}\, dxdy.
\end{align}
\end{lemma}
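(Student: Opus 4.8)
The plan is to exhibit the integrand of $Z_N(a,b;\theta)$ as a product of three $N\times N$ determinants and then to collapse the $2N$-fold integral to a single $N\times N$ determinant by applying the Andreief (Heine--de Bruijn) identity twice, once in the $x$-variables and once in the $y$-variables. The algebraic core is entirely standard for two-matrix models; the only genuinely delicate point is the analytic justification near the origin.

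First I would rewrite each factor of the integrand as a determinant. By Cauchy's determinant evaluation the rational factor is
\[
\frac{\prod_{1\le j<k\le N}(x_k-x_j)(y_k-y_j)}{\prod_{j,k=1}^N(x_j+y_k)}=\det\left[\frac{1}{x_j+y_k}\right]_{j,k=1}^N,
\]
while the two $\theta$-deformed Vandermonde products are ordinary Vandermonde determinants in the variables $x_j^\theta$, respectively $y_j^\theta$, so that $\prod_{1\le j<k\le N}(x_k^\theta-x_j^\theta)=\det[x_j^{\theta(k-1)}]_{j,k=1}^N$ and likewise for $y$. Absorbing the Laguerre weights column by column, I set $\phi_j(x)=x^{a+\theta(j-1)}e^{-x}$ and $\psi_k(y)=y^{b+\theta(k-1)}e^{-y}$, so that $\prod_{l}x_l^a e^{-x_l}\cdot\det[x_j^{\theta(k-1)}]=\det[\phi_k(x_j)]_{j,k=1}^N$ and similarly for the $y$-block. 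This presents the partition function as
\[
Z_N(a,b;\theta)=\frac{1}{(N!)^2}\int_{\mathbb{R}_+^N\times\mathbb{R}_+^N}\det[\phi_k(x_j)]\,\det[\psi_k(y_j)]\,\det\!\left[\frac{1}{x_j+y_k}\right]d[x]\,d[y].
\]

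Next I would integrate out the $x$-variables. Viewing the Cauchy determinant, for fixed $y_1,\dots,y_N$, as $\det[g_k(x_j)]$ with $g_k(x)=1/(x+y_k)$, Andreief's identity converts the $x$-integral of $\det[\phi_k(x_j)]\det[g_k(x_j)]$ into
\[
\frac{1}{N!}\int_{\mathbb{R}_+^N}\det[\phi_k(x_j)]\,\det[g_k(x_j)]\,d[x]=\det\left[\int_{\mathbb{R}_+}\frac{\phi_j(x)}{x+y_k}\,dx\right]_{j,k=1}^N=:\det[h_j(y_k)]_{j,k=1}^N,
\]
again an $N\times N$ determinant, now in the $y$-variables. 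A second application of Andreief to the remaining $y$-integral gives
\[
Z_N(a,b;\theta)=\frac{1}{N!}\int_{\mathbb{R}_+^N}\det[h_j(y_l)]\,\det[\psi_k(y_l)]\,d[y]=\det\left[\int_{\mathbb{R}_+}h_j(y)\,\psi_k(y)\,dy\right]_{j,k=1}^N,
\]
the two factors $1/N!$ being exactly consumed by the two reductions. The $(j,k)$ entry is the iterated integral $\iint_{\mathbb{R}_+^2}\phi_j(x)\,\frac{1}{x+y}\,\psi_k(y)\,dx\,dy$, which is precisely $I_{j,k}(a,b;\theta)$, proving the claim.

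The step requiring care is not the algebra but the analytic justification of the two Andreief reductions, i.e.\ the absolute integrability of the expanded product of three determinants, which is what licenses interchanging the (finite) permutation sums with the integrations. Decay for large $x,y$ is guaranteed by $e^{-(x+y)}$, so the only delicate region is the origin, where $1/(x+y)$ meets the possibly negative exponents $a,b$. The substitution $x=rt$, $y=r(1-t)$ with $r>0$, $t\in(0,1)$ turns the integrand of $I_{j,k}$ near $r=0$ into $r^{\,a+b+\theta(j+k-2)}\,t^{\,a+\theta(j-1)}(1-t)^{\,b+\theta(k-1)}e^{-r}$, so the $t$-integral converges for $a,b>-1$ while the $r$-integral singles out $I_{1,1}$ as the borderline entry, needing in addition $a+b>-1$. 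A clean way to make all of this transparent is the Schwinger representation $\frac{1}{x+y}=\int_0^\infty e^{-s(x+y)}\,ds$: it renders the full integrand nonnegative, so Tonelli's theorem applies directly for $a,b>-1$ with $a+b>-1$, and carrying out the $s$-integral yields $I_{j,k}=\Gamma(a+\theta(j-1)+1)\Gamma(b+\theta(k-1)+1)/(a+b+\theta(j+k-2)+1)$, which exhibits each entry as a meromorphic function of $(a,b)$ and lets one extend the identity to the remaining parameter values by analytic continuation.
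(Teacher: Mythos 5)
Your proof is correct and takes essentially the same route as the paper: the identical three-determinant decomposition (Cauchy double alternant plus two Vandermonde determinants in the variables $x_j^\theta$ and $y_j^\theta$), followed by the double reduction that the paper carries out by hand---replacing each Vandermonde determinant by its diagonal term at the cost of $(N!)^2$ and integrating the Cauchy determinant row-by-row and column-by-column---and that you package as two successive applications of Andreief's identity, which is precisely the connection the paper itself flags in its remark. Your closing analytic observation is a genuine refinement worth keeping: the paper's stated hypothesis $a,b>-1$ alone does not ensure convergence of the borderline entry $I_{1,1}$, since your substitution $x=rt$, $y=r(1-t)$ shows one also needs $a+b>-1$ (automatic when, e.g., $a,b\ge -1/2$), a condition consistent with the denominator $1+a+b+\theta(j+k-2)$ in the paper's evaluation \eqref{eq_i}.
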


\begin{proof}
We require the Cauchy double alternant identity
\begin{align}\label{CDA}
\det\left[\frac{1}{x_j+y_k}\right]_{j,k=1}^N=\frac{\prod_{1\leq j<k\leq N}(x_k-x_j)(y_k-y_j)}{\prod_{j,k=1}^N(x_j+y_k)}
\end{align}
(see e.g.~\cite{IOTZ06}) and the  Vandermonde determinant formula
\begin{align}\label{Van}
\det [z_j^{k-1}]_{j,k=1}^N = \prod_{1 \le j < k \le N} (z_k - z_j),
\end{align}
the latter used twice, with $z_j = x_j^\theta$ and with $z_j = y_j^\theta$. This allows all the double products in the integrand to be replaced by
determinants: there are three in total. Moreover, the fact that two of the determinants are anti-symmetric in $\{x_j\}$ and two are 
anti-symmetric in $\{y_j\}$ tells us that both Vandermonde determinants can be replaced by their diagonal term, provided we multiply by $(N!)^2$ ---
the reason being that each term in the sum form of the Vandermonde determinants contributes the same. These diagonal terms can be
multiplied into Cauchy double alternant determinant in such a way that only row $j$ depends on $x_j$ and column $k$ depends on
$y_k$. The integrations can then be done row-by-row (for the $x$'s) and column-by-column (for the $y$'s) to obtain 
(\ref{normalisationconstant}).
\end{proof}

\begin{remark}
This proof is very similar to that used to derive the classical Andr\'eief integration  formula; see \cite{Fo18}.
\end{remark}

The double integral in (\ref{normalisationconstant}) can be computed explicitly, and from this a product formula evaluation of the partition function
can be given.
\begin{lemma}\label{L1}
We have
\begin{align}\label{eq_i}
I_{j,k}(a,b;\theta)=\frac{\Gamma(a+\theta(j-1)+1)\Gamma(b+\theta(k-1)+1)}{1+a+b+\theta(j+k-2)}.
\end{align}
Also
\begin{align}\label{eq:partition}
Z_N(a,b;\theta)=\prod_{j=1}^N\Gamma(a+\theta(j-1)+1)\Gamma(b+\theta(j-1)+1)\times\frac{1}{\theta^N}(\prod_{l=1}^{N-1}l!)^2\prod_{k=1}^N\frac{(\beta+k-2)!}{(\beta+k+N-2)!}
\end{align}
with $\beta=(1+a+b)/\theta$.
\end{lemma}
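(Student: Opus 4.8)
The plan is to treat the two displayed identities in turn, first establishing the closed form (\ref{eq_i}) for the double integral $I_{j,k}(a,b;\theta)$ and then feeding it into the determinant (\ref{normalisationconstant}). For the integral, write $\alpha=a+\theta(j-1)$ and $\gamma=b+\theta(k-1)$ and make the change of variables $x=su$, $y=s(1-u)$ with $s\in(0,\infty)$ and $u\in(0,1)$, whose Jacobian is $s$. This converts the awkward factor $1/(x+y)$ into $1/s$ and decouples the two integrations, the integrand becoming $s^{\alpha+\gamma}u^{\alpha}(1-u)^{\gamma}e^{-s}$, so that
\[
I_{j,k}=\Big(\int_0^\infty s^{\alpha+\gamma}e^{-s}\,ds\Big)\Big(\int_0^1 u^{\alpha}(1-u)^{\gamma}\,du\Big)=\Gamma(\alpha+\gamma+1)\,B(\alpha+1,\gamma+1).
\]
Expanding $B(\alpha+1,\gamma+1)=\Gamma(\alpha+1)\Gamma(\gamma+1)/\Gamma(\alpha+\gamma+2)$ and using $\Gamma(\alpha+\gamma+1)/\Gamma(\alpha+\gamma+2)=1/(\alpha+\gamma+1)$ collapses this to the claimed (\ref{eq_i}); equivalently one may apply the Schwinger trick $1/(x+y)=\int_0^\infty e^{-t(x+y)}\,dt$ and perform three elementary Gamma integrals. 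I would also record that convergence near the origin needs $\alpha+\gamma>-1$ for every relevant pair, which for the corner entry amounts to $1+a+b>0$, the general formula then extending by analytic continuation in $a,b$.

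For the partition function I would substitute (\ref{eq_i}) into $Z_N=\det(I_{j,k})_{j,k=1}^N$. The factor $\Gamma(a+\theta(j-1)+1)$ depends only on the row index and $\Gamma(b+\theta(k-1)+1)$ only on the column index, so both pull out of the determinant as the products $\prod_{j=1}^N\Gamma(a+\theta(j-1)+1)$ and $\prod_{j=1}^N\Gamma(b+\theta(j-1)+1)$ already visible in (\ref{eq:partition}). Writing the denominator of $I_{j,k}$ as $\theta(\beta+j+k-2)$ with $\beta=(1+a+b)/\theta$ extracts an overall $\theta^{-N}$ and leaves the determinant $\det\big(1/(\beta+j+k-2)\big)_{j,k=1}^N$.

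This residual determinant is exactly of Cauchy type, so I would apply the double alternant identity (\ref{CDA}) with the choice $x_j=\beta-1+j$ and $y_k=k-1$, for which $x_j+y_k=\beta+j+k-2$. The key simplification is that the node differences coincide, $x_k-x_j=y_k-y_j=k-j$, so the numerator of (\ref{CDA}) becomes $\big(\prod_{1\le j<k\le N}(k-j)\big)^2$, and the elementary identity $\prod_{1\le j<k\le N}(k-j)=\prod_{l=1}^{N-1}l!$ delivers the factor $\big(\prod_{l=1}^{N-1}l!\big)^2$. It then remains to rewrite the denominator $\prod_{j,k=1}^N(\beta+j+k-2)$: doing the $j$-product first gives $\prod_{j=1}^N(\beta+k-2+j)=\Gamma(\beta+k+N-1)/\Gamma(\beta+k-1)$, and taking reciprocals and translating into factorials via $\Gamma(n+1)=n!$ produces $\prod_{k=1}^N(\beta+k-2)!/(\beta+k+N-2)!$. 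Collecting these four pieces reproduces (\ref{eq:partition}).

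I expect the only genuinely delicate step to be the bookkeeping in this last paragraph, namely matching the Cauchy nodes so that both Vandermonde-type products reduce to $\prod_{l=1}^{N-1}l!$, and telescoping $\prod_{j,k}(\beta+j+k-2)$ into the stated ratio of factorials without an index-shift error. The integral evaluation and the row/column extraction of the Gamma factors are routine by comparison.
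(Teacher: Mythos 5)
Your proof is correct. For the partition function \eqref{eq:partition} you follow essentially the same route as the paper: pull the row and column Gamma factors out of the determinant \eqref{normalisationconstant}, extract $\theta^{-N}$, and evaluate the remaining determinant by the Cauchy double alternant \eqref{CDA}; in fact you are more explicit than the paper, which simply cites \eqref{CDA}, whereas you exhibit the nodes $x_j=\beta-1+j$, $y_k=k-1$ and carry out the telescoping of $\prod_{j,k}(\beta+j+k-2)$ into the stated ratio of factorials. Where you genuinely diverge is the evaluation of $I_{j,k}$ in \eqref{eq_i}: writing $\alpha=a+\theta(j-1)$, $\gamma=b+\theta(k-1)$ as you do, the paper introduces $J_{j,k}(t)=\int\frac{x^{\alpha}y^{\gamma}}{x+y}e^{-t(x+y)}\,dx\,dy$, uses the scaling $x\mapsto x/t$, $y\mapsto y/t$ to show $J_{j,k}(t)=t^{-(\alpha+\gamma+1)}I_{j,k}$, and then differentiates in $t$ so that the factor $x+y$ produced by the differentiation cancels $1/(x+y)$ and the integral factorises into two Gamma integrals; comparing the two expressions for $\frac{d}{dt}J_{j,k}(t)$ yields \eqref{eq_i}. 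Your substitution $x=su$, $y=s(1-u)$ reaches the same formula more directly, via $\Gamma(\alpha+\gamma+1)B(\alpha+1,\gamma+1)$, and has the added merit of exposing the sharp convergence condition: one needs $\alpha+\gamma>-1$, which for the corner entry $j=k=1$ reads $1+a+b>0$; this is genuinely stronger than the paper's stated hypothesis $a>-1$, $b>-1$ (for instance $a=b=-0.9$ makes $I_{1,1}$ divergent), and your analytic-continuation remark handles the general parameter range cleanly. Both methods are elementary and of comparable length; the paper's differentiation trick avoids any two-dimensional change of variables, while yours makes the Beta--Gamma structure, and hence the domain of validity, transparent.
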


\begin{proof}
To evaluate the double integral, introduce 
\begin{align*}
J_{j,k}(t)=\int_{\mathbb{R}_+^2}\frac{x^{a+\theta(j-1)}y^{b+\theta(k-1)}}{x+y}e^{-t(x+y)} \,dxdy.
\end{align*}
Through the variable transformation $x\mapsto x/t$ and $y\mapsto y/t$, we see that the dependence on $t$ can be written as $J_{j,k}(t)=t^{-(1+a+b+\theta(j+k-2))}I_{j,k}(a,b;\theta)$ and
\begin{align*}
\frac{d}{dt}J_{j,k}(t)=-t^{-(2+a+b+\theta(j+k-2))}(1+a+b+\theta(j+k-2))I_{j,k}(a,b;\theta).
\end{align*}
On the other hand, if we take the derivative of $J_{j,k}(t)$, we can obtain
\begin{align*}
\frac{d}{dt}J_{j,k}(t)&=-\left(\int_{\mathbb{R}_+}x^{a+\theta(j-1)}e^{-tx} \,dx\right)\left(\int_{\mathbb{R}_+}y^{b+\theta(k-1)}e^{-ty} \, dy\right)\\&=-t^{-(2+a+b+\theta(j+k-2))}\Gamma(a+\theta(j-1)+1)\Gamma(b+\theta(k-1)+1).
\end{align*}
Comparing these two formulae, it gives (\ref{eq_i}).

Substituting  (\ref{eq_i}) into \eqref{normalisationconstant}, we have
\begin{align*}
Z_N(a,b;\theta)=\prod_{j=1}^N\Gamma(a+\theta(j-1)+1)\Gamma(b+\theta(j-1)+1)\times\det\left[\frac{1}{1+a+b+\theta(j+k-2)}\right]_{j,k=1}^N.
\end{align*}
This determinant is of the Cauchy double alternant type, and so according to (\ref{CDA}) has the evaluation
\begin{align*}
\det\left[\frac{1}{1+a+b+\theta(j+k-2)}\right]_{j,k=1}^N=\frac{1}{\theta^N}(\prod_{l=1}^{N-1}l!)^2\prod_{k=1}^N\frac{(\beta+k-2)!}{(\beta+k+N-2)!}
\end{align*}
if one denotes $\beta=\frac{1+a+b}{\theta}$. The result (\ref{eq:partition}) now follows.
\end{proof}

\begin{remark}
When $\theta=1$, this agrees with known results; see \cite{bertola2009}.
\end{remark}

\subsection{Jacobi polynomials and $\theta$-deformation Cauchy bi-orthogonal polynomials}
Associated with the double integral in  (\ref{normalisationconstant}) is an inner product, and associated with the inner
product are biorthogonal polynomials. 

\begin{define}
Under the inner product
\begin{align*}
\mu_{i,j}=\langle x^i, y^j\rangle=\iint_{\mathbb{R}_+^2}\frac{x^{i\theta+a}y^{j\theta+b}}{x+y}e^{-(x+y)}dxdy,
\end{align*}
one can define a family of $\theta$-deformed Cauchy bi-orthogonal polynomials $\{\hat{P}_n(x),\hat{Q}_n(y)\}_{n\in\mathbb{N}}$ with Laguerre weight
having the property
\begin{align}\label{PQ1}
\langle \hat{P}_n(x),\hat{Q}_m(y)\rangle=\iint_{\mathbb{R}_+^2}\frac{e^{-(x+y)}}{x+y}x^ay^b\hat{P}_m(x^\theta)\hat{Q}_n(y^\theta)
\, dxdy=\frac{h_n}{\theta}\delta_{n,m},
\end{align}
where $h_n = {\theta \over 2n\theta+ a+b + 1}$ (this choice of normalisation is for latter convenience).
\end{define}

In the case $\theta = 1$, explicit forms for $\{\hat{P}_n(x),\hat{Q}_n(y)\}_{n\in\mathbb{N}}$ have been obtained in \cite{bertola2009}.
The method of derivation can be generalised to give the corresponding explicit forms for all $\theta > 0$.

\begin{proposition}
Set
\begin{align*}
c_{n,l}=\frac{\Gamma(\alpha+n+l+1)}{l!(n-l)!\Gamma(\alpha+l+1)}(-1)^l.
\end{align*}
The polynomials
\begin{align}\label{PQ}
\hat{P}_n(x)=\sum_{l=0}^n c_{n,l}\frac{x^l}{\Gamma(a+\theta l+1)},\quad \hat{Q}_n(y)=\sum_{l=0}^n c_{n,l}\frac{y^l}{\Gamma(b+\theta l+1)},
\end{align}
are $\theta$-deformed Cauchy bi-orthogonal polynomials. They can be written in terms of contour integrals according to
\begin{subequations}
\begin{align}
\hat{P}_n(x) & =   \int_{\gamma}\frac{du}{2\pi i}\frac{\Gamma(\alpha+n-u+1)\Gamma(u)}{\Gamma(n+u+1)\Gamma(\alpha-u+1)\Gamma(a-\theta u+1)}x^{-u},\label{eq:pint} \\
\hat{Q}_n(y) & =\int_{\gamma}\frac{du}{2\pi i}\frac{\Gamma(\alpha+n-u+1)\Gamma(u)}{\Gamma(n+u+1)\Gamma(\alpha-u+1)\Gamma(b-\theta u+1)}y^{-u}\label{eq:qint}
\end{align}
\end{subequations}
valid for $x,y \ne 0$ and the contour encloses $\{0,-1,\dots,-n\}$. 

\end{proposition}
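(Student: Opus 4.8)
The proposition makes two assertions: that the explicit polynomials \eqref{PQ} satisfy the bi-orthogonality \eqref{PQ1}, and that they coincide with the Mellin--Barnes integrals \eqref{eq:pint}--\eqref{eq:qint}. I would settle the integral representation first, as it is a pure residue computation that simultaneously shows the two descriptions agree. In the integrand of \eqref{eq:pint}, the factor $\Gamma(u)$ has simple poles at $u=0,-1,-2,\dots$; those at $u\le-(n+1)$ are annihilated by the zeros of $1/\Gamma(n+u+1)$, while the poles of $\Gamma(\alpha+n-u+1)$ (located at $u\ge\alpha+n+1$) lie outside $\gamma$. Hence the only enclosed singularities are the simple poles at $u=-l$, $l=0,\dots,n$. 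Applying the residue theorem with $\mathrm{Res}_{u=-l}\Gamma(u)=(-1)^l/l!$ and evaluating the surviving Gamma factors and $x^{-u}$ at $u=-l$ reproduces the $l$-th summand $c_{n,l}x^l/\Gamma(a+\theta l+1)$ of \eqref{PQ} exactly; summing over $l$ yields \eqref{eq:pint}, and \eqref{eq:qint} follows identically under $a\mapsto b$.

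For the bi-orthogonality I would expand both polynomials \eqref{PQ} into monomials and use $\mu_{l,k}=I_{l+1,k+1}(a,b;\theta)$ with the evaluation \eqref{eq_i} of Lemma \ref{L1}. The Gamma factors $\Gamma(a+\theta l+1)$ and $\Gamma(b+\theta k+1)$ appearing in $\mu_{l,k}$ cancel precisely against the denominators carried by the coefficients in \eqref{PQ}, collapsing the inner product to the $a,b,\theta$-free double sum
\begin{equation*}
\langle\hat P_n,\hat Q_m\rangle=\sum_{l=0}^{n}\sum_{k=0}^{m}\frac{c_{n,l}c_{m,k}}{1+a+b+\theta(l+k)}.
\end{equation*}
The decisive step is to linearise the denominator through the Euler integral $\frac{1}{1+a+b+\theta(l+k)}=\frac{1}{\theta}\int_0^1 t^{\beta+l+k-1}\,dt$, valid since $\beta=(1+a+b)/\theta>0$, which factorises the double sum as
\begin{equation*}
\langle\hat P_n,\hat Q_m\rangle=\frac{1}{\theta}\int_0^1 t^{\beta-1}f_n(t)f_m(t)\,dt,\qquad f_n(t):=\sum_{l=0}^{n}c_{n,l}t^l.
\end{equation*}

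The crux is then the identification of $f_n$. Rewriting $c_{n,l}$ in Pochhammer form gives $f_n(t)=\frac{\Gamma(\alpha+n+1)}{n!\,\Gamma(\alpha+1)}\,{}_2F_1(-n,\alpha+n+1;\alpha+1;t)=P_n^{(\alpha,0)}(1-2t)$ with $\alpha=\beta-1$. Under $x=1-2t$ the Jacobi weight $(1-x)^\alpha$ becomes a constant multiple of $t^{\alpha}=t^{\beta-1}$, so the integral above is exactly the orthogonality integral for the family $\{P_n^{(\alpha,0)}\}$ on $[0,1]$. Orthogonality gives $\langle\hat P_n,\hat Q_m\rangle=0$ for $n\neq m$, while the classical Jacobi norm $\int_{-1}^1(1-x)^\alpha[P_n^{(\alpha,0)}(x)]^2\,dx=2^{\alpha+1}/(2n+\alpha+1)$ yields, after the change of variables, $\int_0^1 t^{\beta-1}f_n(t)^2\,dt=1/(2n+\beta)$, hence $\langle\hat P_n,\hat Q_n\rangle=\frac{1}{\theta(2n+\beta)}=\frac{h_n}{\theta}$. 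The main obstacle is precisely this recognition of the generating polynomial as a Jacobi polynomial, together with the matching $\alpha=\beta-1$ of the Jacobi parameter to the weight exponent; once established, both the orthogonality and the exact normalisation $h_n$ follow from classical Jacobi theory. A routine preliminary check is that $\beta>0$ (equivalently $a+b>-1$), required for the Euler integral to converge, holds under the convergence hypotheses of the model.
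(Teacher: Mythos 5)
Your proof is correct and follows essentially the same route as the paper: both reduce the bi-orthogonality to classical Jacobi orthogonality on $[0,1]$ by linearising the moment denominators with the Euler integral $\frac{1}{1+a+b+\theta(l+k)}=\frac{1}{\theta}\int_0^1 t^{\beta+l+k-1}\,dt$ and recognising $\sum_l c_{n,l}t^l=P_n^{(\alpha,0)}(1-2t)$, and both obtain the contour-integral forms by computing residues at $u=0,-1,\dots,-n$. The differences are purely presentational: you handle the residue computation first and in more explicit detail, and you derive the Jacobi identification through the ${}_2F_1$ Pochhammer form rather than asserting it as the paper does.
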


\begin{proof}

With the above definition of $c_{n,l}$ we note that
\begin{align}\label{Pa}
P^{(\alpha,0)}_n(1-2x):=P^{((a+b+1)/\theta-1,0)}_n(1-2x):=\sum_{i=0}^n c_{n,i}x^i,
\end{align}
where $P^{(\alpha,0)}_n$ denotes a Jacobi polynomial with parameters $(\alpha,0)$.
Also, upon
recognising that
\begin{align}\label{AB1}
\frac{1}{1+a+b+\theta(j+k-2)}=\frac{1}{\theta}\frac{1}{(1+a+b)/\theta+j+k-2}=\frac{1}{\theta}\int_0^1 x^{j-1}x^{k-1}x^{\frac{1+a+b}{\theta}-1}dx,
\end{align}
we see that $I_{j,k}(a,b;\theta)$ as defined in (\ref{normalisationconstant}) and evaluated by (\ref{eq_i}) can be written in the integral form
\begin{align}\label{IA}
I_{j,k}(a,b;\theta) = \frac{\Gamma(a+\theta(j-1)+1)\Gamma(b+\theta(k-1)+1)}{\theta} \int_0^1 x^{j-1}x^{k-1}x^{\frac{1+a+b}{\theta}-1}dx.
\end{align}

To make use of first (\ref{IA}), we note from (\ref{PQ1}) with the substitution (\ref{PQ}) that
\begin{align*}
\langle \hat{P}_n(x),\hat{Q}_m(y)\rangle=
\sum_{j=0}^n \frac{c_{n,j}}{\Gamma(a+\theta j+1)}
\sum_{k=0}^m \frac{c_{m,k}}{\Gamma(b+\theta k+1)} I_{j,k}(a,b;\theta),
\end{align*}
Rewriting according to (\ref{IA}) we have
\begin{align*}
\langle \hat{P}_n(x),\hat{Q}_m(y)\rangle= {1 \over \theta}
\sum_{j=0}^n  c_{n,j}
\sum_{k=0}^m  c_{m,k}  \int_0^1 x^{j-1}x^{k-1}x^{\frac{1+a+b}{\theta}-1}dx.
\end{align*}
Using now (\ref{Pa}) gives
\begin{align*}
\langle \hat{P}_n(x),\hat{Q}_m(y)\rangle= {1 \over \theta}
\int_0^1 x^\alpha  P^{(\alpha,0)}_n(1-2x) P^{(\alpha,0)}_m(1-2x) \, dx.
\end{align*}
The orthogonality of the Jacobi polynomials $\int_0^1 x^\alpha  P^{(\alpha,0)}_n(1-2x) P^{(\alpha,0)}_m(1-2x) \, dx = h_n \delta_{n,m}$ with $h_n=\frac{1}{2n+\alpha+1}$ gives (\ref{PQ1}) as required.

Regarding the contour integrals, note that for the poles of $\Gamma(u)$ for $u<-n-1$ in the integrand
 are cancelled by the poles of the $\Gamma(n+1+u)$ and thus this integral is a polynomial in $x$ of degree $n$.
  Calculation of the residues reclaims (\ref{PQ}).
\end{proof}
\begin{remark} We can check, using integration methods analogous to the one used in Lemma \ref{L1}, that
the $\theta$-deformed Cauchy bi-orthogonal polynomials can be written in the determinant form
\begin{align*}
&\hat{P}_n(x)=\sqrt{\frac{h_n}{\theta Z_nZ_{n+1}}}\left|
\begin{array}{cccc}
\mu_{0,0}&\cdots&\mu_{0,n-1}&1\\
\mu_{1,0}&\cdots&\mu_{1,n-1}&x\\
\vdots&&\vdots&\vdots\\
\mu_{n,0}&\cdots&\mu_{n,n-1}&x^n
\end{array}
\right|,\\
& \hat{Q}_n(y)=\sqrt{\frac{h_n}{\theta Z_nZ_{n+1}}}\left|
\begin{array}{cccc}
\mu_{0,0}&\mu_{0,1}&\cdots&\mu_{0,n}\\
\vdots&\vdots&&\vdots\\
\mu_{n-1,0}&\mu_{n-1,1}&\cdots&\mu_{n-1,n}\\
1&y&\cdots&y^n
\end{array}
\right|.
\end{align*}
\end{remark}

Interestingly, the integrals \eqref{eq:pint} and \eqref{eq:qint} can be written in terms of the Fox H-function. In general, the Fox H-function is a generalisation of the Meijer G-function \cite{fox61}. It is defined by a Mellin-Barnes integral
\begin{align}\label{eq:foxh}
\begin{aligned}
&H_{p,q}^{\,m,n} \!\left[ z \left| \begin{matrix}
( a_1 , A_1 );~( a_2 , A_2 );~\ldots;~( a_p , A_p ) \\
( b_1 , B_1 );~( b_2 , B_2 );~\ldots;~( b_q , B_q ) \end{matrix} \right. \right]\\
&\qquad\quad\quad:=\frac{1}{2\pi i}\int_\gamma\frac{(\prod_{j=1}^m\Gamma(b_j+B_js))(\prod_{j=1}^n\Gamma(1-a_j-A_js))}{(\prod_{j=m+1}^q\Gamma(1-b_j-B_js))(\prod_{j=n+1}^p\Gamma(a_j+A_js))}z^{-s}ds,
\end{aligned}
\end{align}
for a suitable contour $\gamma$.
Therefore, the bi-orthogonal polynomials $\hat{P}_n(x)$ and $\hat{Q}_n(x)$ can be expressed as
\begin{align*}
\hat{P}_n(x)=H_{2,3}^{1,1}\!\left[x\left|\begin{aligned}
(-\alpha-n,1);~(n+1,1)\\
(0,1);~(-\alpha,1);~(-a,\theta)
\end{aligned}
\right.\right],\quad \hat{Q}_n(y)=H_{2,3}^{1,1}\!\left[y\left|\begin{aligned}
(-\alpha-n,1);~(n+1,1)\\
(0,1);~(-\alpha,1);~(-b,\theta)
\end{aligned}
\right.\right].
\end{align*}
 
\subsection{Christoffel-Darboux kernel for the $\theta$-deformed Cauchy bi-orthogonal polynomials}\label{s2.1}
In this section, we introduce the Christoffel-Darboux kernel for the $\theta$-deformed Cauchy bi-orthogonal polynomials. 
In the next section it will be shown to play a key role in the specification of the
 correlation kernel for the $\theta$-deformation Cauchy two-matrix model.

\begin{define}
The Christoffel-Darboux kernel for the $\theta$-deformation Cauchy bi-orthogonal polynomials is defined as
\begin{align}\label{eq:CDkernel}
K_N(x,y)=\sum_{n=0}^{N-1}\frac{\theta}{h_n}\hat{P}_n(x^\theta)\hat{Q}_n(y^\theta).
\end{align}
\end{define}

It is not difficult to see that this is a reproducing kernel, as specified by certain integration formulas.

\begin{lemma}
We have
\begin{align*}
&(1)\quad\int_{\mathbb{R}_+^2}K_N(x,z)K_N(w,y)\frac{w^ae^{-w}z^be^{-z}}{z+w} \, dzdw=K_N(x,y);\\
&(2)\quad\int_{\mathbb{R}_+^2}K_N(x,y)\frac{x^ae^{-x}y^be^{-y}}{x+y} \,dxdy=N.
\end{align*}
\end{lemma}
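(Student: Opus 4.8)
The plan is to prove the two reproducing-type identities directly from the biorthogonality relation (\ref{PQ1}) together with the explicit normalisation $h_n = \theta/(2n\theta + a + b + 1)$, treating the kernel (\ref{eq:CDkernel}) term-by-term. The essential input is that the combination $\frac{w^a e^{-w} z^b e^{-z}}{z+w}$ appearing in part (1) is exactly the kernel of the inner product $\langle\,\cdot\,,\,\cdot\,\rangle$ under which the $\hat P_n, \hat Q_n$ are biorthogonal, after the substitutions $z \mapsto z^\theta$, $w \mapsto w^\theta$ are undone. So the strategy is to reduce both statements to (\ref{PQ1}) and let the orthogonality collapse the double sum to a single diagonal sum or to a scalar.

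For part (1), I would substitute the defining sum (\ref{eq:CDkernel}) for each factor $K_N(x,z)$ and $K_N(w,y)$, producing a double sum over indices $n,m$ of
\begin{align*}
\frac{\theta}{h_n}\frac{\theta}{h_m}\,\hat P_n(x^\theta)\,\hat Q_m(y^\theta)\int_{\mathbb R_+^2}\hat Q_n(z^\theta)\,\hat P_m(w^\theta)\,\frac{w^a e^{-w} z^b e^{-z}}{z+w}\,dz\,dw.
\end{align*}
The inner double integral is precisely $\langle \hat P_m, \hat Q_n\rangle$ in the notation of (\ref{PQ1}) (note the roles of $x\leftrightarrow w$ and $y\leftrightarrow z$ match the inner product's Laguerre weights), so it equals $\tfrac{h_n}{\theta}\delta_{n,m}$. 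Substituting this kills one sum and one factor of $\theta/h_m$, leaving exactly $\sum_{n=0}^{N-1}\tfrac{\theta}{h_n}\hat P_n(x^\theta)\hat Q_n(y^\theta)=K_N(x,y)$, which is the claim. The only point requiring care is bookkeeping: confirming that the variable names in the integral line up so that the weight $w^a$ pairs with a $\hat P$-type polynomial and $z^b$ with a $\hat Q$-type polynomial, matching the asymmetric definition of $\langle\,\cdot\,,\,\cdot\,\rangle$.

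For part (2), I would again expand $K_N(x,y)$ and integrate term-by-term. The integral $\int_{\mathbb R_+^2}\hat P_n(x^\theta)\hat Q_n(y^\theta)\frac{x^a e^{-x} y^b e^{-y}}{x+y}\,dx\,dy$ is literally $\langle \hat P_n, \hat Q_n\rangle = h_n/\theta$. Hence each term contributes $\tfrac{\theta}{h_n}\cdot\tfrac{h_n}{\theta}=1$, and summing over $n=0,1,\dots,N-1$ gives $N$, as required.

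The main obstacle, such as it is, is not analytic but notational: one must be scrupulous about which argument carries the $x^a$ weight versus the $y^b$ weight and about the direction of the biorthogonality pairing in (\ref{PQ1}), since swapping them would spoil the cancellation. A secondary technical point is justifying the interchange of summation and integration — but since each $K_N$ is a finite sum of polynomials and the integrals converge absolutely for $a,b>-1$ (the $1/(x+y)$ singularity being integrable near the origin in two dimensions against the Laguerre weights), this interchange is immediate and needs only a one-line remark rather than a delicate argument.
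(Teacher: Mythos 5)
Your proposal is correct and follows essentially the same route as the paper's own proof: expand both kernels via (\ref{eq:CDkernel}), recognise the inner double integral as the biorthogonality pairing (\ref{PQ1}) equal to $\tfrac{h_n}{\theta}\delta_{n,m}$, and collapse the sums (with part (2) being the same computation applied once). Your extra remarks on weight bookkeeping and the interchange of finite summation with absolutely convergent integration are fine but not points of divergence.
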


\begin{proof}
The proof is direct. Noting that
\begin{align*}
\int_{\mathbb{R}_+^2}&K_N(x,z)K_N(w,y)\frac{w^ae^{-w}z^be^{-z}}{w+z}dzdw\\
&=\int_{\mathbb{R}_+^2}\sum_{j=0}^{N-1}\frac{\theta}{h_j}\hat{P}_j(x^\theta)\hat{Q}_j(z^\theta)\sum_{i=0}^{N-1}\frac{\theta}{h_i}\hat{P}_i(w^\theta)\hat{Q}_j(y^\theta)\frac{w^ae^{-w}z^be^{-z}}{w+z}\, dzdw\\
&=\sum_{j=0}^{N-1}\sum_{i=0}^{N-1}\frac{\theta}{h_j}\frac{\theta}{h_i}\hat{P}_j(x^\theta)\hat{Q}_i(y^\theta)\frac{h_j}{\theta}\delta_{i,j}=\sum_{j=0}^{N-1}\frac{\theta}{h_j}\hat{P}_j(x^\theta)\hat{Q}_j(y^\theta)=K_N(x,y),
\end{align*}
the first equality is verified. Then we turn to the second equality,
\begin{align*}
\int_{\mathbb{R}_+^2}K_N(x,y)\frac{x^ae^{-x}y^be^{-y}}{x+y}\, dxdy=\int_{\mathbb{R}_+^2}\sum_{j=0}^{N-1}\frac{\theta}{h_j}\hat{P}_j(x^\theta)\hat{Q}_j(y^\theta)\frac{x^ae^{-x}y^be^{-y}}{x+y}\, dxdy=\sum_{j=0}^{N-1}1=N.
\end{align*}
\end{proof}

It furthermore permits integral forms.
\begin{proposition}\label{P210}
Let 
\begin{equation}\label{ap}
\alpha = (a+b+1)/\theta - 1
\end{equation}
as in (\ref{Pa}).
We have
\begin{align}\label{eq:kernel1}
\begin{aligned}
K_N(x,y)&=\theta\int_{\gamma\times\gamma}\frac{du}{2\pi i}\frac{dv}{2\pi i}\frac{x^{-u\theta}y^{-v\theta}}{1+\alpha-u-v}\\
&\times
\frac{\Gamma(u)\Gamma(v)\Gamma(\alpha+N+1-u)\Gamma(\alpha+N+1-v)}{\Gamma(N+u)\Gamma(N+v)\Gamma(\alpha+1-u)\Gamma(\alpha+1-v)}\frac{1}{\Gamma(a-\theta u+1)\Gamma(b-\theta v+1)}.
\end{aligned}
\end{align}
Also, setting
\begin{align*}
G_{N,a}^\theta(z)=\int_{\gamma}\frac{du}{2\pi i}\frac{\Gamma(u)\Gamma(\alpha+N+1-u)}{\Gamma(N+u)\Gamma(\alpha+1-u)\Gamma(a-\theta u+1)}z^{-u},
\end{align*}
we have
\begin{align*}
K_N(x,y)=\theta\int_0^1 t^\alpha G_{N,a}^\theta(tx^\theta)G_{N,b}^\theta(ty^\theta)dt.
\end{align*}
\end{proposition}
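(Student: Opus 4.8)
The plan is to establish the double contour integral (\ref{eq:kernel1}) first, and then to recognise the factorised form as a direct consequence of writing the denominator $1+\alpha-u-v$ as an elementary integral over $t$. For the first formula I would substitute the Mellin--Barnes representations (\ref{eq:pint}) and (\ref{eq:qint}), with $x\mapsto x^\theta$ and $y\mapsto y^\theta$, into the definition (\ref{eq:CDkernel}) of the Christoffel--Darboux kernel, using $\theta/h_n=\theta(2n+\alpha+1)$, which follows from $h_n=1/(2n+\alpha+1)$ together with (\ref{ap}). Since the sum over $n$ is finite, the summation and the two contour integrations interchange freely, and the problem reduces to evaluating the single sum
\[
S(u,v)=\sum_{n=0}^{N-1}(2n+\alpha+1)\,\frac{\Gamma(\alpha+n+1-u)\,\Gamma(\alpha+n+1-v)}{\Gamma(n+u+1)\,\Gamma(n+v+1)},
\]
with the remaining $n$-independent factors $\Gamma(u)\Gamma(v)/[\Gamma(\alpha+1-u)\Gamma(\alpha+1-v)\Gamma(a-\theta u+1)\Gamma(b-\theta v+1)]$, together with $\theta\,x^{-u\theta}y^{-v\theta}$, pulled outside the $n$-summation.

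The key computation is that $S(u,v)$ telescopes. Setting
\[
T_n=\frac{\Gamma(\alpha+n+1-u)\,\Gamma(\alpha+n+1-v)}{\Gamma(n+u)\,\Gamma(n+v)},
\]
repeated use of $\Gamma(z+1)=z\Gamma(z)$ gives
\[
T_{n+1}-T_n=(1+\alpha-u-v)(2n+\alpha+1)\,\frac{\Gamma(\alpha+n+1-u)\,\Gamma(\alpha+n+1-v)}{\Gamma(n+u+1)\,\Gamma(n+v+1)},
\]
because the numerator $(\alpha+n+1-u)(\alpha+n+1-v)-(n+u)(n+v)$ factors as $(2n+\alpha+1)(1+\alpha-u-v)$. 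Hence $S(u,v)=(T_N-T_0)/(1+\alpha-u-v)$. The $T_N$ term, after reinstating the prefactors, reproduces exactly the integrand of (\ref{eq:kernel1}). It then remains to argue that the boundary term $T_0$ drops out: its contribution is $-\theta\iint x^{-u\theta}y^{-v\theta}/[(1+\alpha-u-v)\Gamma(a-\theta u+1)\Gamma(b-\theta v+1)]\,\tfrac{du}{2\pi i}\tfrac{dv}{2\pi i}$, in which the factors $\Gamma(u)$ and $\Gamma(v)$ have cancelled. Viewed as a function of $u$, this integrand is analytic inside the contour $\gamma$, since the reciprocal Gammas are entire and the only pole, at $u=\alpha+1-v$, lies outside the loop enclosing $\{0,-1,\dots,-(N-1)\}$; by Cauchy's theorem the integral vanishes. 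This cancellation of the boundary term is the one genuinely delicate point, and it is exactly where the prescribed location of the contour is used; the rest is bookkeeping.

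For the factorised form I would insert the Mellin--Barnes definitions of $G_{N,a}^\theta(tx^\theta)$ and $G_{N,b}^\theta(ty^\theta)$, use $(tx^\theta)^{-u}=t^{-u}x^{-\theta u}$, and interchange the $t$-integration with the two contour integrals. This interchange is justified by absolute convergence provided the contours are positioned so that $\mathrm{Re}(u+v)<\alpha+1$, which is compatible with their enclosing $\{0,-1,\dots,-(N-1)\}$ since $\alpha+1=(a+b+1)/\theta>0$. The inner integral is then elementary, $\int_0^1 t^{\alpha-u-v}\,dt=1/(1+\alpha-u-v)$, and the resulting double contour integral is precisely (\ref{eq:kernel1}). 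In effect, the second identity is the observation that the coupling factor $1/(1+\alpha-u-v)$ in (\ref{eq:kernel1}) is what joins the two otherwise independent single-variable integrals $G_{N,a}^\theta$ and $G_{N,b}^\theta$ through the auxiliary variable $t$.
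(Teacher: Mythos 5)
Your proposal is correct and follows essentially the same route as the paper's proof: substituting the Mellin--Barnes representations \eqref{eq:pint}--\eqref{eq:qint} into \eqref{eq:CDkernel}, evaluating the finite $n$-sum in closed form, discarding the boundary term by analyticity of the reciprocal Gamma factors inside $\gamma$, and passing to the factorised form via the elementary identity \eqref{U}. The only cosmetic differences are that you derive the summation formula by telescoping where the paper cites \cite{KZ14} (your computation in fact confirms that the prefactor in the paper's displayed sum should read $1/(1+\alpha-u-v)$ rather than $1/(1-\alpha-u-v)$), and that you obtain the second identity by reading \eqref{U} in the opposite direction.
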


\begin{proof}
Since $\hat{P}_n(x)$ and $\hat{Q}_n(y)$ have integral formula \eqref{eq:pint} and \eqref{eq:qint}, we can equivalently express $K_N(x,y)$ as
\begin{align*}
K_N(x,y)&=\sum_{n=0}^{N-1}\frac{\theta}{h_n}\hat{P}_n(x^\theta)\hat{Q}_n(y^\theta)\\
&=\theta\int_{\gamma\times\gamma}\frac{du}{2\pi i}\frac{dv}{2\pi i}\frac{\Gamma(u)x^{-\theta u}}{\Gamma(\alpha+1-u)\Gamma(a-\theta u+1)}\frac{\Gamma(v)y^{-\theta v}}{\Gamma(\alpha+1-v)\Gamma(b-\theta v+1)}\\
&\times\sum_{n=0}^{N-1}(2n+\alpha+1)\left(\frac{\Gamma(\alpha+n-u+1)\Gamma(\alpha+n-v+1)}{\Gamma(n+1+u)\Gamma(n+1+v)}\right).
\end{align*}
Proceeding as in \cite{KZ14}, by
employing the formula
\begin{align*}
&\sum_{n=0}^{N-1}(2n+\alpha+1)\frac{\Gamma(\alpha+n-u+1)\Gamma(\alpha+n-v+1)}{\Gamma(n+1+u)\Gamma(n+1+v)}\\
&=\frac{1}{1-\alpha-u-v}\left(\frac{\Gamma(\alpha+N+1-u)\Gamma(\alpha+N+1-v)}{\Gamma(N+u)\Gamma(N+v)}-\frac{\Gamma(\alpha+1-u)\Gamma(\alpha+1-v)}{\Gamma(u)\Gamma(v)}\right)
\end{align*}
to the summation in the kernel and noting that 
\begin{align*}
\int_{\gamma\times\gamma}\frac{du}{2\pi i}\frac{dv}{2\pi i}\frac{x^{-\theta u}y^{-\theta v}}{1+\alpha-u-v}=0,\quad \text{for~} \alpha>-1
\end{align*}
gives (\ref{eq:kernel1}). The formula
 \begin{equation}\label{U}
 \frac{x^{-u\theta}y^{-v\theta}}{1+\alpha-u-v}=\int_0^1 t^\alpha(tx^\theta)^{-u}(ty^\theta)^{-v}dt
 \end{equation}
 now gives the second form.
 \end{proof}
 
Finally we would like to analyse the hard edge scaling of  the Christoffel-Darboux kernel.

\begin{corollary}
We have
\begin{align*}
\lim_{N\to\infty} {N^{-2(\alpha+1)}}K_N\Big ( \frac{X}{N^{\frac{2}{\theta}}},\frac{Y}{N^{\frac{2}{\theta}}} \Big )=\theta\int_0^1 t^\alpha G_{\infty,a}^\theta(tX^\theta)G_{\infty,b}^\theta(tY^\theta)dt,
\end{align*}
where
\begin{align*}
G_{\infty,a}^\theta(z)=\int_\gamma \frac{du}{2\pi i} \frac{\Gamma(u)z^{-u}}{\Gamma(\alpha-u+1)\Gamma(a-\theta u+1)} =
\sum_{k=0}^\infty {z^k \over k! \Gamma(\alpha + k + 1) \Gamma(a+\theta k + 1)},
\end{align*}
and the contour $\gamma$ is required to enclose the negative real axis, and the origin.
\end{corollary}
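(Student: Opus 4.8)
The plan is to start from the single-integral representation of the Christoffel--Darboux kernel in Proposition~\ref{P210},
\[
K_N(x,y)=\theta\int_0^1 t^\alpha\,G_{N,a}^\theta(tx^\theta)\,G_{N,b}^\theta(ty^\theta)\,dt,
\]
and insert the hard-edge scaling $x=X/N^{2/\theta}$, $y=Y/N^{2/\theta}$, so that $x^\theta=X^\theta/N^2$ and $y^\theta=Y^\theta/N^2$. Since the $t$-integration runs over the fixed interval $[0,1]$ with the integrable weight $t^\alpha$ (here $\alpha>-1$), the whole problem reduces to the large-$N$ behaviour of $G_{N,a}^\theta(z/N^2)$ for fixed $z=tX^\theta\ge 0$.

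First I would isolate the only $N$-dependent factor in the Mellin--Barnes integrand of $G_{N,a}^\theta$, namely the ratio $\Gamma(\alpha+N+1-u)/\Gamma(N+u)$, and apply the standard asymptotic $\Gamma(N+s)/\Gamma(N+t)\sim N^{s-t}$, which here gives $\Gamma(\alpha+N+1-u)/\Gamma(N+u)\sim N^{\alpha+1-2u}$. In the scaled integrand the monomial contributes $z^{-u}=(tX^\theta/N^2)^{-u}=(tX^\theta)^{-u}N^{2u}$, so the two powers of $N$ combine into the $u$-independent factor $N^{\alpha+1-2u}\cdot N^{2u}=N^{\alpha+1}$. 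This cancellation is exactly the bookkeeping that forces both the exponent $2/\theta$ in the argument and the normalisation $2(\alpha+1)$: pulling $N^{\alpha+1}$ out of each of $G_{N,a}^\theta$ and $G_{N,b}^\theta$, the remaining contour integrand tends pointwise in $u$ to $\Gamma(u)(tX^\theta)^{-u}/\big(\Gamma(\alpha+1-u)\Gamma(a-\theta u+1)\big)$, which is precisely the integrand defining $G_{\infty,a}^\theta(tX^\theta)$. Thus formally $N^{-(\alpha+1)}G_{N,a}^\theta(tX^\theta/N^2)\to G_{\infty,a}^\theta(tX^\theta)$, and the product of the two such limits integrated against $\theta t^\alpha\,dt$ reproduces the asserted right-hand side once the prefactor $N^{-2(\alpha+1)}$ is applied.

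The step I expect to be the main obstacle is making the two interchanges of limit rigorous: passing $\lim_{N\to\infty}$ inside the $u$-contour integral, and then inside the $t$-integral. For the contour integral one must upgrade the pointwise Gamma-ratio asymptotic to an estimate that holds uniformly along $\gamma$ and is dominated by an $N$-independent integrable function, so that dominated convergence applies; the delicate region is $|\operatorname{Im} u|\to\infty$, where I would combine Stirling's formula with the reflection formula to show that $N^{-(\alpha+1)}\big|\Gamma(\alpha+N+1-u)/\Gamma(N+u)\big|\,\big|(tX^\theta)^{-u}N^{2u}\big|$ decays fast enough in $\operatorname{Im} u$ uniformly in $N$, the contour itself being a fixed loop about the non-positive integers so that its approach to the poles is harmless. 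Granting such a uniform majorant, dominated convergence on $\gamma$ yields the limit pointwise in $t$, and a second application on $[0,1]$---using that $G_{\infty,a}^\theta$ is entire and hence bounded on the compact range of $tX^\theta$, together with the integrability of $t^\alpha$---lets the limit pass through the $t$-integral.

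Finally I would record the series form of $G_{\infty,a}^\theta$ by closing the contour around the non-positive integers and summing residues. The only poles of the integrand are the simple poles of $\Gamma(u)$ at $u=-k$, $k=0,1,2,\dots$; evaluating $z^{-u}=z^k$, $\Gamma(\alpha-u+1)=\Gamma(\alpha+k+1)$ and $\Gamma(a-\theta u+1)=\Gamma(a+\theta k+1)$ at $u=-k$ and summing produces a power series in $z$ with denominator $k!\,\Gamma(\alpha+k+1)\Gamma(a+\theta k+1)$, which is the stated representation; its absolute convergence for all $z$ is immediate from the double Gamma factor in the denominator and confirms that $G_{\infty,a}^\theta$ is entire. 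This is entirely parallel to the residue computation already carried out for $\hat P_n$ in the preceding proposition.
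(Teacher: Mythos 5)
Your proposal follows exactly the paper's route: it applies the Gamma-ratio asymptotic \eqref{eq:asymptotic} to the single-integral representation of $K_N$ from Proposition~\ref{P210}, tracks the cancellation of the $N$-powers produced by the scaling $x=X/N^{2/\theta}$, $y=Y/N^{2/\theta}$, and recovers the series form of $G_{\infty,a}^{\theta}$ by summing residues at the poles of $\Gamma(u)$. The only difference is that you spell out the dominated-convergence justification for exchanging the limit with the contour and $t$-integrals, which the paper leaves implicit by calling \eqref{eq:asymptotic} a ``uniform'' estimate; this is added detail, not a different approach.
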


\begin{proof}
This integral form follows from Proposition \ref{P210} by applying the uniform asymptotic estimate
\begin{align}\label{eq:asymptotic}
\frac{\Gamma(\alpha+N+1-u)\Gamma(\alpha+N+1-v)}{\Gamma(N+u)\Gamma(N+v)}\sim N^{\alpha+1-2u}N^{\alpha+1-2v},\quad N\to\infty
\end{align}
and the series form follows from the integral by evaluating the residues.
\end{proof}

The Christoffel-Darboux kernel and its asymptotic estimation all belong to the Fox H-function class since
\begin{align}\label{tginf}
\begin{aligned}
G_{N,a}^\theta(z)=H_{2,3}^{1,1}\left[
z\left|\begin{matrix}
(-\alpha-N,1);~(N,1)\\
(0,1);~(-\alpha,1);~(-a,\theta)
\end{matrix}\right.
\right], \quad G_{\infty,a}^\theta(z)=H_{0,3}^{1,0}\left[
z\left|\begin{matrix}
-\\
(0,1);~(-\alpha,1);~(-a,\theta)
\end{matrix}
\right.
\right].
\end{aligned}
\end{align}

\subsection{Correlation functions for the $\theta$-deformation Cauchy two-matrix model}\label{sec:CBOPs}
The $\theta$-deformation Cauchy two-matrix model is a determinantal point process.
We will see that  the Christoffel-Darboux kernel underlies the expression for the corresponding correlation kernel.

\begin{define}\label{def:correlation}
The $(r,s)$-correlation function is defined as
\begin{align*}
\rho_{r,s}(x_1,\cdots,x_r&;y_1,\cdots,y_s):=\frac{N!\prod_{j=1}^r x_j^a e^{- x_j} \prod_{j=1}^s  y_j^b e^{-y_j}}{(N-r)!(N-s)!Z_N}\\
&\times\int_{\mathbb{R}_+^{N-r}\times\mathbb{R}_+^{N-s}}\prod_{j=r+1}^N\  x_j^a e^{- x_j} \, dx_j\prod_{j=s+1}^N  y_j^b e^{-y_j} \, dy_j\Delta(x^\theta)\Delta(y^\theta)\det\left[\frac{1}{x_i+y_j}\right]_{i,j=1}^N.
\end{align*}
\end{define}

\begin{lemma}
The $(r,s)$-correlation function can be re-expressed in terms of Christoffel-Darboux kernel
\begin{align*}
\rho_{r,s}(x_1,\cdots,x_r&;y_1,\cdots,y_s)=\frac{N!\prod_{j=1}^r x_j^a e^{- x_j}  \prod_{j=1}^s y_j^be^{-y_j}}{(N-r)!(N-s)!}\times\\
&\int_{\mathbb{R}_+^{N-r}\times\mathbb{R}_+^{N-s}}\prod_{l=r+1}^Nx_j^ae^{-x_j}dx_j\prod_{j=s+1}^Ny_j^be^{-y_j}dy_j\det\left[K_N(x_i,y_j)\right]_{i,j=1}^N\det\left[\frac{1}{x_i+y_j}\right]_{i,j=1}^N.
\end{align*}
\end{lemma}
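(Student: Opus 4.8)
The plan is to reduce the lemma to a single pointwise determinantal identity. Comparing the asserted formula with Definition~\ref{def:correlation}, the two expressions differ only in that the factor $Z_N^{-1}\Delta(x^\theta)\Delta(y^\theta)$ in the definition is replaced by $\det[K_N(x_i,y_j)]_{i,j=1}^N$; the full $N\times N$ Cauchy determinant $\det[1/(x_i+y_j)]_{i,j=1}^N$, the Laguerre weights, and the combinatorial prefactor $N!/((N-r)!(N-s)!)$ are common to both sides and never enter the manipulation. Hence it suffices to prove, as an identity of functions of all $2N$ variables $x_1,\dots,x_N,y_1,\dots,y_N>0$,
\begin{equation*}
\det[K_N(x_i,y_j)]_{i,j=1}^N=\frac{1}{Z_N}\Delta(x^\theta)\Delta(y^\theta),
\end{equation*}
after which substitution into Definition~\ref{def:correlation} yields the lemma immediately.

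To establish this identity I would first exploit the rank structure of the kernel. Since \eqref{eq:CDkernel} writes $K_N$ as a sum of $N$ rank-one terms, the matrix $[K_N(x_i,y_j)]_{i,j=1}^N$ factors as $A\,\mathrm{diag}(\theta/h_0,\dots,\theta/h_{N-1})\,B^{\mathsf T}$ with $A=[\hat P_n(x_i^\theta)]$ and $B=[\hat Q_n(y_j^\theta)]$ ($1\le i,j\le N$, $0\le n\le N-1$). Multiplicativity of the determinant (Cauchy--Binet) then gives
\begin{equation*}
\det[K_N(x_i,y_j)]_{i,j=1}^N=\Big(\prod_{n=0}^{N-1}\frac{\theta}{h_n}\Big)\det[\hat P_n(x_i^\theta)]\,\det[\hat Q_n(y_j^\theta)].
\end{equation*}
Because $\hat P_n$ and $\hat Q_n$ have exact degree $n$ in their argument, with leading coefficients $\kappa_n$ and $\lambda_n$ read off from \eqref{PQ}, elementary column operations collapse each factor to a Vandermonde, $\det[\hat P_n(x_i^\theta)]=(\prod_n\kappa_n)\Delta(x^\theta)$ and $\det[\hat Q_n(y_j^\theta)]=(\prod_n\lambda_n)\Delta(y^\theta)$. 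Therefore $\det[K_N]=C\,\Delta(x^\theta)\Delta(y^\theta)$ with $C=\prod_{n=0}^{N-1}\theta\kappa_n\lambda_n/h_n$.

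It remains to identify the constant as $C=1/Z_N$, which is the one step requiring genuine input rather than bookkeeping. Only the product $\kappa_n\lambda_n$ enters $K_N$, and this product is independent of the choice of representative of the biorthogonal pair; the standard relation for biorthogonal systems---encoded in the determinantal representations of the Remark above---gives $\kappa_n\lambda_n=(h_n/\theta)\,Z_n/Z_{n+1}$, where $Z_n=\det[\mu_{i,j}]_{i,j=0}^{n-1}$ is the partition function of \eqref{normalisationconstant} with $Z_0=1$. Hence $\theta\kappa_n\lambda_n/h_n=Z_n/Z_{n+1}$ and the product telescopes,
\begin{equation*}
C=\prod_{n=0}^{N-1}\frac{Z_n}{Z_{n+1}}=\frac{Z_0}{Z_N}=\frac{1}{Z_N}.
\end{equation*}
Equivalently, one may bypass the leading coefficients: integrating $\det[K_N]=C\,\Delta(x^\theta)\Delta(y^\theta)$ against $\det[1/(x_i+y_j)]_{i,j=1}^N\prod_{j=1}^N x_j^a e^{-x_j}y_j^b e^{-y_j}$ over all $2N$ variables, the reproducing property of the Lemma above together with the defining integral for $Z_N$ fixes the same value.

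I expect the main obstacle to be exactly this determination of $C$. The Cauchy--Binet factoring and the Vandermonde reduction are routine, whereas $C=1/Z_N$ encodes the compatibility of the norm normalisation $h_n=\theta/(2n\theta+a+b+1)$ fixed in \eqref{PQ1} with the partition function evaluated in Lemma~\ref{L1}. The delicate point is the relation $\theta\kappa_n\lambda_n=h_n\,Z_n/Z_{n+1}$, which rests on Cramer's rule for the biorthogonal system (equivalently, on correctly reading the top coefficient off the determinant form); the normalisation-integral route is the more robust alternative, trading this bookkeeping for a single Andr\'eief-type evaluation.
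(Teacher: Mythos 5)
Your proposal is correct and follows essentially the same route as the paper's proof: both rest on the rank-$N$ (Cauchy--Binet) factorisation $\det[K_N(x_i,y_j)]=\prod_n(\theta/h_n)\det[\hat P_{n}(x_i^\theta)]\det[\hat Q_{n}(y_j^\theta)]$, the reduction of the polynomial determinants to Vandermondes, and the normalisation relation between the orthonormalised and monic polynomials (your $\kappa_n\lambda_n=(h_n/\theta)Z_n/Z_{n+1}$ is precisely the paper's $\hat P_n=\sqrt{h_n/\theta}\sqrt{Z_n/Z_{n+1}}\,\tilde P_n$), which telescopes to $1/Z_N$. The only difference is the direction of travel (kernel $\to$ Vandermonde rather than Vandermonde $\to$ kernel) and your optional Andr\'eief-type determination of the constant, neither of which changes the substance of the argument.
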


\begin{proof}
First note that by making use of (\ref{Van}) we can write
\begin{align*}
\Delta_N(x^\theta)\Delta_N(y^\theta)=\left[\det(\tilde{P}_{i-1}(x_j^\theta))\det(\tilde{Q}_{i-1}(y_j^\theta))\right]_{i,j=1}^{N},
\end{align*}
where $\tilde{P}_i(x)$ and $\tilde{Q}_i(y)$ are the monic $\theta$-deformation Cauchy bi-orthogonal polynomials with the bi-orthogonal relation
\begin{align*}
\langle \tilde{P}_i(x),\tilde{Q}_j(y)\rangle=\int_{\mathbb{R}_+^2}\frac{\tilde{P}_i(x^\theta)\tilde{Q}_j(y^\theta)}{x+y}x^ay^be^{-(x+y)}dxdy=\frac{Z_{i+1}}{Z_i}\delta_{i,j}.
\end{align*}
The latter are associated with the $\hat{P}_i(x)$ and $\hat{Q}_i(x)$ in \eqref{PQ1} by the relation
\begin{align*}
\hat{P}_i(x)=\sqrt{\frac{h_i}{\theta}}\sqrt{\frac{Z_i}{Z_{i+1}}}\tilde{P}_i(x),\quad\hat{Q}_i(x)=\sqrt{\frac{h_i}{\theta}}\sqrt{\frac{Z_i}{Z_{i+1}}}\tilde{Q}_i(x).
\end{align*}
Then one can deduce
\begin{align*}
\frac{1}{Z_N}\left[\det(\tilde{P}_{i-1}(x_j^\theta))\det(\tilde{Q}_{i-1}(y_j^\theta))\right]_{i,j=1}^{N}&=\left(\prod_{i=0}^{N-1}\frac{h_i}{\theta}\right)\left[\det(\hat{P}_{i-1}(x_j^\theta))\det(\hat{Q}_{i-1}(y_j^\theta))\right]_{i,j=1}^{N}\\&=\det(K_N(x_i,y_j))_{i,j=1}^{N}
\end{align*}
with the reproducing kernel $K_N(x,y)$ defined in \eqref{eq:CDkernel}, which implies the result.
\end{proof}

% Therefore, the $(r,s)$-correlation functions defined in Definition \ref{def:correlation} can be re-expressed as
%\begin{align*}
%\mathcal{R}^{(r,k)}(x_1,\cdots,x_r&;y_1,\cdots,y_k)=\frac{N!\prod_{j=1}^r\alpha(x_j)\prod_{j=1}^k\beta(y_j)}{(N-r)!(N-k)!}\times\\
%&\int_{\mathbb{R}_+^{N-r}\times\mathbb{R}_+^{N-k}}\prod_{l=r+1}^Nx_j^ae^{-x_j}dx_j\prod_{j=k+1}^Ny_j^be^{-y_j}dy_j\det\left[K(x_i,y_j)\right]_{i,j=1}^N\det\left[\frac{1}{x_i+y_j}\right]_{i,j=1}^N.
%\end{align*}
%In this case, one can write down the (0,1)-correlation function as 
%\begin{align*}
%\mathcal{R}^{(1,0)}=x_1^ae^{-x_1}\int_{\mathbb{R}_+}\frac{K_N(x_1,y_1)}{x_1+y_1}y_1^be^{-y_1}dy_1.
%\end{align*}
In the special case $r=s=N$, the above re-expressed kernel reads
\begin{align*}
\rho_{N,N}(x_1,\cdots,x_N;y_1,\cdots,y_N)=\frac{1}{N!}\det\left[K_N(x_i,y_j)\right]_{i,j=1}^N\det\left[\frac{1}{x_i+y_j}\right]_{i,j=1}^N\prod_{j=1}^Nx_j^ay_j^be^{-(x_j+y_j)}dx_jdy_j.
\end{align*}
We will show that that this can be expressed in terms of correlation kernels.

\begin{proposition}
One has
\begin{align*}
\det\left[K_N(x_i,y_j)\right]_{i,j=1}^N\det\left[\frac{1}{x_i+y_j}\right]_{i,j=1}^N=\det\left[
\begin{array}{cc}
\left[K_{01}(x_i,x_j)\right]_{i,j=1}^N&\left[K_{00}(x_i,y_j)\right]_{i,j=1}^N\\
\left[K_{11}(y_i,x_j)\right]_{i,j=1}^N&\left[K_{10}(y_i,y_j)\right]_{i,j=1}^N
\end{array}
\right],
\end{align*}
where $K_{00}(x,y)$ is exactly the Christoffel-Darboux kernel $K_N(x,y)$ and $K_{01}$, $K_{10}$ and $K_{11}$ are defined as
\begin{align}\label{eq:integralkernel}
\begin{aligned}
&K_{01}(x,x'):=\int_{\mathbb{R}_+}\frac{K_N(x,y)}{x'+y}y^be^{-y}dy,\\
&K_{10}(y,y'):=\int_{\mathbb{R}_+}\frac{K_N(x,y')}{x+y}x^ae^{-x}dx,\\
&K_{11}(y,x):=\int_{\mathbb{R}_+^2}\frac{K_N(x',y')}{(x'+y)(x+y')}x'^ay'^be^{-(x'+y')}dx'dy'-\frac{1}{x+y}.
\end{aligned}
\end{align}
\end{proposition}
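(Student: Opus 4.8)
The plan is to exploit the finite-rank (Christoffel--Darboux) representation $K_N(x,y)=\sum_{n=0}^{N-1}\frac{\theta}{h_n}\hat P_n(x^\theta)\hat Q_n(y^\theta)$ to reduce the four kernels in (\ref{eq:integralkernel}) to bilinear expressions in a handful of $N\times N$ data matrices, and then to collapse the $2N\times 2N$ determinant by block Gaussian elimination. First I would introduce the value matrices $\mathbf P=[\hat P_n(x_i^\theta)]$ and $\mathbf Q=[\hat Q_n(y_j^\theta)]$ (rows indexed by the points, columns by $n=0,\dots,N-1$), the diagonal weight matrix $\mathbf W=\mathrm{diag}(\theta/h_n)$, the Cauchy matrix $\mathbf C=[1/(x_i+y_j)]_{i,j=1}^N$, and the two families of Cauchy-transformed dual functions $\tilde p_n(y)=\int_{\mathbb R_+}\hat P_n(x^\theta)(x+y)^{-1}x^ae^{-x}\,dx$ and $\tilde q_n(x)=\int_{\mathbb R_+}\hat Q_n(y^\theta)(x+y)^{-1}y^be^{-y}\,dy$, assembled as $\tilde{\mathbf P}=[\tilde p_n(y_i)]$ and $\tilde{\mathbf Q}=[\tilde q_n(x_j)]$. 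Substituting the sum into each definition and interchanging the (finite) sum with the integral then gives the block identities $[K_{00}(x_i,y_j)]=\mathbf P\mathbf W\mathbf Q^{T}$, $[K_{01}(x_i,x_j)]=\mathbf P\mathbf W\tilde{\mathbf Q}^{T}$, $[K_{10}(y_i,y_j)]=\tilde{\mathbf P}\mathbf W\mathbf Q^{T}$, and $[K_{11}(y_i,x_j)]=\tilde{\mathbf P}\mathbf W\tilde{\mathbf Q}^{T}-\mathbf C^{T}$, the last because the defect $-1/(x_j+y_i)$ is exactly $-(\mathbf C^T)_{ij}$. Thus the right-hand block matrix equals
\begin{align*}
\mathbf M=\begin{pmatrix}\mathbf P\mathbf W\tilde{\mathbf Q}^{T} & \mathbf P\mathbf W\mathbf Q^{T}\\[2pt] \tilde{\mathbf P}\mathbf W\tilde{\mathbf Q}^{T}-\mathbf C^{T} & \tilde{\mathbf P}\mathbf W\mathbf Q^{T}\end{pmatrix}.
\end{align*}

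The collapse is then purely linear-algebraic and uses only factorisations, so no invertibility or genericity hypothesis is needed. I would extract $\mathrm{diag}(\mathbf P,\mathbf I)$ as a left factor, giving $\det\mathbf M=\det\mathbf P\cdot\det\mathbf M_2$ with $\mathbf M_2$ having top block-row $(\mathbf W\tilde{\mathbf Q}^{T},\ \mathbf W\mathbf Q^{T})$ and bottom block-row $(\tilde{\mathbf P}\mathbf W\tilde{\mathbf Q}^{T}-\mathbf C^{T},\ \tilde{\mathbf P}\mathbf W\mathbf Q^{T})$. Subtracting $\tilde{\mathbf P}$ times the top block-row from the bottom one (a unit-determinant lower block-triangular factor) annihilates the $\tilde{\mathbf P}\mathbf W\tilde{\mathbf Q}^{T}$ and $\tilde{\mathbf P}\mathbf W\mathbf Q^{T}$ terms, leaving the anti-block-triangular matrix $\mathbf M_3=\left(\begin{smallmatrix}\mathbf W\tilde{\mathbf Q}^{T} & \mathbf W\mathbf Q^{T}\\ -\mathbf C^{T} & \mathbf 0\end{smallmatrix}\right)$. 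A block-row swap, of sign $(-1)^{N^2}=(-1)^N$, brings this to block lower-triangular form $\left(\begin{smallmatrix}-\mathbf C^{T} & \mathbf 0\\ \mathbf W\tilde{\mathbf Q}^{T} & \mathbf W\mathbf Q^{T}\end{smallmatrix}\right)$, whose determinant is $\det(-\mathbf C^{T})\det(\mathbf W\mathbf Q^{T})=(-1)^N\det\mathbf C\,\det\mathbf W\,\det\mathbf Q$. The two factors of $(-1)^N$ cancel, so $\det\mathbf M_3=\det\mathbf C\,\det\mathbf W\,\det\mathbf Q$ and hence $\det\mathbf M=\det\mathbf P\,\det\mathbf W\,\det\mathbf Q\cdot\det\mathbf C=\det(\mathbf P\mathbf W\mathbf Q^{T})\det\mathbf C$, which is precisely $\det[K_N(x_i,y_j)]\det[1/(x_i+y_j)]$, as claimed.

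The two genuinely delicate points I would verify with care are the following. First, the bookkeeping of the defect term: the stacked three-factor part $\binom{\mathbf P}{\tilde{\mathbf P}}\mathbf W(\tilde{\mathbf Q}^{T},\mathbf Q^{T})$ has rank at most $N<2N$, so \emph{without} the $-\mathbf C^{T}$ in the $K_{11}$ block $\mathbf M$ would be singular; the whole identity hinges on that subtraction landing, after elimination, in the off-diagonal block where it restores full rank. Second, the sign in the anti-block-triangular evaluation, where the row-swap sign $(-1)^N$ and the sign $(-1)^N$ from $\det(-\mathbf C^{T})$ must cancel so that the final identity carries a plus sign. Since every step is a factorisation rather than an inversion, the conclusion holds as an exact identity in $x_1,\dots,x_N,y_1,\dots,y_N$, and the transpose relation $\det\mathbf C^{T}=\det\mathbf C$ matches $[1/(x_j+y_i)]$ in $K_{11}$ to the Cauchy determinant $\det[1/(x_i+y_j)]$ appearing on the left-hand side.
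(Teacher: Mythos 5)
Your proof is correct and is essentially the paper's own argument: your three elementary steps (extracting $\mathrm{diag}(\mathbf P,\mathbf I)$, the unit-determinant block elimination by $\tilde{\mathbf P}$, and the block-row swap) compose into exactly the three-factor decomposition
\[
\begin{pmatrix}\mathbf P & 0\\ \tilde{\mathbf P} & \mathbf I\end{pmatrix}
\begin{pmatrix}0 & \mathbf I\\ \mathbf I & 0\end{pmatrix}
\begin{pmatrix}-\mathbf C^{T} & 0\\ \mathbf W\tilde{\mathbf Q}^{T} & \mathbf W\mathbf Q^{T}\end{pmatrix}
\]
that the paper writes down directly (with the normalisation $\sqrt{\theta/h_n}$ absorbed into the polynomials rather than kept as a separate diagonal factor $\mathbf W$). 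The only substantive difference is presentational, and your explicit tracking of the two cancelling signs $(-1)^{N}$ and of the transpose $\mathbf C^{T}$ is in fact slightly more careful than the paper's bookkeeping.
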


\begin{proof}
This proposition is directly proved by the matrix decomposition. By denoting 
\begin{align*}
&P_i(x)=\sqrt{\frac{h_i}{\theta}}\hat{P}_i(x),\quad P_i^{(1)}(y)=\int_{\mathbb{R}_+}\frac{P_i(x^\theta)}{x+y}x^ae^{-x}dx,\\
&Q_i(x)=\sqrt{\frac{h_i}{\theta}}\hat{Q}_i(x),\quad Q_i^{(1)}(x)=\int_{\mathbb{R}_+}\frac{Q_i(y^\theta)}{x+y}y^be^{-y}dy,
\end{align*}
we know the matrix in right hand side can be decomposed as
\begin{align*}
&\left[
\begin{array}{cc}
\left[K_{01}(x_i,x_j)\right]_{i,j=1}^N&\left[K_{00}(x_i,y_j)\right]_{i,j=1}^N\\
\left[K_{11}(y_i,x_j)\right]_{i,j=1}^N&\left[K_{10}(y_i,y_j)\right]_{i,j=1}^N
\end{array}
\right]\\
&=\left[\begin{array}{cc}
\left({P}_{i-1}(x_j^\theta)\right)_{i,j=1}^N&0\\
\left(P^{(1)}_{i-1}(y_j)\right)_{i,j=1}^N&I
\end{array}
\right]\left[
\begin{array}{cc}
0&I\\
I&0
\end{array}
\right]\left[
\begin{array}{cc}
-\left(\frac{1}{x_i+y_j}\right)_{i,j=1}^N&0\\
\left(Q_{i-1}^{(1)}(x_j)\right)_{i,j=1}^N&\left(Q_{i-1}(y^\theta_j)\right)_{i,j=1}^N
\end{array}
\right],
\end{align*}
and hence the determinant of this matrix is equal to $\det\left[K_N(x_i,y_j)\right]_{i,j=1}^N\det\left[\frac{1}{x_i+y_j}\right]_{i,j=1}^N$.
\end{proof}
By employing the above, one can see the $(N,N)$-correlation function can be written as 
\begin{align*}
\rho_{N,N}(x_1,\cdots,x_N;y_1,\cdots,y_N)=\prod_{i=1}^Nx_i^ay_i^be^{-(x_i+y_i)}\det\left[
\begin{array}{cc}
\left[K_{01}(x_i,x_j)\right]_{i,j=1}^N&\left[K_{00}(x_i,y_j)\right]_{i,j=1}^N\\
\left[K_{11}(y_i,x_j)\right]_{i,j=1}^N&\left[K_{10}(y_i,y_j)\right]_{i,j=1}^N
\end{array}
\right].
\end{align*}
This structure remains true for the general $(r,s)$-correlation functions.
\begin{proposition}\label{prop:correlation}
The $(r,s)$-correlation functions for the $\theta$-deformation Cauchy two-matrix model is exactly
\begin{align}\label{eq:correlation1}
\begin{aligned}
\rho_{r,s}&(x_1,\cdots,x_r;y_1,\cdots,y_s)\\
&=\prod_{i=1}^r\prod_{j=1}^sx_i^ay_j^be^{-(x_i+y_j)}\times\det\left[
\begin{array}{cc}
\left[K_{01}(x_i,x_j)\right]_{i,j=1}^r&\left[K_{00}(x_i,y_j)\right]_{1\leq i\leq r}^{1\leq j\leq s}\\
\left[K_{11}(y_i,x_j)\right]_{1\leq i\leq s}^{1\leq j\leq r}&\left[K_{10}(y_i,y_j)\right]_{i,j=1}^s
\end{array}
\right].
\end{aligned}
\end{align}
\end{proposition}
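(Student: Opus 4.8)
The plan is to reduce the general $(r,s)$-correlation function to the already-established $(N,N)$ case by integrating out the extra variables, exploiting the reproducing-kernel structure of $K_N$. The starting point is the Lemma immediately preceding, which expresses $\rho_{r,s}$ as an integral over $N-r$ of the $x$-variables and $N-s$ of the $y$-variables of the product $\det[K_N(x_i,y_j)]_{i,j=1}^N \det[1/(x_i+y_j)]_{i,j=1}^N$ against the Laguerre weights. The Proposition just proved rewrites this product of two $N\times N$ determinants as a single $2N\times 2N$ determinant built from the four kernels $K_{00},K_{01},K_{10},K_{11}$. So the task is purely to perform the integrations over $x_{r+1},\dots,x_N$ and $y_{s+1},\dots,y_N$ of that $2N\times 2N$ determinant and to show that the net effect is to collapse the determinant to its leading $(r+s)\times(r+s)$ block of the same form.

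The key mechanism is the standard integrating-out argument for determinantal ensembles (the Dyson--Mehta reduction, as in \cite[\S 5.9]{Fo10}): each integration $\int_{\mathbb{R}_+} (\,\cdot\,)\,x_j^a e^{-x_j}\,dx_j$ or $\int_{\mathbb{R}_+}(\,\cdot\,)\,y_j^b e^{-y_j}\,dy_j$ over a ``diagonal'' variable that appears in both a row and a column of the $2N\times 2N$ matrix is governed by the two reproducing identities of the Lemma on the Christoffel-Darboux kernel, namely
\begin{align*}
\int_{\mathbb{R}_+^2}K_N(x,z)K_N(w,y)\frac{w^ae^{-w}z^be^{-z}}{z+w}\,dzdw=K_N(x,y),
\quad
\int_{\mathbb{R}_+^2}K_N(x,y)\frac{x^ae^{-x}y^be^{-y}}{x+y}\,dxdy=N,
\end{align*}
together with the compatible integral definitions \eqref{eq:integralkernel} of $K_{01},K_{10},K_{11}$. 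First I would expand the $2N\times 2N$ determinant by the Laplace/permutation expansion and track how integrating out one $x$-variable (and symmetrically one $y$-variable) against its weight acts on the matrix entries: the definitions \eqref{eq:integralkernel} are precisely arranged so that integrating $K_{00}$ against a neighbouring entry reproduces $K_{00}$, $K_{01}$, $K_{10}$ or $K_{11}$, while the trace-type identity yields the numerical factor $N$ that combines with the combinatorial prefactor $N!/((N-r)!(N-s)!)$ from Definition \ref{def:correlation}. Iterating the reduction $N-r$ times in the $x$-variables and $N-s$ times in the $y$-variables, each integration lowers the size of the determinant by one while preserving the block structure, and the accumulated scalar factors cancel exactly against the combinatorial prefactor, leaving the $(r+s)\times(r+s)$ determinant displayed in \eqref{eq:correlation1}.

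The main obstacle is bookkeeping rather than conceptual: one must verify that the four-kernel block structure is genuinely \emph{closed} under integrating out a single diagonal variable, i.e.\ that integrating any one of the four kernel types against $K_N$ (or against another kernel) with the Laguerre--Cauchy weight returns a kernel of the same family with the correct indices, and that the off-diagonal blocks behave consistently under the swap of row/column roles for the $x$'s versus the $y$'s. This is where the slightly asymmetric definition of $K_{11}$, with its subtracted term $-1/(x+y)$, must be checked to guarantee that the determinant telescopes cleanly and that no residual boundary term survives; granting the two reproducing identities and the definitions \eqref{eq:integralkernel}, the verification is a finite induction and each step is a direct application of the same two integral formulas.
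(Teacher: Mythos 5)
Your route is genuinely different from the paper's: the paper proves this proposition in one line, by recognising the eigenvalue PDF as being of the form $\det[\phi_i(x_j)]\,\det[\psi_i(y_j)]\,\det[\kappa(x_i,y_j)]$ and invoking the general result of Rains \cite[Corollary 1.4]{rains2000} with $\kappa(x,y)=1/(x+y)$, $\phi_j(x)=\hat{P}_{j-1}(x^\theta)$, $\psi_j(y)=\hat{Q}_{j-1}(y^\theta)$. A self-contained Dyson--Mehta-type reduction from the $(N,N)$ case, which is what you propose, is a legitimate alternative and can in fact be pushed through --- but not by the mechanism you describe, and this is where your proposal has a genuine gap.

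The gap is exactly in the step you dismiss as bookkeeping. You claim the four-kernel family is closed under integrating out one variable, i.e.\ that each product of a ``column'' entry with a ``row'' entry, integrated against the Laguerre weight, returns the matching entry of the reduced matrix. This is false whenever one of the two factors is $\hat{K}_{11}$, and it fails by producing $0$, not a kernel. Concretely, when integrating out a $y$-variable, merging row $x_i$ with column $x_j$ would require
\begin{equation*}
\int_{\mathbb{R}_+}\hat{K}_{00}(x_i,y)\,\hat{K}_{11}(y,x_j)\,dy=\hat{K}_{01}(x_i,x_j),
\end{equation*}
but inserting the definition \eqref{eq:integralkernel} of $K_{11}$, the double-integral part contributes $+\hat{K}_{01}(x_i,x_j)$ (by part (1) of the reproducing-kernel Lemma), while the subtracted term $-1/(x_j+y)$ contributes exactly $-\hat{K}_{01}(x_i,x_j)$: the integral vanishes identically. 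The same cancellation gives $\int\hat{K}_{10}(y_i,y)\hat{K}_{11}(y,x_j)\,dy=0$, and for the $x$-integrations $\int\hat{K}_{11}(y_i,x)\hat{K}_{01}(x,x_j)\,dx=0$ and $\int\hat{K}_{11}(y_i,x)\hat{K}_{00}(x,y_j)\,dx=0$. So the determinant does not telescope ``within the same family'' as you assert; moreover, even if your closure claim were true, the one-component Dyson--Mehta lemma applied to the $(r+s)\times(r+s)$ matrix would produce the factor $N-(r+s)+1$ at the first integration, which does not match the prefactor in Definition \ref{def:correlation}.

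What actually saves the induction is precisely these vanishings. In the permutation expansion, every term in which the integrated $y$-variable is sent to an $x$-type column is annihilated, so for a fixed reduced permutation only the $s-1$ merges onto $y$-type columns survive (all of which involve only $\hat{K}_{00}$ and $\hat{K}_{10}$ and do reproduce correctly); combined with the trace identity $\int_{\mathbb{R}_+}\hat{K}_{10}(y,y)\,dy=N$, which follows from part (2) of the Lemma, one obtains the factor $N-(s-1)=N-s+1$ per $y$-integration, and symmetrically $N-r+1$ per $x$-integration. These are exactly the factors consistent with Definition \ref{def:correlation}, since $\int\rho_{r,s}\,dy_s=(N-s+1)\rho_{r,s-1}$, so the downward induction from $(N,N)$ closes. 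Your proposal becomes a proof once the false closure claim is replaced by this dichotomy --- reproduction when $\hat{K}_{11}$ is not involved, vanishing when it is --- and the permutation count is redone accordingly; as written, the decisive verification is asserted rather than performed, and the assertion is incorrect.
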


\begin{proof}
This proposition is a special case of \cite[Corollary 1.4]{rains2000} if we take $\kappa(x,y)=\frac{1}{x+y}$,
$\phi_j(x) = \hat{P}_{j-1}(x^\theta)$ and $\psi_j(x) = \hat{Q}_{j-1}(x^\theta)$.
\end{proof}

\begin{remark}
In fact, one can scale the correlation function \eqref{eq:correlation1} so that the multiple factor can be absorbed into the kernel, that is
\begin{align*}
\rho_{r,s}&(x_1,\cdots,x_r;y_1,\cdots,y_s)=\det\left[
\begin{array}{cc}
\left[\hat{K}_{01}(x_i,x_j)\right]_{i,j=1}^r&\left[\hat{K}_{00}(x_i,y_j)\right]_{1\leq i\leq r}^{1\leq j\leq s}\\
\left[\hat{K}_{11}(y_i,x_j)\right]_{1\leq i\leq s}^{1\leq j\leq r}&\left[\hat{K}_{10}(y_i,y_j)\right]_{i,j=1}^s
\end{array}
\right],\end{align*}
where $\hat{K}_{00}(x,y)=K_{00}(x,y)$, $\hat{K}_{01}(x,x')=e^{-x'}x'^aK_{01}(x,x')$, $\hat{K}_{10}(y,y')=e^{-y}y^bK_{10}(y,y')$ and $\hat{K}_{11}(y,x)=e^{-(x+y)}x^ay^bK_{11}(y,x)$. This form is useful for purposes of obtaining the kernel for the Bures ensemble.
\end{remark}

Particular integral forms can be derived for the correlation kernels.
\begin{proposition}
Define
\begin{align*}
&G_{N,a}^\theta(z)=\int_{\gamma}\frac{du}{2\pi i}\frac{\Gamma(u)\Gamma(\alpha+N+1-u)}{\Gamma(N+u)\Gamma(\alpha+1-u)\Gamma(a-\theta u+1)}z^{-u},\\
&\tilde{G}_{N,a}^\theta(z)=\int_{\gamma}\frac{du}{2\pi i}\frac{\pi}{\sin(u\theta-a)\pi}\frac{\Gamma(u)\Gamma(\alpha+N+1-u)}{\Gamma(N+u)\Gamma(\alpha+1-u)\Gamma(a-\theta u+1)}z^{-u}\\
&\qquad\quad=\int_{\gamma}\frac{du}{2\pi i}\frac{\Gamma(u)\Gamma(\alpha+N+1-u)\Gamma(\theta u-a)}{\Gamma(N+u)\Gamma(\alpha+1-u)}z^{-u}.
\end{align*}
We have
\begin{align*}
&K_{01}(x,x')=\theta e^{x'}x'^b\int_0^1 t^\alpha G_{N,a}^\theta(tx^\theta)\tilde{G}_{N,b}^\theta(tx'^\theta)dt,\\
&K_{10}(y,y')=\theta e^y y^a\int_{0}^1t^\alpha\tilde{G}_{N,a}^\theta(ty^\theta)G_{N,b}^\theta(ty'^\theta)dt,\\
&K_{11}(y,x)=\theta e^{x+y} x^by^a\int_0^1 t^\alpha\tilde{G}^\theta_{N,a}(ty^\theta)\tilde{G}_{N,b}^\theta(tx^\theta)dt-\frac{1}{x+y},
\end{align*}
and the corresponding hard edge scaled limits are given by
\begin{align*}
&\lim_{N\to\infty}N^{-2(\alpha+1)+\frac{2b}{\theta}}K_{01}\Big (\frac{X}{N^{\frac{2}{\theta}}},\frac{X'}{N^{\frac{2}{\theta}}}\Big )=\theta X'^b\int_0^1 t^\alpha G_{\infty,a}^\theta(tX^\theta)\tilde{G}_{\infty,b}^{\theta}(tX'^\theta)dt,\\
&\lim_{N\to\infty}N^{-2(\alpha+1)+\frac{2a}{\theta}}K_{10}\Big (\frac{Y}{N^{\frac{2}{\theta}}},\frac{Y'}{N^{\frac{2}{\theta}}}\Big )=\theta Y^a\int_0^1t^\alpha \tilde{G}_{\infty,a}^\theta (tY^\theta)G_{\infty,b}^\theta(tY'^\theta)dt,\\
&\lim_{N\to\infty}N^{-\frac{2}{\theta}}K_{11}\Big (\frac{Y}{N^{\frac{2}{\theta}}},\frac{X}{N^{\frac{2}{\theta}}} \Big )+\frac{N^{2\theta(\alpha+1)}}{X+Y}=\theta X^bY^a\int_0^1 t^\alpha \tilde{G}_{\infty,a}(tY^\theta)\tilde{G}_{\infty,b}^\theta(tX^\theta)dt,
\end{align*}
where 
\begin{align}\label{eq:kernel}
\begin{aligned}
&G_{\infty,a}^\theta(z)=\int_{\gamma}\frac{\Gamma(u)z^{-u}}{\Gamma(\alpha+1-u)\Gamma(\alpha-\theta u+1)}\frac{du}{2\pi i},\\
&\tilde{G}_{\infty,a}^{\theta}(z)=\int_\gamma \frac{\Gamma(u)\Gamma(\theta u-a)z^{-u}}{\Gamma(\alpha+1-u)}\frac{du}{2\pi i}.
\end{aligned}
\end{align}

\end{proposition}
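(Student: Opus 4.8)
The strategy is to reduce every one of the three modified kernels $K_{01}, K_{10}, K_{11}$ to the integral representation of $K_N$ given in Proposition~\ref{P210}, namely $K_N(x,y)=\theta\int_0^1 t^\alpha G_{N,a}^\theta(tx^\theta)G_{N,b}^\theta(ty^\theta)\,dt$, and then carry out the single remaining one-dimensional integral against the Cauchy-type factor $1/(x'+y)$ (or $1/(x+y)$), which converts a plain $G$-function into its tilde counterpart. The key mechanism is the integral identity
\begin{equation*}
\int_0^\infty \frac{y^{\,c}\,e^{-y}}{x'+y}\,(ty^\theta)^{-v}\,dy,
\end{equation*}
which, after recognising $G_{N,b}^\theta$ as a Mellin--Barnes integral in the variable $v$, should be evaluated by inserting the Mellin--Barnes representation of $1/(x'+y)$ (equivalently, by using the reflection formula to produce the factor $\pi/\sin\pi(\theta v - b)=\Gamma(\theta v-b)\Gamma(1-\theta v+b)$ that appears in the definition of $\tilde G_{N,b}^\theta$). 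This is precisely what turns $1/\Gamma(b-\theta v+1)$ into $\Gamma(\theta v - b)$ and supplies the prefactor $e^{x'}x'^b$.

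Concretely, for $K_{01}$ I would start from its definition $K_{01}(x,x')=\int_{\mathbb{R}_+}\frac{K_N(x,y)}{x'+y}y^b e^{-y}\,dy$, substitute the $t$-integral form of $K_N$, and interchange the $t$-integration with the $y$-integration (justified by absolute convergence for $a,b>-1$). This isolates the inner integral $\int_0^\infty \frac{y^b e^{-y}}{x'+y}G_{N,b}^\theta(ty^\theta)\,dy$. Writing $G_{N,b}^\theta$ via its $v$-contour, the $y$-integral becomes a standard beta/gamma evaluation producing a factor whose pole structure is exactly $\Gamma(\theta v - b)$ against $x'^{\,b}e^{x'}$; resumming the $v$-contour then gives $\theta e^{x'}x'^b\int_0^1 t^\alpha G_{N,a}^\theta(tx^\theta)\tilde G_{N,b}^\theta(tx'^\theta)\,dt$. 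The computation for $K_{10}$ is identical with the roles of $a,b$ and $x,y$ interchanged. For $K_{11}$ one performs \emph{two} such inner integrals (over both $x'$ and $y'$), each converting one $G$ into a $\tilde G$, and the subtracted $-1/(x+y)$ carries through verbatim from its definition.

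For the hard edge limits I would invoke the uniform asymptotic estimate \eqref{eq:asymptotic}, $\Gamma(\alpha+N+1-u)/\Gamma(N+u)\sim N^{\alpha+1-2u}$, applied inside the $u$- and $v$-contours of $G_{N,\cdot}^\theta$ and $\tilde G_{N,\cdot}^\theta$. Under the scaling $x\mapsto X/N^{2/\theta}$ the factor $(tx^\theta)^{-u}$ contributes $N^{2u}$, which cancels the $N^{-2u}$ from the gamma-ratio, leaving an overall power of $N$; matching these powers fixes the prefactors $N^{-2(\alpha+1)+2b/\theta}$, $N^{-2(\alpha+1)+2a/\theta}$, and $N^{-2/\theta}$ and produces the limiting functions $G_{\infty,\cdot}^\theta$ and $\tilde G_{\infty,\cdot}^\theta$ of \eqref{eq:kernel}. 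The explicit $N$-dependent shift $N^{2\theta(\alpha+1)}/(X+Y)$ in the $K_{11}$ limit reflects the scaling of the additive $-1/(x+y)$ term, which must be tracked separately since it does not participate in the gamma-ratio asymptotics.

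The main obstacle I anticipate is the inner $y$-integral that manufactures the tilde function: it requires simultaneously controlling convergence at $y=0$ and $y=\infty$ (constraining the admissible location of the $v$-contour) and correctly accounting for the residue/reflection bookkeeping that replaces $1/\Gamma(b-\theta v+1)$ by $\Gamma(\theta v - b)$ together with the exponential prefactor $e^{x'}$. Interchanging the order of the $t$-, $y$-, and contour integrations, and verifying that the contour $\gamma$ can be positioned so that all gamma factors have the correct sign of argument, is where the analysis must be done carefully rather than formally.
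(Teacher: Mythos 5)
Your proposal is correct and follows essentially the same route as the paper's proof: substitute the contour-integral ($t$-integral) representation of $K_N$ from Proposition \ref{P210} into the definitions \eqref{eq:integralkernel}, evaluate the inner $y$-integral against $y^{b}e^{-y}/(x'+y)$ so that $1/\Gamma(b-\theta v+1)$ is traded for $\Gamma(\theta v-b)$ with prefactor $e^{x'}x'^{b}$, and obtain the hard-edge limits from the Gamma-ratio asymptotics \eqref{eq:asymptotic}. The one nuance is that this inner integral is not a pure beta/gamma evaluation --- it yields an incomplete Gamma function, $e^{x'}x'^{b-\theta v}\Gamma(1+b-\theta v)\Gamma(\theta v-b;x')$ --- and the paper effects the replacement $\Gamma(\theta v-b;x')\to\Gamma(\theta v-b)$ via the ${}_{1}F_{1}$ decomposition, valid only under the $v$-contour integration; this is exactly the ``residue/reflection bookkeeping'' you flag as the delicate step, so your plan matches the paper's argument.
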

\begin{remark}
Here, $\tilde{G}_{N,a}^\theta(z)$ can be written in terms of a Fox H-function
\begin{align}
\tilde{G}_{N,a}^\theta(z)=H_{2,3}^{2,1}\left[
z\left|\begin{matrix}
(-\alpha-N,1);~(N,1)\\
(0,1);~(-a,\theta);~(-\alpha,1)\end{matrix}
\right.
\right],
\end{align}
and its scaled form can also be expressed as a Fox H-function
\begin{align}\label{ginf}
\tilde{G}_{\infty,a}^\theta(z)=H_{0,3}^{2,0}\left[
z\left|\begin{matrix}
-\\
(0,1);(-a,\theta);(-\alpha,1)
\end{matrix}\right.
\right].
\end{align}
\end{remark}
\begin{proof}
We give the details for $K_{01}(x,x')$ only; the derivation in the other cases is similar.
Use of the integral representation \eqref{eq:kernel1} gives
\begin{align*}
K_{01}(x,x')&=\int_{\mathbb{R}_+}\frac{K_N(x,y)}{x'+y}y^be^{-y}dy\\
&=\theta\int_{\gamma^2}\frac{dudv}{(2\pi i)^2}\frac{x^{-u\theta}}{1+\alpha-u-v}\frac{\Gamma(u)\Gamma(v)\Gamma(\alpha+N+1-u)\Gamma(\alpha+N+1-v)}{\Gamma(N+u)\Gamma(N+v)\Gamma(\alpha+1-u)\Gamma(\alpha+1-v)}\\&\times\frac{1}{\Gamma(a+1-\theta u)\Gamma(b+1-\theta v)}\int_{\mathbb{R}_+}\frac{y^{-\theta v+b}e^{-y}}{x'+y}dy,
\end{align*}
which means we only need to compute the integral $\int_{\mathbb{R}_+}\frac{y^{-v \theta+b}e^{-y}}{x'+y}dy$. One can check that
\begin{align*}
\int_0^\infty&\frac{y^{-\theta v+b}e^{-y}}{x'+y}dy=\int_0^\infty y^{-\theta v+b}e^{-y}\left(\int_0^\infty e^{(-x'+y)s}ds\right)dy\\
&=\int_0^\infty\left(\int_0^\infty y^{-\theta v+b}e^{-(s+1)y}dy\right)e^{-x'}ds
=\Gamma(1+b-\theta v)\int_0^\infty (1+s)^{\theta v-b+1}e^{-x's}ds\\
&=e^{x'}\Gamma(1+b-\theta v)\int_1^\infty t^{\theta v-b+1}e^{-tx'}dx
=e^{x'}x'^{b-v\theta}\Gamma(v\theta-b;x')\Gamma(1+b-\theta v)
\end{align*}
with $\Gamma(v\theta-b;x')$ the incomplete Gamma function. By using the relation
\begin{align*}
\Gamma(v\theta-b;x')=\Gamma(v\theta-b)-\frac{x'^{v\theta-b}}{b-v\theta}{_{1}F_{1}}(v\theta-b,v\theta-b+1;-x')
\end{align*}
we see that inside the integral $\Gamma(v\theta-b;x')$ can be replaced by $\Gamma(v\theta-b)$,
and with the choice of  contour $\gamma$ contains all poles. Making use then of (\ref{U}) gives the stated integral form of  $K_{01}(x,x')$.

The hard edge scaled limits follow upon use of the asymptotic formula for the Gamma function \eqref{eq:asymptotic}.
\end{proof}

%\begin{align*}
%&K_{01}(x,x')=\theta e^{x'}x'^b\int_0^1 t^\alpha G_{N,a}^\theta(tx^\theta)\tilde{G}_{N,b}^\theta(tx'^\theta)dt,\\
%&G_{N,a}^\theta(z)=\int_{\gamma}\frac{du}{2\pi i}\frac{\Gamma(u)\Gamma(\alpha+N+1-u)}{\Gamma(N+u)\Gamma(\alpha+1-u)\Gamma(a-\theta u+1)}z^{-u},\\
%&\tilde{G}_{N,a}^\theta(z)=\int_{\gamma}\frac{du}{2\pi i}\frac{\pi}{\sin(u\theta-a)\pi}\frac{\Gamma(u)\Gamma(\alpha+N+1-u)}{\Gamma(N+u)\Gamma(\alpha+1-u)\Gamma(a-\theta u+1)}z^{-u}\\
%&\qquad\quad=\int_{\gamma}\frac{du}{2\pi i}\frac{\Gamma(u)\Gamma(\alpha+N+1-u)\Gamma(\theta u-a)}{\Gamma(N+u)\Gamma(\alpha+1-u)}z^{-u}.
%\end{align*}
%By the similar manner, one can obtain
%\begin{align*}
%&K_{10}(y,y')=\theta e^y y^a\int_{0}^1t^\alpha\tilde{G}_{N,a}^\theta(ty^\theta)G_{N,b}^\theta(ty'^\theta)dt,\\
%&K_{11}(y,x)=\theta e^{x+y} x^by^a\int_0^1 t^\alpha\tilde{G}^\theta_{N,a}(ty^\theta)\tilde{G}_{N,b}^\theta(tx^\theta)dt-\frac{1}{x+y}.
%\end{align*}

\subsection{A different approach --- ratios of average characteristic polynomials}

A method introduced in  \cite{forrester2016} involving  averaged ratios of characteristic polynomials can be used to give an
alternative derivation of Proposition \ref{prop:correlation}. This formalism is necessary to relate the $\theta$-deformation Cauchy two-matrix model to
the  $\theta$-deformation  Bures model. In this section we list details as required for this latter purpose.

We define the generalized partition function
\begin{align}\label{eq:cauchypartition}
\begin{aligned}
Z_{k_1|l_1;k_2|l_2}^{(N,a,b,C)}(\lambda_1,\lambda_1;\kappa_2,\lambda_2)&=\frac{1}{(N!)^2}\int_{\mathbb{R}^N\times\mathbb{R}^N}\frac{\prod_{1\leq j<k\leq N}(x_k-x_j)(y_k-y_j)}{\prod_{j,k=1}^N(x_j+y_k)}\prod_{1\leq j<k\leq N}(x_k^\theta-x_j^\theta)(y_k^\theta-y_j^\theta)\\
&\times\prod_{j=1}^N x_j^ae^{-x_j}y_j^be^{-y_j}\frac{\prod_{i=1}^{l_1}(x_j^\theta-\lambda_{1,i})\prod_{i=1}^{l_2}(y_j^\theta-\lambda_{2,i})}{\prod_{i=1}^{k_1}(x_j-\lambda_{1,i})\prod_{i=1}^{k_2}(y_j-\kappa_{2,i})}d[x]d[y].
\end{aligned}
\end{align}
Note the dependence on $\{x_j^\theta, y_j^\theta \}$ in the characteristic polynomials due to the $\theta$-deformation of the matrix model.
  Notice too that $\Zzzzz{}$ is the partition function of the $\theta$-deformation Cauchy two-matrix model, $Z_{0|1;0|0}(x)$ (or $Z_{0|0;0|1}(y)$) is the $\theta$-deformed Cauchy bi-orthogonal polynomials discussed above and $Z_{1|0;0|0}(y)$ (or $Z_{0|0;1|0}(x)$) is the Cauchy transformation of these polynomials respectively. 
  
 According to results of \cite{kieburg2010}, we know this partition function induces the  determinantal structure 
\begin{align*}
Z_{k_1|l_1;k_2|l_2}^{(N,a,b,C)}(\kappa_1,\lambda_1;\kappa_2,\lambda_2)&=\frac{(-1)^{k_1(k_1-1)/2+k_2(k_2-1)/2+l_1l_2}Z_{0|0;0|0}^{(\tilde{N},a,b,C)}}{B_{k_1|l_1}(\kappa_1,\lambda_1)B_{k_2|l_2}(\kappa_2;\lambda_2)}\\
&\times\det\left[\begin{array}{c|c}
\frac{Z_{1|0;1|0}^{(\tilde{N}+1,a,b,C)}(\kappa_{1,i};\kappa_{2,j})}{\Zzzzz{(\tN,a,b,C)}}&\frac{1}{\kappa_{1,i}-\lambda_{1,j}}\frac{\Zoozz{(\tN,a,b,C)}(\kappa_{1,i},\lambda_{1,j})}{\Zzzzz{(\tN,a,b,C)}}\\
\hline\\
\frac{1}{\kappa_{2,j}-\lambda_{2,i}}\frac{\Zzzoo{(\tN,a,b,C)}(\kappa_{2,j},\lambda_{2,i})}{\Zzzzz{(\tN,a,b,C)}}&-\frac{\Zzozo{(\tN-1,a,b,C)}(\lambda_{1,j};\lambda_{2,i})}{\Zzzzz{(\tN,a,b,C)}}
\end{array}
\right]
\end{align*}
where $\tN=N+l_1-k_1=N+l_2-k_2>1$ and $B_{k|l}(\kappa;\lambda)$ is the Cauchy-Vandermonde determinant
\begin{align*}
B_{k|l}(\kappa;\lambda)=\frac{\Delta_k(\kappa)\Delta_l(\lambda)}{\prod_{i=1}^k\prod_{j=1}^l(\kappa_i-\lambda_j)}.
\end{align*}

Only the cases $k_1=l_1=k$ and $k_2=l_2=l$ are independent as other cases follow by taking appropriate
variables $\kappa_1,\kappa_2,\lambda_1,\lambda_2$ to infinity. 
In addition to this restriction, the case $b=a+1$ plays a special role, as then the partition function can be expressed as the
determinant of an anti-symmetric matrix,
\begin{align}\label{eq:pfsquare}
\begin{aligned}
Z_{k|l;k|l}^{(N,a,a+1,C)}&(\kappa,\lambda;\kappa,\lambda)=\frac{\Zzzzz{(\tN,a,a+1,C)}}{B_{k|l}^2(\kappa,\lambda)}\\
&\times\left\{
\begin{array}{lc}
\det\left[\begin{array}{c|c}
\hat{K}_{11}^{(\tN+1)}(\kappa_i,\kappa_j)&-\hat{K}_{01}^{(\tN)}(\kappa_i,\lambda_j)\\\hline
\hat{K}_{01}^{(\tN)}(\kappa_j,\lambda_i)&\hat{K}_{01}^{(\tN-1)}(\lambda_i,\lambda_j)
\end{array}
\right]&\text{for $k+l$ even,}\\
\det\left[\begin{array}{c|c|c}
\hat{K}_{11}^{(\tN+2)}(\kappa_i,\kappa_j)&-\hat{K}_{01}^{(\tN+1)}(\kappa_i,\lambda_j)&\hat{K}_1^{(\tN+1)}(\kappa_i)\\\hline
\hat{K}_{01}^{(\tN+1)}(\kappa_j,\lambda_i)&\hat{K}_{00}^{(\tN)}(\lambda_i,\lambda_j)&\hat{K}_0^{(\tN)}(\lambda_i)\\\hline
-\hat{K}_1^{(\tN+1)}(\kappa_j)&-\hat{K}_0^{(\tN)}(\lambda_j)&0
\end{array}
\right]&\text{for $k+l$ odd}
\end{array}
\right.
\end{aligned}
\end{align}
with kernels
\begin{align*}
&\hat{K}_{11}^{(N)}(\kappa_1,\kappa_2)=\frac{\Zozoz{(N,a,a+1,C)}(\kappa_1;\kappa_2)-\Zozoz{(N,a,a+1,C)}(\kappa_2;\kappa_1)}{2\Zzzzz{(N-1,a,a+1,C)}},\\
&\hat{K}_{01}^{(N)}(\kappa;\lambda)=\frac{1}{\kappa-\lambda}\frac{\Zzzoo{(N,a,a+1,C)}(\kappa,\lambda)+\Zoozz{(N,a,a+1,C)}(\kappa;\lambda)}{2\Zzzzz{(N,a,a+1,C)}},\\
&\hat{K}_{00}^{(N)}(\lambda_1,\lambda_2)=\frac{\Zzozo{(N,a,a+1,C)}(\lambda_2;\lambda_1)-\Zzozo{(N,a,a+1,C)}(\lambda_1;\lambda_2)}{2\Zzzzz{N+1,a,a+1,C}},\\
&\hat{K}_1^{(N)}(\kappa)=\frac{\Zzzoz{(N,a,a+1,C)}(\kappa)+\Zozzz{(N,a,a+1,C)}(\kappa)}{2\Zzzzz{N,a,a+1,C}},\\
&\hat{K}_0^{(N)}(\lambda)=\frac{\Zzzzo{(N,a,a+1,C)}(\lambda)-\Zzozz{(N,a,a+1,C)}(\lambda)}{2\Zzzzz{(N+1,a,a+1,C)}}.
\end{align*}

Finally, introduce the correlation function with self-energy term,
\begin{align*}
&\hat{\rho}_{k,l}(x_1,\cdots,x_k;y_1,\cdots,y_l)\\&:=\frac{1}{\Zzzzz{(N,a,b,C)}}\frac{1}{(N!)^2}\int_{\mathbb{R}^N\times\mathbb{R}^N}\frac{\prod_{1\leq j<k\leq N}(x'_k-x'_j)(y'_k-y'_j)}{\prod_{j,k=1}^N(x'_j+y'_k)}\prod_{1\leq j<k\leq N}({x'}_k^\theta-{x'}_j^\theta)({y'}_k^\theta-{y'}_j^\theta)\\
&\times\prod_{j=1}^N {x'}_j^ae^{-x'_j}{y'}_j^be^{-y'_j}\prod_{j=1}^k\left(\frac{1}{N}\sum_{i=1}^N\delta(x_j-x'_i)\right)\prod_{j=1}^l\left(\frac{1}{N}\sum_{i=1}^N\delta(y_j-y_i')\right)\\
&=\frac{1}{\Zzzzz{(N,a,b,C)}}\lim_{\epsilon\to0}\sum_{L_j,L_i'=\pm}\prod_{j=1}^k(\frac{L_j}{2\pi iN}\frac{\partial}{\partial \tilde{x}_j})\prod_{i=1}^l(\frac{L'_i}{2\pi iN}\frac{\partial}{\partial \tilde{y}_i})Z_{k|k;l|l}^{(N,a,b,C)}(\tilde{x}+iL\epsilon,x;\tilde{y}+iL'\epsilon,y)|_{\tilde{x}=x,\tilde{y}=y},
\end{align*}
where the equality comes from the residue theorem.
This correlation function with self-energy connects with the correlation functions defined in Proposition \ref{prop:correlation} by the formula
\begin{align}\label{LO}
\hat{\rho}_{k,l}^{(N,a,b,C)}(x;y)=\frac{(N!)^2}{(N-k)!(N-l)!N^{k+l}}\rho_{k,l}^{(N,a,b,C)}(x;y)+\text{lower order terms}.
\end{align}
(the meaning of ``lower order terms" is given in \cite{forrester2016}).

\section{$\theta$-deformation of the Bures ensemble}\label{sec:bures}
We now turn our attention to  the $\theta$-deformation of the Bures ensemble (\ref{E2}). Its relationship to the  $\theta$-deformation Cauchy two-matrix model,
as known from  \cite{forrester2016} in the case $\theta = 1$, is essential to our working.
%Now we are going to Similar with \cite{forrester2016}, there exists some relations between the $\theta$-deformed Cauchy two-matrix models and $\theta$-deformed Bures ensemble, from which we can derive some properties of the $\theta$-deformed Bures ensemble. Firstly, we are going to introduce the model of Bures ensemble.
%
%The $\theta$-Bures ensemble has the possibility density function
%\begin{align*}
%\prod_{1\leq j<k\leq N}\frac{x_k-x_j}{x_k+x_j}(x_k^\theta-x_j^\theta)\prod_{j=1}^N x_j^ae^{-x_j}
%\end{align*}
%with $x_i\in\mathbb{R}_+$ and $\theta$ a parameter. It is shown this model is a generalisation of Bures ensemble with a parameter deformation.
\subsection{The partition function of the $\theta$-deformation Bures ensemble}

Let's firstly consider the partition function of the $\theta$-deformation Bures ensemble,
\begin{align}\label{eq:burespartition}
Z_N^{B}(a;\theta)=\frac{1}{N!}\int_{\mathbb{R}_+^N}\prod_{1\leq j,k\leq N}\frac{x_k-x_j}{x_k+x_j}(x_k^\theta-x_j^\theta)\prod_{j=1}^N{x_j^ae^{-x_j}}dx_j.
\end{align}
The working in \cite{forrester2016} for the case $\theta = 1$ can be used to show that this partition function can be written in a Pfaffian form.

\begin{lemma}
We have
\begin{align*}
Z_N^B(a;\theta)=\left\{\begin{array}{cc}
{\rm Pf}(I^B_{j,k})_{j,k=1}^N& \text{for $N$ even},\\
{\rm Pf}\left(\begin{array}{cc}
0& {i}^B_j\\
-{i}^B_k& I^B_{j,k}\end{array}
\right)_{j,k=1}^N&\text{for $N$ odd},
\end{array}
\right.
\end{align*}
where 
\begin{align}\label{eq:momentsb}
I_{j,k}^B=\int_{\mathbb{R}_+^2}\frac{y-x}{y+x}x^{a+\theta(j-1)}y^{a+\theta(k-1)}e^{-(x+y)}dxdy,\quad i_j^B=\int_{\mathbb{R}_+}x^{a+\theta(j-1)}e^{-x}dx.
\end{align}
\end{lemma}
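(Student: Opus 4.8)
The plan is to convert the multi-dimensional integral defining $Z_N^B(a;\theta)$ into a Pfaffian by the standard de~Bruijn integration technique, following the $\theta=1$ case of \cite{forrester2016}. First I would use the Vandermonde formula (\ref{Van}) with $z_j = x_j^\theta$ to write $\prod_{1\le j<k\le N}(x_k^\theta - x_j^\theta) = \det[x_j^{\theta(k-1)}]_{j,k=1}^N$. The remaining factor $\prod_{1\le j<k\le N}\frac{x_k-x_j}{x_k+x_j}$ is the key structural object: it is the Pfaffian analogue of the Cauchy double alternant, and by the known identity (see e.g.~\cite[Ch.~6]{Fo10}) one has $\prod_{1\le j<k\le N}\frac{x_k-x_j}{x_k+x_j} = \Pf\!\left[\frac{x_k-x_j}{x_k+x_j}\right]_{j,k=1}^N$ for $N$ even, with the appropriate bordered version for $N$ odd. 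This is exactly the structure that produces an antisymmetric kernel, and it is what makes the Bures ensemble a Pfaffian (rather than determinantal) point process.

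Next I would combine the Vandermonde determinant with the Pfaffian. The antisymmetry of the Vandermonde determinant under permutations of the $x_j$ lets me replace it by its diagonal term $\prod_j x_j^{\theta(j-1)}$ at the cost of a factor $N!$, in the same spirit as the reduction used in the proof of Lemma in (\ref{normalisationconstant}). The monomial weights and Gaussian factors $x_j^a e^{-x_j}$ are then distributed so that row/column $j$ of the Pfaffian carries all the $x_j$-dependence through the weight $x_j^{a+\theta(j-1)}e^{-x_j}$. At this point the integration factorizes across the Pfaffian entries, and de~Bruijn's formula
\begin{align*}
\frac{1}{N!}\int \Pf\!\left[\,f(x_j,x_k)\,\right]_{j,k=1}^N \prod_{j=1}^N w(x_j)\,dx_j
= \Pf\!\left[\iint f(x,y)w(x)w(y)\,dx\,dy\right]_{j,k=1}^N
\end{align*}
reduces the $N$-fold integral to a Pfaffian of the two-fold integrals. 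Carrying out this integration entry-by-entry against the diagonal monomials produces exactly the moment $I_{j,k}^B$ of (\ref{eq:momentsb}), with the antisymmetrizing kernel $\frac{y-x}{y+x}$ inherited from the Cauchy-type Pfaffian.

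For $N$ odd the same de~Bruijn identity applies to the bordered Pfaffian, where the extra border row/column is generated by integrating a single free variable against the monomial weight; this yields the single integrals $i_j^B = \int_{\mathbb{R}_+} x^{a+\theta(j-1)}e^{-x}\,dx$ appearing in the bordered array. The convergence of all these integrals is guaranteed by the assumption $a>-1$, exactly as in the Cauchy two-matrix case.

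The main obstacle I anticipate is establishing the Pfaffian form of $\prod_{1\le j<k\le N}\frac{x_k-x_j}{x_k+x_j}$ cleanly and keeping the sign/parity bookkeeping correct in the odd-$N$ bordered case; the $\theta$-deformation itself enters only benignly through the shifted monomial exponents $\theta(j-1)$, so once the Cauchy-Pfaffian identity is in place the de~Bruijn integration is routine and the $\theta$-dependence is carried transparently into the exponents of $I_{j,k}^B$ and $i_j^B$.
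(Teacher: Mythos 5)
Your proposal is correct and follows essentially the same route as the paper's proof: Schur's Pfaffian identity for $\prod_{1\le j<k\le N}\frac{x_k-x_j}{x_k+x_j}$ (bordered by a row of ones for odd $N$), combined with de Bruijn's integration formula applied to this Pfaffian times the Vandermonde determinant $\det[x_j^{\theta(k-1)}]_{j,k=1}^N$, the border entries producing the $i_j^B$. The only slip is notational: your displayed de Bruijn formula with one common weight $w$ is degenerate (both sides vanish identically, since $\iint f(x,y)\,w(x)\,w(y)\,dx\,dy=0$ for antisymmetric $f$); the matrix entries must carry the index-dependent weights, $\iint f(x,y)\,w_j(x)\,w_k(y)\,dx\,dy$ with $w_j(x)=x^{a+\theta(j-1)}e^{-x}$, which is exactly what your diagonal-reduction and entry-by-entry integration in fact produce.
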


\begin{proof}
The Schur's Pfaffian identity tells us
\begin{equation}\label{Sc}
\Pf(A):=\prod_{1\leq j<k\leq N}\frac{x_k-x_j}{x_k+x_j}=
\left\{\begin{array}{cc}
\Pf(a_{j,k})_{j,k=1}^N& \text{for $N$ even},\\
\Pf\left(\begin{array}{cc}
0& \mathds{1}_N\\
-\mathds{1}_N^T&(a_{j,k})_{j,k=1}^N
\end{array}
\right)&\text{for $N$ odd},
\end{array}
\right.
\end{equation}
where $\mathds{1}_N$ is a $N$ dimensional row vector of elements $1$ and $a_{j,k}=\frac{x_k-x_j}{x_k+x_j}$. Here we also use the de Bruijn's notation
\cite{dB55} 
 $\Pf(A)$ to denote the Pfaffian of an anti-symmetric matrix of even and odd order. The stated formulas now follow by applying de Bruijn's 
 integration formula when the integrand consists of the product of a Pfaffian and a determinant 
 \cite{dB55} (see also the review \cite{Fo18}).
 \end{proof}
 
%  \eqref{eq:burespartition} is equal to 
%\begin{align*}
%Z_N^B(a;\theta)=\left\{\begin{array}{cc}
%\Pf(I^B_{j,k})_{j,k=1}^N& \text{for $N$ even},\\
%\Pf\left(\begin{array}{cc}
%0& {i}^B_j\\
%-{i}^B_k& I^B_{j,k}\end{array}
%\right)_{j,k=1}^N&\text{for $N$ odd},
%\end{array}
%\right.
%\end{align*}
%where 
%\begin{align}\label{eq:momentsb}
%I_{j,k}^B=\int_{\mathbb{R}_+^2}\frac{y-x}{y+x}x^{a+\theta(j-1)}y^{a+\theta(k-1)}e^{-(x+y)}dxdy,\quad i_j^B=\int_{\mathbb{R}_+}x^{a+\theta(j-1)}e^{-x}dx.
%\end{align}

One can evaluate this partition function by computing this Pfaffian, however, here we compute this partition function through its relation
to  the partition function of $\theta$-deformation Cauchy two-matrix model. 
\begin{proposition}\label{prop:partition}
We have
\begin{align*}
(Z_N^B(a;\theta))^2=2^NZ_N^C(a,a+1;\theta).
\end{align*}
\end{proposition}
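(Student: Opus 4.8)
The plan is to prove the Pfaffian-squared identity $(Z_N^B(a;\theta))^2 = 2^N Z_N^C(a,a+1;\theta)$ by reducing both sides to the same moment determinant/Pfaffian data. The key structural observation is that the Cauchy two-matrix partition function with parameters $(a,a+1)$ is exactly the specialization at which the determinantal Cauchy structure collapses onto an antisymmetric (Pfaffian-type) structure, as already foreshadowed in equation (\ref{eq:pfsquare}) of the excerpt. So I would first record that by the earlier Lemma, $Z_N^C(a,a+1;\theta) = \det(I_{j,k}(a,a+1;\theta))_{j,k=1}^N$, and from (\ref{eq_i}) that
\begin{equation*}
I_{j,k}(a,a+1;\theta) = \frac{\Gamma(a+\theta(j-1)+1)\,\Gamma(a+\theta(k-1)+2)}{2+2a+\theta(j+k-2)}.
\end{equation*}
On the Bures side, the previous Lemma gives $Z_N^B(a;\theta)$ as $\Pf(I^B_{j,k})$ (for $N$ even, with the obvious bordered version for $N$ odd), where $I^B_{j,k}$ is the antisymmetric moment in (\ref{eq:momentsb}).

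Next I would make the bridge explicit by symmetrizing the Cauchy moment. The integrand $x^\mu y^\nu/(x+y)$ for the Cauchy model is not antisymmetric, but its antisymmetrization under $x\leftrightarrow y$ produces exactly the Bures kernel $(y-x)/(x+y)$: writing
\begin{equation*}
I^B_{j,k} = \int_{\mathbb{R}_+^2}\frac{y-x}{y+x}\,x^{a+\theta(j-1)}y^{a+\theta(k-1)}e^{-(x+y)}\,dxdy,
\end{equation*}
one can check by the algebraic identity $\frac{y-x}{y+x} = \frac{2y}{x+y}-1 = 1-\frac{2x}{x+y}$ that $I^B_{j,k}$ is an explicit linear combination of the $I_{\cdot,\cdot}(a,\cdot;\theta)$ moments and of the products $i^B_j i^B_k$ of the one-dimensional Gamma integrals in (\ref{eq:momentsb}). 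Concretely I expect $I^B_{j,k} = 2\,I_{k,j}(a,a+1;\theta) - i^B_j i^B_k$ (after using the symmetry of the double integral), so that the antisymmetric matrix $(I^B_{j,k})$ decomposes as $2M - \mathbf{i}\,\mathbf{i}^T + (\text{its transpose correction})$ where $M_{j,k}=I_{k,j}(a,a+1;\theta)$. The crux is that the Schur/de Bruijn Pfaffian structure of the Bures side matches the Cauchy double-alternant determinant structure after this rewriting.

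The cleanest route to the squaring is then the classical identity $\Pf(A)^2 = \det(A)$ for an antisymmetric matrix, combined with a rank-one-perturbation bookkeeping. I would compute $(Z_N^B(a;\theta))^2 = \det(I^B_{j,k})_{j,k=1}^N$ and show this determinant equals $2^N\det(I_{j,k}(a,a+1;\theta))_{j,k=1}^N = 2^N Z_N^C(a,a+1;\theta)$. To do so, substitute the decomposition $I^B = 2M^{\mathrm{anti}}$ where $M^{\mathrm{anti}}$ is the antisymmetric part induced by the $\frac{y-x}{y+x}$ kernel; the factor $2^N$ arises from pulling the scalar $2$ out of each of the $N$ rows (or from the factor of $2$ in $\frac{y-x}{y+x}=2\cdot\frac{y}{x+y}-1$). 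The parity split (the bordered Pfaffian for odd $N$) must be handled separately, but the odd case reduces to the even one upon noting that the border vector $i^B_j$ is precisely the relic of the same Gamma integrals, and the determinant/Pfaffian correspondence for bordered antisymmetric matrices absorbs it consistently.

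\textbf{The main obstacle} I anticipate is verifying that the rank-one correction terms (the $i^B_j i^B_k$ pieces coming from the ``$-1$'' in $\frac{y-x}{y+x}=\frac{2y}{x+y}-1$) do not contribute to the determinant — that is, showing $\det(2M^{\mathrm{anti}})=\det(I^B)$ exactly rather than up to lower-order terms. This should follow because $\mathbf{i}\,\mathbf{i}^T$ is symmetric while we are taking the determinant of an antisymmetric matrix, so any symmetric rank-one piece either cancels in the antisymmetrization or contributes a perfect square that is absorbed; but confirming the precise constant $2^N$ and that no stray Gamma-factor discrepancy appears between $\Gamma(a+\theta(k-1)+2)$ on the Cauchy side and the Bures normalization will require care. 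I would sanity-check the whole computation against the known $\theta=1$ case in \cite{forrester2016}, where the identity $(Z_N^B)^2 = 2^N Z_N^C(a,a+1)$ is established, to fix the normalization and confirm the parity arguments carry over verbatim to general $\theta>0$.
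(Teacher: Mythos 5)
Your proposal follows essentially the same route as the paper: both hinge on the moment relation $2I^{C}_{j,k}(a,a+1;\theta)=I^{B}_{j,k}(a;\theta)+i^{B}_{j}i^{B}_{k}$, obtained from the algebraic identity $\tfrac{y-x}{y+x}=\tfrac{2y}{x+y}-1$, and then defer to the rank-one/Pfaffian determinant bookkeeping (matrix determinant lemma for an antisymmetric matrix, bordered Pfaffian in the odd case) exactly as in the cited $\theta=1$ works of Bertola et al.\ and Forrester--Kieburg. One small correction: the relation involves $I_{j,k}(a,a+1;\theta)$, not $I_{k,j}(a,a+1;\theta)$ — since $I_{j,k}(a,a+1;\theta)+I_{k,j}(a,a+1;\theta)=i^{B}_{j}i^{B}_{k}$, your stated version is off by a transpose/sign — but this is harmless for the determinant argument.
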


\begin{proof}
This follows from the fact that the 
moments of $\theta$-deformation Cauchy two-matrix model and Bures ensemble are of the form
\begin{align*}
I_{j,k}^C(a,a+1;\theta)&=\int_{\mathbb{R}_+^2}\frac{1}{x+y}x^{a+\theta(j-1)}y^{a+1+\theta(k-1)}e^{-(x+y)}dxdy,\\
I_{j,k}^B(a;\theta)&=\int_{\mathbb{R}_+^2}\frac{y-x}{y+x}x^{a+\theta(j-1)}y^{a+\theta(k-1)}e^{-(x+y)}dxdy,
\end{align*}
and so they are connected with each other by the relation
\begin{align}\label{eq:momentsrelation}
2I_{j,k}^C(a,a+1;\theta)=I_{j,k}^B(a;\theta)+i_j^B(a;\theta)i_k^B(a;\theta).
\end{align}
From here we proceed as in the working given in \cite{bertola2009} in the case $\theta = 1$ to deduce the result.
\end{proof}

\begin{corollary}
We have the evaluation
\begin{align*}
Z_N^B(a;\theta)=\frac{{\pi}^{\frac{N}{2}}}{2^{\theta N^2+2aN-(\theta-1)N}}\times \frac{\Gamma(j+1)\Gamma(2\hat{\beta}+j+2)}{\Gamma(\theta(j+\hat{\beta}+1)+1/2)}\times\sqrt{
\frac{\Gamma(2\theta(\hat{\beta}+j+1))\Gamma(2\theta(\hat{\beta}+j+1)+1)}{\Gamma(2(\hat{\beta}+j+1))\Gamma(2(\hat{\beta}+j+1)+1)}}
\end{align*}
with $\hat{\beta}=\frac{a+1}{\theta}-1$.
\end{corollary}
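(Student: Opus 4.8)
The plan is to combine Proposition \ref{prop:partition}, which gives $(Z_N^B(a;\theta))^2 = 2^N Z_N^C(a,a+1;\theta)$, with the explicit product formula for the Cauchy partition function from Lemma \ref{L1}. First I would specialise equation (\ref{eq:partition}) to the case $b = a+1$, so that $\beta = (1+a+b)/\theta = (2a+2)/\theta = 2\hat\beta + 2$ with $\hat\beta = (a+1)/\theta - 1$. This substitution gives
\begin{align*}
Z_N^C(a,a+1;\theta) = \prod_{j=1}^N \Gamma(a+\theta(j-1)+1)\Gamma(a+2+\theta(j-1)) \times \frac{1}{\theta^N}\Big(\prod_{l=1}^{N-1}l!\Big)^2 \prod_{k=1}^N \frac{(2\hat\beta+k)!}{(2\hat\beta+k+N)!}.
\end{align*}
The target formula is then $Z_N^B(a;\theta) = \sqrt{2^N Z_N^C(a,a+1;\theta)}$, so the core of the work is to massage this square root into the stated product over $j$ of Gamma-function ratios.

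The key technical step is to handle the square root of a product of Gamma functions by recognising duplication structure. I would use the Legendre duplication formula $\Gamma(2z) = \pi^{-1/2} 2^{2z-1}\Gamma(z)\Gamma(z+1/2)$ to rewrite the various factorials and Gamma functions with arguments involving $\theta(j+\hat\beta+1)$ and $2\theta(\hat\beta+j+1)$. The appearance of $\Gamma(\theta(j+\hat\beta+1)+1/2)$ in the denominator of the claimed answer, together with the paired square-root factors $\sqrt{\Gamma(2\theta(\hat\beta+j+1))\Gamma(2\theta(\hat\beta+j+1)+1)/\Gamma(2(\hat\beta+j+1))\Gamma(2(\hat\beta+j+1)+1)}$, strongly signals that duplication is the right tool: the half-integer shift $+1/2$ is exactly what duplication produces. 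I would reindex the products so that the factors $(2\hat\beta+k)!/(2\hat\beta+k+N)!$ and the $\Gamma(a+\theta(j-1)+1)\Gamma(a+2+\theta(j-1))$ pair up term-by-term into quantities whose square roots are single Gamma functions.

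The prefactor $2^{\theta N^2 + 2aN - (\theta-1)N}$ and $\pi^{N/2}$ must emerge by collecting all the powers of $2$ and $\pi$ generated by repeated use of duplication together with the explicit $2^N$ from Proposition \ref{prop:partition} and the $\theta^{-N}$ from (\ref{eq:partition}). I would track these scalar factors separately: each duplication application contributes a power of $2$ linear in its argument and a factor $\pi^{-1/2}$, so summing the arguments over $j = 1,\dots,N$ produces a quadratic-in-$N$ exponent for $2$, consistent with the $\theta N^2$ term, and an overall $\pi^{N/2}$. The main obstacle I anticipate is purely bookkeeping: correctly aligning the index shifts between the factorial ratios (which telescope over a window of length $N$) and the per-$j$ Gamma factors, and ensuring that the half-integer and integer arguments match up so that taking the square root yields exactly one Gamma function per index rather than leaving residual square roots. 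Verifying the $\theta = 1$ case against the known Bures partition function of \cite{forrester2016} would serve as a useful consistency check at the end.
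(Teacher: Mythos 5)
Your proposal is correct and is essentially identical to the paper's own (one-sentence) proof, which likewise inserts the evaluation (\ref{eq:partition}) specialised to $b=a+1$ into Proposition \ref{prop:partition} and then "simplifies using gamma function identities" --- the Legendre duplication formula you name is exactly the identity needed to resolve the square root, producing the half-integer shift $\Gamma(\theta(j+\hat\beta+1)+1/2)$ and the powers of $2$ and $\pi$. Your term-by-term pairing would also make explicit that the stated right-hand side is missing a product $\prod_j$ over the free index $j$ (a typo in the paper), and your implicit choice of the positive square root is justified since the Bures integrand $\prod_{j<k}(x_k-x_j)(x_k^\theta-x_j^\theta)/(x_k+x_j)$ is non-negative.
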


\begin{proof}
We make use of  the partition function of the evaluation of the $\theta$-deformation Cauchy two-matrix model partition function \eqref{eq:partition}
in Proposition  \ref{prop:partition} and simplify using gamma function identities.
\end{proof}

%Form this proposition and the partition function of $\theta$-deformed Cauchy two-matrix model \eqref{eq:partition}, we know the partition function of $\theta$-deformed Bures ensemble can be written as 
%\begin{align*}
%Z_N^B(a;\theta)=\frac{{\pi}^{\frac{N}{2}}}{2^{\theta N^2+2aN-(\theta-1)N}}\times \frac{\Gamma(j+1)\Gamma(2\hat{\beta}+j+2)}{\Gamma(\theta(j+\hat{\beta}+1)+1/2)}\times\sqrt{
%\frac{\Gamma(2\theta(\hat{\beta}+j+1))\Gamma(2\theta(\hat{\beta}+j+1)+1)}{\Gamma(2(\hat{\beta}+j+1))\Gamma(2(\hat{\beta}+j+1)+1)}}
%\end{align*}
%with $\hat{\beta}=\frac{a+1}{\theta}-1$ and here we need to use the formula $\Gamma(z)\Gamma(z+\frac{1}{2})=2^{1-2z}\pi^{\frac{1}{2}}\Gamma(2z)$. Obviously, when $\theta\to 1$ and $\hat{\beta}\to a$, this equation goes back to the original partition function of Bures ensemble.
\subsection{Average of characteristic polynomials------$\theta$-deformation partial-skew-orthogonal polynomials}
This section relies on results from the recent work  \cite{chang2018}.

Let's firstly introduce a skew-symmetric inner product
\begin{align*}
\langle x^i,y^j\rangle_{\text{Bures}}=\int_{\mathbb{R}_+^2}\frac{y-x}{y+x}x^{i\theta+a}y^{j\theta+a+1}e^{-(x+y)}dxdy,
\end{align*}
then follow the definition in \cite{chang2018}, we can define a family of monic polynomials $\{\phi_n(z)\}_{n=0}^\infty$ by considering the condition
\begin{align*}
\langle \phi_{2n}(z),z^i\rangle_{\text{Bures}}=\frac{Z_{2n+2}^B}{Z_{2n}^B}\delta_{i,2n+1},\quad \langle \phi_{2n+1},z^j\rangle_\B=-\frac{Z_{2n+1}^B}{Z_{2n}^B}i_j^B,\quad 0\leq j\leq 2n+1.
\end{align*}
Solving this linear system, one knows $\phi_{2n}(z)$ and $\phi_{2n+1}(z)$ also admit determinantal expressions
\begin{align*}
\phi_{2n}(z)&=\frac{1}{\det\left(I_{j,k}^B\right)_{j,k=0}^{2n-1}}\det\left(
\begin{array}{c}
I_{j,k}^B\\z^k
\end{array}
\right)_{\substack{j=0,\cdots,2n\\ k=0,\cdots,2n+1}}\\
\phi_{2n+1}(z)&=\frac{1}{\det\left(I_{j,k}^B|i_j^B\right)_{\substack{j=0,\cdots,2n+1\\k=0,\cdots,2n}}}\det\left(\begin{array}{cc}
I_{j,k}^B&-i_j^B\\
z^k&0
\end{array}\right)_{j,k=0}^{2n+1}.
 \end{align*}
 
 It is a remarkable fact that the determinant expressions admit the Pfaffian form
 \begin{align*}
\phi_{2n}(z)&=\frac{1}{Z_{2n}^B}\Pf\left(
\begin{array}{cc}
I_{j,k}^B&z^k\\
-z^j&0
\end{array}
\right)_{j,k=0}^{2n}\\
\phi_{2n+1}(z)&=\frac{1}{Z_{2n+1}^B}\Pf\left(
\begin{array}{ccc}
0&i_j^B&0\\
-i_k^B&I_{j,k}^B&z^k\\
0&-z^j&0
\end{array}
\right)_{j,k=0}^{2n+1}
\end{align*}
where the moments $I_{j,k}^B$ and $i_j^B$ are the same as defined in \eqref{eq:momentsb}.
This can be established by employing the Jacobi identity for determinants; see  \cite{chang2018} for details. From the Pfaffian form,
consideration of the de Bruijn formula \cite{dB55} shows that
\begin{align*}
\phi_N(z)=\frac{1}{Z_N^B}\int_{\mathbb{R}_+^N}\prod_{1\leq j<k\leq N}\frac{x_k-x_j}{x_k+x_j}(x_k^\theta-x_j^\theta)\prod_{j=1}^N(z-x_j^\theta)x_j^ae^{-x_j}dx_j.
\end{align*}
The fact that for $A$ an even dimensional anti-symmetric matrix ${\rm Pf} \, A = (\det A)^2$ it also follows from the Pfaffian form that
a unified determinant expression of the square of the Bures polynomials can be given,
\begin{align*}
\phi_n^2(z)=\frac{1}{(Z_n^B)^2}\det\left(\begin{array}{ccc}
1&i_j^B&0\\
-i_k^B&I_{j,k}^B&z^k\\
0&-z^j&0\end{array}
\right)_{j,k=0}^n.
\end{align*}

If we consider the $\theta$-deformation Cauchy bi-orthogonal polynomials ($\theta$-CBOPs) with special inner product
\begin{align*}
I_{j,k}^C=\langle x^i,y^j\rangle_\C=\int_{\mathbb{R}_+^2}\frac{1}{x+y}x^{j\theta+a}y^{k\theta+a+1}e^{-(x+y)}dxdy,
\end{align*}
then we can define the monic $\theta$-CBOPs 
\begin{align*}
\tilde{P}_n(x)=\frac{1}{Z_n^C}\det\left(
I_{j,k}^C\,\,\,x^j
\right)_{\substack{j=0,\cdots,n\\k=0,\cdots,n-1}},\quad\tilde{Q}_n(y)=\frac{1}{Z_n^C}\det\left(
\begin{array}{c}
I_{j,k}^C\\
y^k
\end{array}
\right)_{\substack{j=0,\cdots,n-1\\k=0,\cdots,n}},
\end{align*}
obeying the bi-orthogonal relation $$\langle \tilde{P}_n(x),\tilde{Q}_m(y)\rangle_\C=\frac{Z_{n+1}^C}{Z_n^C}\delta_{n,m}.$$
Furthermore, the $\theta$-CBOPs and $\theta$-Bures polynomials can be related to each other.
\begin{proposition}\label{prop:polynomials}
There exist linear relations between $\theta$-CBOPs and $\theta$-Bures polynomials,
\begin{enumerate}
\item $\tilde{P}_n(x)+\tilde{Q}_n(x)=2\phi_n(x)$;
\item $\tilde{Q}_n(x)=\phi_n(x)-\frac{Z_{n+1}^BZ_{n-1}^B}{(Z_n^B)^2}\phi_{n-1}(x)$;
\item $\tilde{P}_n(x)=\phi_n(x)+\frac{Z_{n+1}^BZ_{n-1}^B}{(Z_n^B)^2}\phi_{n-1}(x)$.
\end{enumerate}
\end{proposition}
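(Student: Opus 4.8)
The plan is to reduce everything to the single moment identity \eqref{eq:momentsrelation}, which I rewrite as $2I_{j,k}^C = I_{j,k}^B + i_j^B i_k^B$, together with Proposition \ref{prop:partition} in the form $Z_n^C = 2^{-n}(Z_n^B)^2$. Since relation (1) is the sum of (2) and (3), while (3) minus (2) reads $\tilde P_n - \tilde Q_n = 2\tfrac{Z_{n+1}^B Z_{n-1}^B}{(Z_n^B)^2}\phi_{n-1}$, it suffices to establish the two combinations $\tilde P_n + \tilde Q_n$ and $\tilde P_n - \tilde Q_n$; the three stated relations then follow by addition and subtraction. I would treat $\tilde Q_n$ in full and then obtain $\tilde P_n$ by transposing its defining determinant: transposition preserves the rank-one part $i_j^B i_k^B$ but reverses the sign of the antisymmetric part $I_{j,k}^B$ (as $(I^B)^{\mathrm T}=-I^B$), which is exactly what flips the sign of the $\phi_{n-1}$ contribution between (2) and (3) while leaving the $\phi_n$ term untouched.

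First I would insert $2I_{j,k}^C = I_{j,k}^B + i_j^B i_k^B$ into the monic determinant for $\tilde Q_n$. Multiplying each of the $n$ moment rows by $2$ produces a factor $2^n$, which combines with $1/Z_n^C$ to give exactly $1/(Z_n^B)^2$ by Proposition \ref{prop:partition}. The resulting determinant has each moment row equal to $I_{j,\bullet}^B$ plus $i_j^B$ times the fixed ``defect'' row $(i_0^B,\dots,i_n^B)$. Expanding by multilinearity in these rows, every term containing two or more copies of the defect row vanishes (two proportional rows), so only two contributions survive: the pure antisymmetric determinant $T_0$ with rows $I_{j,k}^B$ capped by the monomial row $(y^k)$, and a single-defect piece $D=\sum_p i_p^B\det(\cdots)$ which, after a routine cofactor manipulation, is itself the determinant of the block $I^B$ bordered by the defect vector $i^B$ and by the monomial row.

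The heart of the argument is then to identify $T_0$ and $D$ with $\phi_n$ and $\phi_{n-1}$, and this is where the determinant/Pfaffian dictionary of \cite{chang2018} enters. The identification is parity-dependent and matches the even/odd split in the Pfaffian formulas: for $n$ even the plain bordered determinant $T_0$ reproduces the even (Pfaffian) form of $\phi_n$, whereas $D$ has vanishing top-degree coefficient --- an odd-dimensional antisymmetric minor --- and reproduces the $i^B$-bordered form of $\phi_{n-1}$; for $n$ odd the roles of $T_0$ and $D$ are interchanged. The normalisations are pinned down by $Z_n^B=\Pf(I_{j,k}^B)$ and $\det(I^B)=(\Pf I^B)^2$, and the precise coefficient $Z_{n+1}^B Z_{n-1}^B/(Z_n^B)^2$ multiplying $\phi_{n-1}$ should emerge from a Desnanot--Jacobi (Pfaffian minor) identity relating three consecutive normalisers.

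I expect the main obstacle to be precisely this conversion step: maintaining the even/odd bookkeeping and verifying that the non-principal minors appearing in the expansions of $T_0$ and $D$ collapse, via the Jacobi and de Bruijn identities, onto the principal-minor Pfaffians that define $\phi_n,\phi_{n-1}$ with the stated constant. As an independent cross-check I would work at the level of the bilinear forms: \eqref{eq:momentsrelation} is equivalent to $2\langle f,g\rangle_\C = \langle f,g\rangle_\B + \ell(f)\ell(g)$ with $\ell(x^m)=i_m^B$, and one can confirm that the claimed monic combinations satisfy the defining orthogonality and skew-orthogonality conditions for $\tilde Q_n$ and $\phi_n$ degree by degree, the residual terms being forced proportional to $i_j^B$ in exactly the parity pattern dictated by the partial-skew-orthogonality of $\phi_n$.
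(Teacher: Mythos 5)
Your proposal is correct and follows essentially the same route as the paper's proof: the paper likewise substitutes the moment identity $2I_{j,k}^C=I_{j,k}^B+i_j^Bi_k^B$ into the bordered determinant for $Z_n^C\tilde Q_n$, splits it into exactly two surviving pieces (your $T_0$ and $D$ are its $B_n$ and $A_n$), and identifies them, in the same parity-dependent way, with $(Z_n^B)^2\phi_n$ and $Z_{n+1}^BZ_{n-1}^B\phi_{n-1}$ via the determinant/Pfaffian expressions from \cite{chang2018}. The only differences are organisational --- the paper proves (1) and (2) by two different splittings of the same bordered determinant and obtains (3) as a linear combination, whereas you prove (2), get (3) by the transposition/antisymmetry argument, and sum to get (1) --- which amounts to the same computation.
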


\begin{proof}
To prove the first equality, we have to use the relation between the moments of $\theta$-CBOPs and $\theta$-Bures polynomials, from which we know
\begin{align*}
Z_n^C\tilde{Q}_n(x)=\left|\begin{array}{cccc}
I_{0,0}^C&\cdots&I_{0,n}^C\\
\vdots&&\vdots\\
I_{n-1,0}^C&\cdots&I_{n-1,n}^C\\
1&\cdots&x^n\end{array}
\right|&=\frac{1}{2^n}\left|
\begin{array}{cccc}
1&0&\cdots&0\\
-i^B_0&2I_{0,0}^C&\cdots&2I_{0,n}^C\\
\vdots&\vdots&&\vdots\\
-i^B_{n-1}&2I_{n-1,0}^C&\cdots&2I_{n-1,n}^C\\
0&1&\cdots&x^n
\end{array}
\right|\\&=\frac{(-1)^n}{2^n}\left|\begin{array}{ccccc}
1&i_0^B&\cdots&i^B_{n-1}&0\\
i^B_0&I_{0,0}^B&\cdots&I_{0,n-1}&1\\
\vdots&\vdots&&\vdots&\vdots\\
i^B_n&I_{n,0}^B&\cdots&I_{n,n-1}^B&x^n
\end{array}
\right|.
\end{align*}
For $n$ even, we know the above equation is equal to
\begin{align*}
\frac{1}{2^n}\left(
\left|\begin{array}{ccccc}
-1&i^B_0&\cdots&i^B_{n-1}&0\\
i^B_0&I_{0,0}^B&\cdots&I_{0,n-1}^B&1\\
\vdots&\vdots&&\vdots&\vdots\\
i^B_n&I_{n,0}^B&\cdots&I_{n,n-1}^B&x^n
\end{array}
\right|+\left|
\begin{array}{ccccc}
2&0&\cdots&0&0\\
i^B_0&I_{0,0}^B&\cdots&I_{0,n-1}^B&1\\
\vdots&\vdots&&\vdots&\vdots\\
i^B_n&I_{0,0}^B&\cdots&I_{n,n-1}^B&x^n
\end{array}
\right|
\right)=-Z_n^C\tilde{P}_n(x)+\frac{(Z_{n}^B)^2}{2^{n-1}}\phi_n(x),
\end{align*}
and for $n$ odd, one knows the above equality is equal to
\begin{align*}
\frac{-1}{2^n}\left(
\left|\begin{array}{ccccc}
1&-i^B_0&\cdots&-i^B_{n-1}&0\\
i^B_0&I_{0,0}^B&\cdots&I_{0,n-1}^B&1\\
\vdots&\vdots&&\vdots&\vdots\\
i^B_n&I_{n,0}^B&\cdots&I_{n,n-1}^B&x^n
\end{array}
\right|+2\left|\begin{array}{ccccc}
0&i^B_0&\cdots&i^B_{n-1}&0\\
i^B_0&I_{0,0}^B&\cdots&I_{0,n-1}^B&1\\
\vdots&\vdots&&\vdots&\vdots\\
i^B_n&I_{n,0}^B&\cdots&I_{n,n-1}^B&x^n\end{array}
\right|
\right)=-Z_n^C\tilde{P}_n(x)+2Z_n^C\phi_n(x),
\end{align*}
which establishes the first equality.

For the second equality, note
\begin{align*}
Z_n^C\tilde{Q}_n(x)&=\left|\begin{array}{cccc}
I_{0,0}^C&\cdots&I_{0,n}^C\\
\vdots&&\vdots\\
I_{n-1,0}^C&\cdots&I_{n-1,n}^C\\
1&\cdots&x^n\end{array}
\right|=\frac{1}{2^n}\left|
\begin{array}{cccc}
1&0&\cdots&0\\
-i^B_0&2I_{0,0}^C&\cdots&2I_{0,n}^C\\
\vdots&\vdots&&\vdots\\
-i^B_{n-1}&2I_{n-1,0}^C&\cdots&2I_{n-1,n}^C\\
0&1&\cdots&x^n
\end{array}
\right|\\&=\frac{1}{2^n}\left(
\left|\begin{array}{cccc}
0&i^B_0&\cdots&i^B_n\\
-i^B_0&I_{0,0}^B&\cdots&I^B_{0,n}\\
\vdots&\vdots&&\vdots\\
-i^B_{n-1}&I_{n-1,0}^B&\cdots&I_{n-1,n}^B\\
0&1&\cdots&x^n
\end{array}
\right|+\left|\begin{array}{cccc}
1&0&\cdots&0\\
-i^B_0&I_{0,0}^B&\cdots&I_{0,n}^B\\
\vdots&\vdots&&\vdots\\
-i^B_{n-1}&I_{n-1,0}^B&\cdots&I_{n-1,n}^B\\
0&1&\cdots&x^n\end{array}
\right|
\right)\\
&:=\frac{1}{2^n}(A_n+B_n).
\end{align*}
For $n$ even, we know $A_n=Z_{n+1}^BZ_{n-1}^B\phi_{n-1}(x)$, $B_n=(Z_n^B)^2\phi_n(x)$ and for $n$ odd, we know $A_n=(Z_n^B)^2\phi_n(x)$ and $B_n=Z_{n+1}^BZ_{n-1}^B\phi_{n-1}(x)$. The second equality follows. Note the third equality is a linear combination of the first two equations; the details of
its derivation are omitted.
\end{proof}

\begin{remark}
The linear relation between the $\theta$-CBOPs and $\theta$-Bures polynomials
revealed in this proposition implies that if one computes the Pfaffian point process in terms of $\theta$-Bures polynomials, then this kernel can be expressed as $\theta$-CBOPs as well. This is the key to connecting these two different ensembles.
\end{remark}

\subsection{Connections between $\theta$-deformation Cauchy two-matrix model and $\theta$-deformation Bures ensemble}
We start with the generalised partition function of the $\theta$-deformation Bures ensemble
\begin{align}\label{eq:burespartition2}
Z_{k|l}^{(N,a,B)}(\kappa,\lambda)=\frac{1}{N!}\int_{\mathbb{R}_+^N}\prod_{1\leq j,k\leq N}\frac{x_k-x_j}{x_k+x_j}(x_k^\theta-x_j^\theta)\prod_{j=1}^N\frac{\prod_{i=1}^l(x_j^\theta-\lambda_i)}{\prod_{i=1}^k(x_j-\kappa_i)}{x_j^ae^{-x_j}}dx_j
\end{align}
(cf.~(\ref{eq:cauchypartition})).
Results from \cite{kieburg20102} tell us that this can be written in the Pfaffian form
\begin{align}\label{eq:bures1}
\begin{aligned}
Z_{k|l}^{(N,a,B)}(\kappa,\lambda)&=(-1)^{k(k-1)/2+l(l-1)/2}\frac{Z_{0|0}^{(\tN,a,B)}}{B_{k|l}(\kappa,\lambda)}\\&\times\Pf\left[
\begin{array}{c|c}
(\kappa_i-\kappa_j)\frac{Z_{2|0}^{(\tN+2,a,B)}(\kappa_i,\kappa_j)}{Z_{0|0}^{(\tN,a,B)}}&\frac{1}{\kappa_i-\lambda_j}\frac{Z_{1|1}^{(\tN,a,B)}(\kappa_i,\lambda_j)}{Z_{0|0}^{(\tN,a,B)}}\\\hline
\frac{1}{\lambda_i-\kappa_j}\frac{Z_{1|1}^{\tN,a,B}(\kappa_i,\lambda_j)}{Z_{0|0}^{(\tN,a,B)}}&(\lambda_i-\lambda_j)\frac{Z_{0|2}^{(\tN-2,a,B)}(\lambda_i,\lambda_j)}{Z_{0|0}^{(\tN,a,B)}}
\end{array}\right],\quad \text{for $k+l$ even}
\end{aligned}
\end{align}
\begin{align}\label{eq:bures2}
\begin{aligned}
&Z_{k|l}^{(N,a,B)}(\kappa,\lambda)=(-1)^{k(k-1)/2+l(l-1)/2}\frac{Z_{0|0}^{(\tN+1,a,B)}}{B_{k|l}(\kappa,\lambda)}\\&\times\Pf\left[
\begin{array}{c|c|c}
(\kappa_i-\kappa_j)\frac{Z_{2|0}^{(\tN+3,a,B)}(\kappa_i,\kappa_j)}{Z_{0|0}^{\tN+1,a,B}}&\frac{1}{\kappa_i-\lambda_j}\frac{Z_{1|1}^{\tN+1,a,B}(\kappa_i,\lambda_j)}{Z_{0|0}^{(\tN+1,a,B)}}&\frac{Z_{1|0}^{\tN+1,a,B}(\kappa_i)}{Z_{0|0}^{(\tN+1,a,B)}}\\\hline
\frac{1}{\lambda_i-\kappa_j}\frac{Z_{1|1}^{\tN+1,a,B}(\kappa_i,\lambda_j)}{Z_{0|0}^{(\tN+1,a,B)}}&(\lambda_i-\lambda_j)\frac{Z_{0|2}^{(\tN-1,a,B)}(\lambda_i,\lambda_j)}{Z_{0|0}^{(\tN+1,a,B)}}&\frac{Z_{0|1}^{(\tN-1,a,B)}(\lambda_i)}{Z_{0|0}^{\tN+1,a,B}}\\\hline
-\frac{Z_{1|0}^{\tN+1,a,B}(\kappa_i)}{Z_{0|0}^{(\tN+1,a,B)}}&-\frac{Z_{0|1}^{(\tN-1,a,B)}(\lambda_i)}{Z_{0|0}^{\tN+1,a,B}}&0
\end{array}
\right],\quad\text{for $k+l$ odd}.
\end{aligned}
\end{align}

Moreover, by generalising the moments relation between the $\theta$-deformation Cauchy ensemble and Bures ensemble \eqref{eq:momentsrelation}
according to the working of \cite[Appendix A]{forrester2016}, the partition function identity of Proposition \ref{prop:partition} can be similarly
generalised.

\begin{proposition}\label{P3.6}
We have
\begin{align*}
(Z_{k|l}^{(N,a,B)}(\kappa,\lambda))^2=2^NZ_{k|l;k|l}^{(N,a,a+1,B)}(\kappa,\lambda;\kappa,\lambda).
\end{align*}
\end{proposition}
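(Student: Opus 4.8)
The plan is to mirror the computation behind Proposition \ref{prop:partition} and the $\theta=1$ working of \cite[Appendix A]{forrester2016}, reducing both sides to one common family of deformed moments. First I would absorb the characteristic-polynomial insertions into a single modified weight $\hat\omega(x)=x^ae^{-x}\prod_{i=1}^l(x^\theta-\lambda_i)/\prod_{i=1}^k(x-\kappa_i)$. Because we specialise to $\kappa_1=\kappa_2=\kappa$ and $\lambda_1=\lambda_2=\lambda$, exactly the same rational factor is attached to the $x$- and $y$-variables on the Cauchy side; with $b=a+1$ the $y$-weight there is precisely $y\,\hat\omega(y)$ while the $x$-weight is $\hat\omega(x)$, whereas on the Bures side both integration variables carry $\hat\omega$.

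Next I would record the deformed moments $\hat I^B_{j,k}=\int_{\mathbb R_+^2}\frac{y-x}{y+x}x^{\theta(j-1)}y^{\theta(k-1)}\hat\omega(x)\hat\omega(y)\,dxdy$, $\hat i^B_j=\int_{\mathbb R_+}x^{\theta(j-1)}\hat\omega(x)\,dx$ and $\hat I^C_{j,k}=\int_{\mathbb R_+^2}\frac{1}{x+y}x^{\theta(j-1)}y^{\theta(k-1)}\hat\omega(x)\,y\hat\omega(y)\,dxdy$, and observe that the elementary identity $\frac{2y}{x+y}=1+\frac{y-x}{x+y}$ upgrades \eqref{eq:momentsrelation} to the deformed rank-one relation $2\hat I^C_{j,k}=\hat I^B_{j,k}+\hat i^B_j\hat i^B_k$, with no residual dependence on the inserted $\kappa,\lambda$.

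I would then express each partition function through these moments: the Cauchy double-alternant plus Andr\'eief reduction used to establish (\ref{normalisationconstant}) gives $Z_{k|l;k|l}^{(N,a,a+1,C)}(\kappa,\lambda;\kappa,\lambda)=\det(\hat I^C_{j,k})_{j,k=1}^N$, while the de Bruijn formula \cite{dB55} gives $Z_{k|l}^{(N,a,B)}(\kappa,\lambda)=\mathrm{Pf}(\hat I^B)$ for $N$ even and the bordered Pfaffian built from $\hat I^B$ and the vector $\hat v=(\hat i^B_j)$ for $N$ odd. Writing $2^N\det(\hat I^C)=\det(\hat I^B+\hat v\hat v^\top)$ and invoking the linear-algebra identity $\det(A+vv^\top)=\det A=(\mathrm{Pf}\,A)^2$ for even antisymmetric $A$ --- where $v^\top A^{-1}v=0$ because $A^{-1}$ is antisymmetric --- together with its odd-dimensional counterpart $\det(A+vv^\top)=\big(\mathrm{Pf}\big[\begin{smallmatrix}0&v^\top\\-v&A\end{smallmatrix}\big]\big)^2$, then yields $2^N Z^C=(Z^B)^2$ in both parities, the $2^N$ arising from pulling a factor $\tfrac12$ out of each of the $N$ rows.

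The main obstacle is not the algebra but justifying the two reductions in the presence of the denominator $\prod_i(x-\kappa_i)$: the deformed moments are well defined only after the contour and pole prescription of \cite{kieburg2010,kieburg20102} is imposed, and one must verify that the rank-one relation $2\hat I^C=\hat I^B+\hat v\hat v^\top$ remains compatible with the Pfaffian and determinantal building blocks appearing in \eqref{eq:bures1}--\eqref{eq:bures2} and in the structure from \cite{kieburg2010}, rather than merely with the bare polynomial moments. Granting the convergence and contour bookkeeping from those references, the identity follows exactly as in the $\theta=1$ case of \cite{forrester2016}.
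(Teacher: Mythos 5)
Your proposal is correct and takes essentially the same route as the paper: the paper's (sketched) proof likewise absorbs the characteristic-polynomial insertions into a modified weight, generalises the moment relation \eqref{eq:momentsrelation} to the rank-one identity $2\hat{I}^C_{j,k}=\hat{I}^B_{j,k}+\hat{i}^B_j\hat{i}^B_k$ following \cite[Appendix A]{forrester2016}, and then repeats the determinant/Pfaffian working of Proposition \ref{prop:partition}. The details you supply --- the Andr\'eief and de Bruijn reductions with the modified weight, and the identities $\det(A+vv^\top)=(\mathrm{Pf}\,A)^2$ (even $N$) and $\det(A+vv^\top)=\bigl(\mathrm{Pf}\bigl[\begin{smallmatrix}0&v^\top\\-v&A\end{smallmatrix}\bigr]\bigr)^2$ (odd $N$) --- are accurate fillings-in of what the paper delegates to the cited references.
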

As a consequence, we can compare the expressions \eqref{eq:bures1} and \eqref{eq:bures2} with \eqref{eq:pfsquare}, and obtain  linear relationship between these average of characteristic polynomials.

\begin{corollary}
Let $\hat{Z}^{(N,a,a+1,C)}$ be the mean value of the difference of the two variable sets of the $\theta$-deformed Cauchy-Laguerre ensemble
\begin{align*}
\hat{Z}^{(N,a,a+1,C)}&=\frac{1}{(N!)^2}\int_{\mathbb{R}^N\times\mathbb{R}^N}\frac{\prod_{1\leq j<k\leq N}(x_k-x_j)(y_k-y_j)}{\prod_{j,k=1}^N(x_j+y_k)}\prod_{1\leq j<k\leq N}(x_k^\theta-x_j^\theta)(y_k^\theta-y_j^\theta)\\&\times
\prod_{j=1}^N x_j^ae^{-x_j}y_j^{a+1}e^{-y_j}\sum_{j=1}^N(x_j-y_j)dx_jdy_j,
\end{align*}
We have
\begin{align*}
&\frac{Z_{2|0}^{(N,a,B)}(\kappa_1,\kappa_2)}{Z_{0|0}^{N,a,B}}=\frac{1}{\kappa_2-\kappa_1}\frac{\Zozoz{(N-1,a,a+1,B)}(\kappa_1;\kappa_2)-\Zozoz{(N-1,a,a+1,B)}(\kappa_2;\kappa_1)}{\hat{Z}^{(N-1,a,a+1,C)}},\\
&\frac{Z_{0|2}^{(N,a,B)}(\lambda_1,\lambda_2)}{Z_{0|0}^{(N,a,B)}}=\frac{1}{\lambda_1-\lambda_2}\frac{\Zzozo{(N+1,a,a+1,C)}(\lambda_1;\lambda_2)-\Zzozo{(N+1,a,a+1,C)}(\lambda_2;\lambda_1)}{\hat{Z}^{(N+1,a,a+1,C)}},\\
&\frac{Z_{1|1}^{(N,a,B)}(\kappa,\lambda)}{Z_{0|0}^{(N,a,B)}}=\frac{\Zzzoo{(N,a,a+1,C)}(\kappa,\lambda)+\Zoozz{(N,a,a+1,C)}(\kappa,\lambda)}{2\Zzzzz{(N,a,a+1,C)}},\\
&\frac{Z_{1|0}^{N,a,B}(\kappa)}{Z_{0|0}^{N,a,B}}=\frac{\Zzzoz{(N,a,a+1,C)}(\kappa)+\Zozzz{(N,a,a+1,C)}(\kappa)}{2\Zzzzz{(N,a,a+1,C)}},\\
&\frac{Z_{0|1}^{(N,a,B)}(\lambda)}{Z_{0|0}^{N,a,B}}=\frac{\Zzozz{(N+1,a,a+1,C)}(\lambda)-\Zzzzo{(N+1,a,a+1,C)}(\lambda)}{\hat{Z}^{(N+1,a,a+1,C)}}.
\end{align*}
\end{corollary}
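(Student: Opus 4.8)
The plan is to combine the identity of Proposition~\ref{P3.6}, $(Z_{k|l}^{(N,a,B)}(\kappa,\lambda))^2=2^N Z_{k|l;k|l}^{(N,a,a+1,C)}(\kappa,\lambda;\kappa,\lambda)$, with the two structured representations already in hand: the Pfaffian forms \eqref{eq:bures1}--\eqref{eq:bures2} for the Bures side and the antisymmetric-determinant form \eqref{eq:pfsquare} for the Cauchy side. First I would square the Bures Pfaffian using $\operatorname{Pf}(A)^2=\det A$ for even-dimensional antisymmetric $A$ (the bordered matrix of the odd case is again even-dimensional, so this applies uniformly), cancel the common Cauchy--Vandermonde factor $B_{k|l}^2(\kappa,\lambda)$, and invoke the base identity $(Z_{0|0}^{(\tN,a,B)})^2=2^{\tN}\Zzzzz{(\tN,a,a+1,C)}$ of Proposition~\ref{prop:partition} to collapse the scalar prefactors to an explicit power of two. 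This reduces the problem to equating two antisymmetric (Pfaffian) structures of the same size whose entries are, on one side, the Bures one- and two-variable partition functions $Z_{2|0}^{(\cdot,B)},Z_{1|1}^{(\cdot,B)},Z_{0|2}^{(\cdot,B)},Z_{1|0}^{(\cdot,B)},Z_{0|1}^{(\cdot,B)}$, and on the other the kernels $\hat K_{11},\hat K_{01},\hat K_{00},\hat K_{1},\hat K_{0}$ of \eqref{eq:pfsquare}, themselves built from the Cauchy functions $\Zozoz{},\Zzozo{},\Zzzoo{},\Zoozz{},\dots$.

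I would then read off the five identities block by block. The even cases \eqref{eq:bures1}/\eqref{eq:pfsquare} yield the first three: the $(\kappa,\kappa)$-diagonal block $(\kappa_i-\kappa_j)Z_{2|0}^{(\tN+2,B)}/Z_{0|0}^{(\tN,B)}$ is matched to $\hat K_{11}^{(\tN+1)}$ (relation~(1)), the $(\lambda,\lambda)$-block $(\lambda_i-\lambda_j)Z_{0|2}^{(\tN-2,B)}/Z_{0|0}^{(\tN,B)}$ to $\hat K_{00}^{(\tN-1)}$ (relation~(2)), and the off-diagonal $(\kappa,\lambda)$-block $\tfrac{1}{\kappa_i-\lambda_j}Z_{1|1}^{(\tN,B)}/Z_{0|0}^{(\tN,B)}$ to $\hat K_{01}^{(\tN)}$ (relation~(3)); the bordered odd case \eqref{eq:bures2}/\eqref{eq:pfsquare} supplies (4) and (5) through the border entries $Z_{1|0}^{(\cdot,B)}$ and $Z_{0|1}^{(\cdot,B)}$ paired with $\hat K_1$ and $\hat K_0$. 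Reinstating the original $N$ via $\tN=N+l-k$ explains the shifted Cauchy arguments ($N-1$ in (1), $N+1$ in (2) and (5)) appearing in the statement. The cleanest way to isolate a single entry, and to sidestep the fact that two antisymmetric matrices with equal Pfaffians need not agree entrywise, is to specialise $(k,l)$ to the minimal value that produces a $2\times2$ (or bordered $2\times2$) matrix --- $(2,0)$ for (1), $(0,2)$ for (2), $(1,1)$ for (3), and the minimal odd cases for (4), (5) --- so that the Pfaffian is a single entry and the comparison becomes unambiguous.

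I expect the genuine difficulty to lie not in the combinatorics of the matching but in the normalisation bookkeeping. Two features demand care. First, the square-root step introduces a sign ambiguity, which is removed once one fixes the sign by a small-$N$ or limiting check; this also forces one to compare Pfaffians rather than determinants, keeping the relation linear in the entries. Second, and decisively, the diagonal-block relations (1), (2) and the odd-border relation (5) carry a $\pm2$ shift between the dimension of the Bures $Z$ and that of the normalising $Z_{0|0}^{(\cdot,B)}$; tracking this shift through Proposition~\ref{prop:partition} shows that the naive normalisation $\Zzzzz{(\cdot,a,a+1,C)}$ must be replaced by the modified partition function $\hat Z$ carrying the insertion $\sum_j(x_j-y_j)$, whereas for the unshifted relations (3) and (4) the plain $2\Zzzzz{(N,a,a+1,C)}$ survives. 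The crux is therefore to establish that the combination of shifted Bures normalisations produced by the comparison collapses exactly to $\hat Z$; I would prove this from the generalised moment relation $2I_{j,k}^C=I_{j,k}^B+i_j^Bi_k^B$ of \eqref{eq:momentsrelation}, extended to the characteristic-polynomial insertions as in \cite[Appendix~A]{forrester2016}, and checking that the powers of two, the ratios of $Z_{0|0}^{(\cdot,B)}$, and the antisymmetrisation signs all reconcile is the one genuinely delicate calculation.
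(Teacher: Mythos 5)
Your proposal follows essentially the same route as the paper, which derives the corollary precisely by combining Proposition \ref{P3.6} with the Pfaffian representations \eqref{eq:bures1}--\eqref{eq:bures2} and the determinantal form \eqref{eq:pfsquare}, then reading off the entries (the paper states this comparison without further detail). In fact your write-up is more careful than the paper's: specialising to minimal $(k,l)$ to make the entrywise identification unambiguous, fixing the square-root sign, and isolating the genuinely delicate point --- that the ratio of shifted Bures normalisations arising from Propositions \ref{prop:partition} and \ref{P3.6} must be shown to equal $\hat{Z}^{(N\pm1,a,a+1,C)}$, via the moment relation \eqref{eq:momentsrelation} extended as in \cite[Appendix A]{forrester2016} --- are exactly the steps the paper leaves implicit.
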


\begin{remark}
The last two equations correspond to the results of Proposition \ref{prop:polynomials}.
\end{remark}

\subsection{Correlation function of $\theta$-deformation Bures ensemble}
Now we can compute the correlation function of $\theta$-deformation Bures ensemble in terms of the average of characteristic polynomials.

Let us firstly consider the $k$-point correlation function 
\begin{align*}
\rho^{(N,a,B)}_k(x_1,\cdots,x_k)=\frac{1}{(N-k)!Z_N^B}\int_{\mathbb{R}_+^{N-k}}\prod_{1\leq j<k\leq N}\frac{x_k-x_j}{x_k+x_j}(x_k^\theta-x_j^\theta)\prod_{j=k+1}^N x_j^ae^{-x_j}dx_j.
\end{align*}
If we specify the correlation function including ``self-energy'' term by the definition
\begin{align*}
\hat{\rho}_k^{(N,a,B)}(x):&=\frac{1}{Z_{0|0}^{(N,a,B)}}\frac{1}{N!}\int_{\mathbb{R}_+^N}\frac{x'_k-x'_j}{x'_k+x'_j}({x'}_k^\theta-{x'}_j^\theta)\prod_{j=1}^N{x_j^ae^{-x_j}}dx_j\prod_{j=1}^k\left(\frac{1}{N}\sum_{i=1}^N\delta(x_j-x'_i)
\right)\\
&=\frac{1}{Z_{0|0}^{(N,a,B)}}\lim_{\epsilon\to0}\sum_{L_j=\pm}\prod_{j=1}^k(\frac{L_j}{2\pi i N}\frac{\partial}{\partial \tilde{x}_j})Z_{k|k}^{(N,a,B)}(\tilde{x}+iL\epsilon,x)|_{\tilde{x}=x},
\end{align*}
then these two correlation functions relate to each other by
\begin{align*}
\hat{\rho}_k^{(N,a,B)}(x)=\frac{N!}{(N-k)!N^k}\rho_k^{(N,a,B)}+\text{lower order terms}
\end{align*}
(cf.~(\ref{LO})).
Moreover, from the relationship of Proposition \ref{P3.6}
\begin{align*}
(Z_{k|l}^{(N,a,B)}(\kappa,\lambda))^2=2^NZ_{k|l;k|l}^{(N,a,a+1,B)}(\kappa,\lambda;\kappa,\lambda),
\end{align*}
we know 
\begin{align*}
\hat{\rho}_1^{(N,a,B)}(z)&=\frac{1}{2}(\rho_{1,0}^{(N,a,a+1,C)}(z)+\rho_{0,1}^{(N,a,a+1,C)}(z)),\\
\hat{\rho}_2^{(N,a,B)}(z)&=\frac{1}{2}\big[\hat{\rho}_{2,0}^{(N,a,a+1,C)}(z_1,z_2)+\hat{\rho}_{1,1}^{(N,a,a+1,C)}(z_1,z_2)+\hat{\rho}_{1,1}^{(N,a,a+1,C)}(z_2,z_1)+\hat{\rho}_{0,2}^{(N,a,a+1,C)}(z_1,z_2)\\&-\frac{1}{2}(\rho_{1,0}^{N,a,a+1,C}(z_1)+\rho_{0,1}^{(N,a,a+1,C)}(z_1))(\rho_{1,0}^{N,a,a+1,C}(z_2)+\rho_{0,1}^{(N,a,a+1,C)}(z_2))\big],
\end{align*}
from which we can deduce following proposition.

\begin{proposition}
Let $z_1,\cdots,z_k\in\mathbb{R}_+$ be pairwise different. 
Introduce the notation 
\begin{align*}
&\Delta K_{11}^{(N,a,a+1,C)}(z_i;z_j)=\hat{K}_{11}^{(N,a,a+1,C)}(z_i;z_j)-\hat{K}_{11}^{(N,a,a+1,C)}(z_j;z_i),\\
&\sum K_{01}^{(N,a,a+1,C)}(z_i;z_j)=\hat{K}_{01}^{(N,a,a+1,C)}(z_i;z_j)+\hat{K}_{10}^{(N,a,a+1,C)}(z_i;z_j),\\
&\Delta K_{00}^{(N,a,a+1,C)}(z_i;z_j)=\hat{K}_{00}^{(N,a,a+1,C)}(z_i;z_j)-\hat{K}_{00}^{(N,a,a+1,C)}(z_j;z_i).
\end{align*}
Then the above correlation function can be expressed as a Pfaffian
\begin{align*}
\rho_k^{(N,a,B)}(z_1,\cdots,z_k)=&(-1)^{k(k-1)/2}\frac{N!}{(2N)^k(N-k)!}\\
&\times\Pf\left[\begin{array}{cc}
\Delta K_{11}^{(N,a,a+1,C)}(z_i;z_j)&\sum K_{01}^{(N,a,a+1,C)}(z_i;z_j)\\
-\sum K_{01}^{(N,a,a+1,C)}(z_i;z_j)&\Delta K_{00}^{(N,a,a+1,C)}(z_j;z_i)
\end{array}
\right]_{1\leq i,j\leq k}
\end{align*}
\end{proposition}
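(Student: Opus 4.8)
The plan is to promote the relations for $\hat\rho_1^{(N,a,B)}$ and $\hat\rho_2^{(N,a,B)}$ displayed just above to a formula valid for every $k$, and then to recognise the resulting expression as a single Pfaffian. The starting point is Proposition \ref{P3.6}, which gives $(Z_{k|k}^{(N,a,B)}(\kappa,\lambda))^2 = 2^N Z_{k|k;k|k}^{(N,a,a+1,C)}(\kappa,\lambda;\kappa,\lambda)$, so that the generalised Bures partition function is, up to the factor $2^{N/2}$, the Pfaffian of the antisymmetric matrix appearing in \eqref{eq:pfsquare}. Applying the self-energy operators $\prod_{j=1}^k(\tfrac{L_j}{2\pi i N}\partial_{\tilde x_j})$ to this square-root relation and summing over $L_j=\pm$ distributes, by the product rule, each of the $k$ derivatives onto one of the two factors; in the Cauchy two-matrix picture this is precisely the statement that each Bures coordinate $z_i$ may be realised either as an $x$-species or a $y$-species coordinate. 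Carrying this out and invoking the Corollary to rewrite the resulting generalised partition functions through the kernels $\hat K_{11}^{(N,a,a+1,C)},\hat K_{01}^{(N,a,a+1,C)},\hat K_{10}^{(N,a,a+1,C)},\hat K_{00}^{(N,a,a+1,C)}$, one obtains $\hat\rho_k^{(N,a,B)}(z)$ as the symmetrised sum, over all $2^k$ species assignments, of the Cauchy correlation functions $\hat\rho_{r,s}^{(N,a,a+1,C)}$ with $r+s=k$, minus the disconnected lower-order products; the $k=2$ case written above is the model for the general bookkeeping.

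Next I would substitute the determinantal form of Proposition \ref{prop:correlation} for each Cauchy correlation function, so that every term in the species sum becomes the determinant of a $2k\times 2k$ block matrix built from the four kernels above. The decisive step is then a determinant-to-Pfaffian collapse: summing these $2^k$ determinants, weighted by the symmetrisation coming from the $\sum_{L_j=\pm}$ operation, produces a single Pfaffian whose diagonal blocks are the antisymmetrised kernels $\Delta K_{11}^{(N,a,a+1,C)}(z_i;z_j)$ and $\Delta K_{00}^{(N,a,a+1,C)}(z_i;z_j)$ and whose off-diagonal block is the symmetrised kernel $\sum K_{01}^{(N,a,a+1,C)}(z_i;z_j)$. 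This is the same mechanism that converts the Cauchy partition function into the square of a Pfaffian in \eqref{eq:pfsquare}: the antisymmetry in the $(1,1)$ and $(2,2)$ blocks is forced by the differences $\hat K_{11}(z_i;z_j)-\hat K_{11}(z_j;z_i)$ and $\hat K_{00}(z_i;z_j)-\hat K_{00}(z_j;z_i)$, while the $x$/$y$ symmetry of the Cauchy kernel supplies the combination $\hat K_{01}+\hat K_{10}$ off the diagonal.

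Finally I would pass from the self-energy correlation $\hat\rho_k^{(N,a,B)}$ to the genuine correlation $\rho_k^{(N,a,B)}$ using the leading-order relation analogous to \eqref{LO}; this is what supplies the combinatorial prefactor $\tfrac{N!}{(2N)^k(N-k)!}$, with the powers of $2$ tracking the per-coordinate factor $\tfrac12$ in the species symmetrisation and the powers of $N$ coming from the self-energy normalisation. The overall sign $(-1)^{k(k-1)/2}$ is then produced by reordering the rows and columns of the $2k\times 2k$ matrix into the stated block form, exactly the sign already visible in \eqref{eq:bures1}.

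I expect the main obstacle to be the determinant-to-Pfaffian collapse in the second paragraph, and in particular verifying that the weights generated by $\sum_{L_j=\pm}$ together with the coincidence limit $\tilde x\to x$ conspire to give precisely the antisymmetrised and symmetrised kernels, with no leftover singular contributions from the poles $1/(\kappa_i-\lambda_j)$ and the Cauchy--Vandermonde factor $B_{k|k}(\kappa,\lambda)^{-1}$. Controlling these cancellations is the delicate part; it parallels the $\theta=1$ computation of \cite{forrester2016} but must be re-examined in the presence of the $\theta$-deformation, where the characteristic polynomials depend on $x_j^\theta$ rather than $x_j$.
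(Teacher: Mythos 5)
Your scaffolding assembles the right ingredients (Proposition \ref{P3.6}, the self-energy correlations, the $k=1,2$ species-sum relations, and the passage to $\rho_k^{(N,a,B)}$ via the analogue of \eqref{LO}), but the proof has a genuine gap at exactly the step you yourself flag: the ``determinant-to-Pfaffian collapse''. You assert that the weighted sum over the $2^k$ species assignments of the $k\times k$ determinants supplied by Proposition \ref{prop:correlation}, corrected by the disconnected lower-order products generated when the square relation $(Z^{B})^2=2^N Z^{C}$ is inverted, resums into the stated $2k\times 2k$ Pfaffian, and you justify this by analogy with \eqref{eq:pfsquare}. That analogy carries no force: \eqref{eq:pfsquare} is not obtained anywhere by summing determinants --- it is imported from the Kieburg--Guhr structure theory \cite{kieburg2010} specialised to $b=a+1$. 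For general $k$ the inversion of the square relation produces a full cumulant-type expansion (one term for every way of splitting the $k$ derivatives between the two factors of $(Z^{B})^2$), and showing that this expansion of determinants reorganises into a single Pfaffian is precisely the content of the proposition; it cannot be invoked as a known mechanism, and you give no argument for it.

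The paper's own route avoids this resummation entirely, and this is the essential difference. The generalized Bures partition function $Z_{k|k}^{(N,a,B)}$ already possesses the Pfaffian representations \eqref{eq:bures1} and \eqref{eq:bures2}, taken from \cite{kieburg20102}, and the corollary following Proposition \ref{P3.6} identifies each entry of those Pfaffians with the Cauchy-model quantities, i.e.\ with the symmetrised and antisymmetrised combinations that become $\Delta K_{11}$, $\sum K_{01}$ and $\Delta K_{00}$. The self-energy operators $\lim_{\epsilon\to 0}\sum_{L_j=\pm}\prod_{j=1}^k \frac{L_j}{2\pi i N}\frac{\partial}{\partial \tilde{x}_j}$ then act on a Pfaffian whose rows and columns are indexed by the individual variables, so the Pfaffian structure is \emph{inherited} entry by entry rather than reconstructed: the discontinuity and coincidence limits turn each entry into the corresponding kernel combination, while the prefactor $N!/((2N)^k(N-k)!)$ and the sign $(-1)^{k(k-1)/2}$ come from the normalisation in \eqref{LO}, the factor $\tfrac12$ per coordinate from Proposition \ref{P3.6}, and the block reordering, exactly as in the $\theta=1$ computation of \cite{forrester2016}. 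In the paper the $k=1,2$ identities you take as your engine are consequences and consistency checks of this structure, not its proof. To repair your argument you would either have to prove the collapse identity directly as a de Bruijn-type combinatorial lemma, or switch to the entry-wise route through \eqref{eq:bures1}--\eqref{eq:bures2}.
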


Hence the kernels of the correlation functions for $\theta$-deformation Bures ensemble are exactly expressed as the kernels of that for $\theta$-deformation Cauchy two-matrix model. Moreover, from the asymptotic behaviors of $\theta$-deformation Cauchy two-matrix model, one can write down the hard edge scaling limit of the $k$-point correlation function of the Bures ensemble.

\begin{proposition}
Let $z_1,\cdots,z_k\in\mathbb{R}_+$ be pairwise different. Then the hard edge scaling limit of the $k$-point correlation function of Bures ensemble is
\begin{align*}
\rho_k^{(\infty,a,B)}(z_1,\cdots,z_k)&=\lim_{N\to\infty}N^{-\frac{2k}{\theta}}\rho_k^{(N,a,B)}(\frac{z_1}{N^{\frac{2}{\theta}}},\cdots,\frac{z_k}{N^{\frac{2}{\theta}}})\\
&=\frac{(-1)^{k(k-1)/2}}{2^k}\Pf\left[
\begin{array}{cc}
\Delta K_{11}^{(\infty,a)}(z_i;z_j)&\sum K_{01}^{(\infty,a)}(z_i;z_j)\\
-\sum K_{01}^{(\infty,a)}(z_i;z_j)&\Delta K_{00}^{(\infty,a)}(z_j;z_i)
\end{array}
\right]_{1\leq i,j\leq k},
\end{align*}
where the kernels are
\begin{align*}
&\Delta K_{00}^{(\infty,a)}(z_i;z_j)=\lim_{N\to\infty}N^{-\frac{4(a+1)}{\theta}}\Delta K_{00}^{(N,a)}(\frac{z_i}{N^{\frac{2}{\theta}}},\frac{z_j}{N^{\frac{2}{\theta}}})\\
&\qquad=\theta\left(\int_0^1t^{\frac{2(a+1)}{\theta}-1}G^\theta_{\infty,a}(tz_i^\theta)G^\theta_{\infty,a}(tz_j^\theta)dt-|_{z_i\leftrightarrow z_j} \right),\\
&\Delta K_{11}^{(\infty,a)}(z_i;z_j)=\lim_{N\to\infty}N^{-\frac{4a}{\theta}}\Delta K_{11}^{(N,a)}(\frac{z_i}{N^{\frac{2}{\theta}}},\frac{z_j}{N^{\frac{2}{\theta}}})\\
&\qquad=\theta (z_iz_j)^{2a+1}\left(
\int_0^1t^{\frac{2(a+1)}{\theta}-1}\tilde{G}^\theta_{\infty,a}(tz_j^\theta)\tilde{G}^\theta_{\infty,a}(tz_i^\theta)dt-|_{z_i\leftrightarrow z_j} 
\right),\\
&\sum K_{01}^{(\infty,a)}(z_i;z_j)=\lim_{N\to\infty}N^{-\frac{2}{\theta}}\sum K_{01}^{(N,a)}(\frac{z_i}{N^{\frac{2}{\theta}}},\frac{z_j}{N^{\frac{2}{\theta}}})\\
&\qquad=\theta\left(
z_i^{2a+1}\int_0^1t^{\frac{2(a+1)}{\theta}-1}G^\theta_{\infty,a}(tz_i^\theta)\tilde{G}^\theta_{\infty,a+1}(ty^\theta)dt
+z_j^{2a+1}\int_0^1 t^{\frac{2(a+1)}{\theta}-1}\tilde{G}^\theta_{\infty,a}(tz_i^\theta)G^\theta_{\infty,a+1}(tz_j^\theta)dt
\right)
\end{align*}
with $G^\theta_{\infty,a}(x)$ and $\tilde{G}^\theta_{\infty,a}(x)$ defined in \eqref{eq:kernel}.
\end{proposition}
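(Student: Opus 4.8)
The plan is to take the hard edge scaling limit directly inside the finite-$N$ Pfaffian supplied by the preceding proposition. Denoting by $M^{(N)}$ the $2k\times2k$ antisymmetric matrix appearing there, that result reads
\begin{align*}
\rho_k^{(N,a,B)}(z_1,\dots,z_k)=(-1)^{k(k-1)/2}\frac{N!}{(2N)^k(N-k)!}\,\Pf\!\left[\begin{array}{cc}\Delta K_{11}^{(N)}(z_i;z_j)&\sum K_{01}^{(N)}(z_i;z_j)\\-\sum K_{01}^{(N)}(z_i;z_j)&\Delta K_{00}^{(N)}(z_j;z_i)\end{array}\right]_{1\le i,j\le k},
\end{align*}
where I abbreviate the $(N,a,a+1,C)$ kernels by a superscript $(N)$. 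First I would substitute $z_i\mapsto z_i/N^{2/\theta}$ throughout and carry along the global prefactor $N^{-2k/\theta}$, so that the task reduces to identifying the $N\to\infty$ limit of a single rescaled $2k\times 2k$ Pfaffian together with its scalar prefactors.

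Because $k$ is fixed, $\Pf$ is a fixed polynomial of degree $k$ in the $\binom{2k}{2}$ matrix entries, so convergence of the Pfaffian is an immediate consequence of entrywise convergence of its entries and requires no uniform control. The entrywise limits are already available: the hard edge limit of $K_{00}=K_N$ from the corollary to Proposition \ref{P210}, and those of $K_{01},K_{10},K_{11}$ from the hard edge limits of the Cauchy kernels established above, combined with the linearity of the operations $\Delta(\cdot)$ and $\sum(\cdot)$ and the explicit hat-prefactors $e^{\pm\,\bullet}\bullet^{a}$, produce exactly $\Delta K_{00}^{(\infty,a)}$, $\sum K_{01}^{(\infty,a)}$ and $\Delta K_{11}^{(\infty,a)}$, expressed through $G_{\infty,a}^\theta$ and $\tilde G_{\infty,a}^\theta$ as in \eqref{eq:kernel}.

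The delicate point is the bookkeeping of the powers of $N$, which I would organise through the homogeneity identity $\Pf(DAD)=\det(D)\,\Pf(A)$ for diagonal $D$. The natural choice is $D=\mathrm{diag}(N^{-2a/\theta}\mathbf 1_k,\,N^{2(a+1)/\theta}\mathbf 1_k)$, so that the $\Delta K_{11}$-block scales as $d_1^2=N^{-4a/\theta}$, the $\Delta K_{00}$-block as $d_0^2=N^{4(a+1)/\theta}$, and the cross $\sum K_{01}$-block as $d_1 d_0=N^{2/\theta}$. The essential consistency check is that these three rates are precisely the scaling orders of the corresponding rescaled Cauchy kernels as read from their hard edge limits; in particular the cross rate $N^{2/\theta}$ is forced to agree, which is exactly what permits the rescaled matrix to be written entrywise as $D\,(M^{(\infty)}+o(1))\,D$ and hence to possess a single nondegenerate limit rather than collapsing. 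Then $\det(D)=N^{2k/\theta}$ cancels the global factor $N^{-2k/\theta}$, while $N!/((2N)^k(N-k)!)=N^k(1+O(1/N))/(2N)^k\to 2^{-k}$, and collecting everything leaves precisely $(-1)^{k(k-1)/2}2^{-k}\,\Pf(M^{(\infty)})$.

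The main obstacle I anticipate is not the limit–Pfaffian interchange, which is trivial at fixed $k$, but the precise determination of the block exponents, and in particular the tracking of the self-energy term $-1/(x+y)$ inside $K_{11}$ and of the prefactors $x^a e^{-x}$ through the rescaling. One must verify that, after multiplication by the hat-prefactors and setting $b=a+1$, every constituent of $\hat K_{11}$ scales at the common rate $N^{-4a/\theta}$ and every constituent of $\hat K_{01}$ at the rate $N^{2/\theta}$, so that the single diagonal conjugation by $D$ is legitimate. Once these exponents and the cross-block consistency $d_1 d_0=N^{2/\theta}$ are confirmed, the assembly of the stated formula is immediate.
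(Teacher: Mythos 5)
Your proposal is correct and takes essentially the same route as the paper, which offers no separate argument for this proposition beyond combining the preceding finite-$N$ Pfaffian formula with the hard edge asymptotics of the Cauchy kernels: entrywise convergence inside a fixed-size Pfaffian together with the diagonal-conjugation identity $\Pf(DAD)=\det(D)\,\Pf(A)$, $\det(D)=N^{2k/\theta}$ cancelling the global factor and $N!/((2N)^k(N-k)!)\to 2^{-k}$, is exactly the intended bookkeeping. Note only that your internally consistent exponents (the $\Delta K_{11}$ block decaying like $d_1^2=N^{-4a/\theta}$, so that its limit is extracted by multiplying by $N^{+4a/\theta}$, as forced by your cross-block consistency $d_1d_0=N^{2/\theta}$ and by the explicit $\tilde{G}$-formula for the limit) in fact expose a sign typo in the exponent of the prefactor $N^{-\frac{4a}{\theta}}$ in the paper's displayed definition of $\Delta K_{11}^{(\infty,a)}$.
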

\begin{remark}
The kernels $G^\theta_{\infty,a}(x)$ and $\tilde{G}^\theta_{\infty,a}(x)$ are in the Fox H-function class according to the formulas \eqref{tginf} and \eqref{ginf}.
\end{remark}
\section{Conclusion and discussion}
The Bures ensemble was originally introduced into random matrix theory as part of the theory of the Bures metric, which is a natural choice in
measuring the statistical distance between density operators defining quantum states. It has been known since the work of
Sommers and Zyczkowski \cite{SZ04} that the global density, scaled to have support on $[0,b]$ with $b=3\sqrt{3}/2$, is given by
\begin{equation}\label{sz1}
\rho(x) = {1 \over 2 \pi \sqrt{3}}
\bigg ( \Big ( {b \over x} + \sqrt{ {b^2 \over x^2} - 1} \Big )^{2/3} -
 \Big ( {b \over x} - \sqrt{ {b^2 \over x^2} - 1} \Big )^{2/3} \bigg ).
\end{equation}
 The leading asymptotic form near $x=0$ is ${\sqrt{3} \over 2 \pi} {1 \over x^{2/3}}$. 
 
 Of particular relevance to the $\theta$-deformation of the present work is the fact that the moments of the density form a Raney
 sequence (\ref{Ra}) with $(p,r) = (3/2,1/2)$.  For general parameters $(p,r)$ denote the corresponding scaled density by $\rho^{(p,r)}(x)$, which is 
 known to be supported
 on $(0,K_p)$ with $K_p = p^p ( p - 1)^{1-p}$. Moreover, introducing the Stieltjes transform
 $$
 G^{(p,r)}(z) := \int_0^{K_p} {1 \over z - x} \rho^{(p,r)}(x) \, dx =
 {1 \over z} \sum_{n=0}^\infty {1 \over z^n} R_{p,r}(n),
 $$
 it is known from \cite{MPZ13} that $w(z) := z G_{p,r}(z)$ satisfies the algebraic equation
 $$
 w^{p \over r} - z w^{1 \over r} + z = 0;
 $$
 note that in the case $(p,r) = (3/2,1/2)$ this is a cubic, in keeping with the structure of (\ref{sz1}). The cases $(p,r) = (\theta + 1, 1)$ are known to be
 realised by the Laguerre Muttalib--Borodin model \cite{FW15}. Results from \cite{FL15,FLZ15} tell us that the sequence 
 $(p,r) = (\theta/2 + 1, 1/2)$ is realised by the $\theta$-deformed Bures ensemble (\ref{E2}). Generally the Raney density (assuming $r < p$)
 has the small $x$ form \cite{FL15}
 $$
 \rho^{(p,r)}(x)  \sim {1 \over \pi} \Big ( \sin {r \pi \over p} \Big ) x^{- (p - r)/p}.
 $$
 Thus the critical state near the origin is distinct when comparing the  Laguerre Muttalib--Borodin model and
 $\theta$-deformed Bures ensemble.
 
 In keeping with this latter fact, in the present work we have studied the hard edge scaled limit of the  $\theta$-deformed Bures ensemble.
 For this purpose, we first considered a  $\theta$-deformed Cauchy two-matrix model.
 In contrast to the hard edge scaled limit of the  Laguerre Muttalib--Borodin ensemble, which gave rise to a determinantal  point process with a kernel
 given in terms of Wright Bessel functions (or Meijer $G$ functions for $\theta,  1/\theta \in \mathbb Z_+$) the
 hard edge scaled limit of the  $\theta$-deformed Bures ensemble gives rise to a Pfaffian point process with kernel given in terms of
 Fox H-functions. Integrability aspects of this kernel, for example the interpretation of the corresponding hard edge gap probability as a tau function,
 the formulation of its asymptotics in terms of a Riemann-Hilbert problem (see \cite{CGS17} in the case of  Muttalib--Borodin ensemble),
 or associated nonlinear equations (see \cite{St15,WF17,MF18} for previous work) 
 remain for later work.

\section*{Acknowledgements} 
The authors would like to thank Dr. Dang-Zheng Liu and Lun Zhang for helpful comments. This work is part of a research program supported by the Australian Research Council (ARC) through the ARC Centre of Excellence for Mathematical and Statistical frontiers (ACEMS). PJF also acknowledges partial support from ARC grant DP170102028. 

\small
\bibliographystyle{abbrv}

\end{document}